\let\OLDthebibliography\thebibliography
\renewcommand\thebibliography[1]{
  \OLDthebibliography{#1}
  \setlength{\parskip}{0pt}
  \setlength{\itemsep}{0pt plus 0.6ex}
}
\numberwithin{equation}{section}
\newtheorem{theorem}{Theorem}[section]
\newtheorem{lemma}{Lemma}[section]
\newtheorem{proposition}{Proposition}[section]
\newtheorem{remark}{Remark}[section]
\def\ex{\mathbb{E}}
\def\exp{\text{exp}}
\def\X{\mathbf{x}}
\def\Y{\mathbf{Y}}
\def\Z{\mathbf{Z}}
\def\V{\mathbf{V}}
\def\v{\mathbf{v}}
\def\bA{\mathbf{A}}
\def\bB{\mathbf{B}}
\def\G{\mathbf{G}}
\def\bg{\mathbf{g}}
\def\bz{\mathbf{z}}
\def\M{\mathcal{M}}
\def\q{\mathfrak{q}}
\def\tPhi{\tilde{\Phi}}
\def\LLambda{\boldsymbol{\Lambda}}
\def\bbeta{\boldsymbol{\beta}}
\def\boldeta{\boldsymbol{\eta}}
\def\ymmse{\text{ymmse}}
\def\mmse{\text{mmse}}
\newcommand{\Tr}{{\text{Tr}}}
\title{Information-theoretic limits \\and vector approximate message-passing \\for high-dimensional time series}
\author{Daria Tieplova, Samriddha Lahiry,  Jean Barbier
\thanks{D. Tieplova is with The Abdus Salam International Centre for Theoretical Physics, Italy and B. Verkin Institute for Low Temperature Physics and Engineering of the National Academy of Sciences of Ukraine. Email: \url{dtieplov@ictp.it}}%
\thanks{S. Lahiry is with Department of Statistics and Data Science, National University of Singapore,  Singapore. Email: \url{slahiry@nus.edu.sg}}
\thanks{ J. Barbier is with The Abdus Salam International Centre for Theoretical Physics, Italy. Email: \url{jbarbier@ictp.it}}%
}
\begin{document}

\maketitle


\begin{abstract}
    High-dimensional time series appear in many scientific setups, demanding a nuanced approach to model and analyze the underlying dependence structure. Theoretical advancements so far often rely on stringent assumptions regarding the sparsity of the underlying signal. In non-sparse regimes, analyses have primarily focused on linear regression models with the design matrix having independent rows.
In this paper, we expand the scope by investigating a high-dimensional time series model wherein the number of features grows proportionally to the number of sampling points, without assuming sparsity in the signal. Specifically, we consider the stochastic regression model and derive a single-letter formula for the normalized mutual information between observations and the signal, as well as for minimum mean-square errors. We also empirically study the vector approximate message passing VAMP algorithm and show that, despite the lack of theoretical guarantees, its performance for inference in our time series model is robust and often statistically optimal.
\end{abstract}

\begin{IEEEkeywords}
High-dimensional time series, Mutual information, Replica method, Vector Approximate Message Passing.
\end{IEEEkeywords}

\section{Introduction}

Analysis of high-dimensional time series are increasingly important with the advent of new technologies, with examples arising from macroeconomic forecasting \cite{li2014forecasting}, genomics \cite{bar2012studying}, measurement of systemic risk in financial markets \cite{basu2024high}, and fMRI studies \cite{sporns2021dynamic}. Typically, these applications involve large datasets which exhibit both cross-sectional and temporal dependence. In many cases, questions related to the analysis of these structures can be formulated as estimation of a signal in a linear regression problem, as shown in \cite{basu2015regularized}.

In the past decade, techniques from the high-dimensional statistics literature have been extended to deal with such dependent setups. However, the results so far have been entirely based on assumptions of low-dimensional structures, which often translate into the estimation of a very sparse signal in a linear regression task. On the other hand, a new paradigm has emerged in the high-dimensional asymptotics literature that are mainly motivated by techniques in statistical physics and information theory, where such stringent assumptions of sparsity are removed and one can analyze the same model when the number of features $p$ and the number of samples $N$ grow instead proportionally. In that regime, a particular problem of interest is computing the minimum mean-square error (MMSE) on the signal in a Bayesian setup, which is closely related to the computation of the mutual information between the signal and the observations. Specifically, one is interested in establishing a single-letter formula for the mutual information per observation in the large $N$ limit, derive the limiting MMSE, and construct a computationally efficient estimator able to asymptotically match the MMSE performance, at least in some regimes. We note that while single-letter formulas for the aforementioned quantities have been rigorously established for linear regression with a design consisting of independent and identically distributed (i.i.d.) Gaussian entries \cite{barbier2016mutual,barbier2018mutual,reeves_2016,barbier2019optimal}, and later extended to deal with more structured designs drawn from certain dependent random matrix ensembles \cite{barbier2018mutual,dudeja2022universality,dudeja2023universality,dudeja2024spectral,lahiry2024universality,saeed_mont}, such results do not cover dependence structures that arise in time series models. In this article, we make an important step in that direction by rigorously establishing single-letter formulas for information-theoretic quantities in a stochastic regression model, where the rows of the measurement matrix come from an AR(1) process \cite{basu2015regularized}.  

\subsection{Setting}

  To introduce our setup formally, let $(Y_\mu,\X_\mu)_{\mu=1}^N$  be $N$ observations from a time series, with the dependence structure outlined below. Let  $\X_\mu$ be a $p$~dimensional, centered, stationary Gaussian process given by
\begin{equation}\label{eq:def_x_t}
    \X_{\mu+1}=\bA_p\X_\mu+\boldsymbol{\xi}_{\mu}, \quad \mu\in\mathbb{Z},
\end{equation}
 where $\bA_p$ is a $p \times p$ deterministic matrix  and $\boldsymbol{\xi}_\mu$ is a $p$-dimensional Gaussian random vector distributed as $\mathcal{N}(0,\sigma_1^2\mathbf{I}_p)$ independently of everything else for each $\mu$.
The observed time series $Y_\mu$ has the form
\begin{equation}\label{eq_SRM}
    Y_\mu=\dfrac{1}{\sqrt{p}}\X_\mu^\intercal\bbeta_0+Z_\mu, \quad \mu=1,\ldots,N
\end{equation}
where $\bbeta_0$ is a signal vector with i.i.d. entries drawn from the prior distribution $P_0$ with bounded support and $\ex\beta_{0,i}^2=\rho$, while $Z_\mu$ is i.i.d. Gaussian noise with law $\mathcal{N}(0,\sigma_2^2)$. We can see that $Y_\mu$ does not depend separately on $\sigma_1$ or $\sigma_2$, but only on their ratio.  Without loss of generality we can set one of the noise variances ($\sigma_1$ or $\sigma_2$) to $1$. In what follows we consider $\sigma_1=1$ and denote $\sigma_2=\sigma$. 

We are interested in the high-dimensional regime where the number of observations $N$ and the dimension $p$ both tend to infinity at the same rate. That is, we assume that $p=p(N)$ and for $c_N:=N/p$ we have
$$\lim_{N\rightarrow \infty}c_N =\lim_{N\to \infty}\frac Np=c \in (0,\infty).$$ 
 Let $\boldsymbol{\Phi}_N=(\X_1^\intercal,\ldots, \X_N^\intercal)^\intercal$ be a $N\times p$ matrix whose rows are the vectors $(\X_\mu)_{\mu=1}^N$. Further, let $\Y_N=(Y_1,\ldots, Y_N)^\intercal$ and $\Z_N=(Z_1,\ldots,Z_N)^\intercal$ be two vectors of length $N$ containing the individual observations and noises, respectively. Then our model can be written in a vector form:
\begin{align}\label{eq:mod_vec}
    \Y_N=\dfrac{1}{\sqrt{p}}\boldsymbol{\Phi}_N\bbeta_0+\Z_N.
\end{align}

The form of $\bA_p$ plays a major role in the type of resulting linear regression. An important example, addressed in, e.g., \cite{Takeda_2006,tulino2013support,barbier2018mutual,rangan2019vector} (with the final work proposing a near-optimal algorithm for the estimation of signal $\boldsymbol{\beta}_0$), is when $\bA_p$ is a symmetric random matrix with uniformly distributed eigenvectors. Then, one has the eigendecomposition $\bA_p=\boldsymbol{U}\boldsymbol{D}\boldsymbol{U}^{-1}$ with $\boldsymbol{U}$ a Haar distributed random matrix and we may rewrite \eqref{eq:def_x_t} as $\boldsymbol{U}^{-1}\X_{\mu+1}=\boldsymbol{D}\boldsymbol{U}^{-1}\X_\mu+\boldsymbol{\xi}_{\mu}$, noting that $\boldsymbol{U}^{-1}\boldsymbol{\xi}_{\mu}=\boldsymbol{\xi}_{\mu}$ in law due to rotational invariance of the Gaussian noise. Thus the time series $\bz_\mu=\boldsymbol{U}^{-1}\X_\mu$ satisfies $\bz_{\mu+1}=\boldsymbol{D}\bz_\mu+\boldsymbol{\xi}_{\mu}$, which upon rewriting $\boldsymbol{\Phi}_N=\tilde{\boldsymbol{\Phi}}_N\boldsymbol{U}$ with $\tilde{\boldsymbol{\Phi}}_N=(\bz_1^\intercal,\ldots,\bz_N^\intercal)^\intercal$ brings us to the linear regression model with a right-rotationally invariant design matrix. 

Our present interest lies in the extreme opposite setting where the previously Haar matrix $\boldsymbol{U}$ is now proportional to the identity so that $\bA_p$ becomes diagonal, with a rather general choice of eigenvalues. Such a case \emph{cannot} be treated with the aforementioned line of work and requires new techniques, both from the analytical and algorithmic perspective. We thus denote
$\bA_p={\rm diag}(\lambda_1,\lambda_2,\ldots, \lambda_p)$  with $|\lambda_i|<1$ for all $i$. Under this assumption, equation (\ref{eq:def_x_t}) breaks down into $p$ independent equations $$x_{\mu+1,i}=\lambda_i x_{\mu,i}+\xi_{\mu,i},\quad i=1,\ldots,p\,,$$ 
which allows us to easily calculate the variance of $x_{\mu,i}$ and the correlations between $x_{\mu,i}$ and $x_{\mu+s,i}$:
\begin{align}
    \ex(x_{\mu+1,i})^2=\lambda_i^2\ex(x_{\mu,i})^2+1.
\end{align}
Since $\X_\mu$ is a stationary process, its variance does not depend on $\mu$ and the last equality gives us:
\begin{align}\label{eq:var_x_t^i}
    \ex(x_{\mu,i})^2=(1-\lambda_i^2)^{-1}.
\end{align}
From this we immediately obtain 
\begin{align}\label{eq:cor_x_t^i}
    \ex(x_{\mu,i}x_{\mu+s,i})=\ex(x_{\mu,i}(\lambda_i^sx_{\mu,i}+\sum_{j=0}^{s-1}\lambda_i^j\xi_{\mu-j+s-1,i}))=(1-\lambda^2_i)^{-1}\lambda_i^s.
\end{align}

Independence of coordinates also implies that the columns of $\boldsymbol{\Phi}_N$ are independent Gaussian vectors and, due to (\ref{eq:var_x_t^i}) and (\ref{eq:cor_x_t^i}), 
we can easily calculate the covariance matrix $\LLambda_{i,N}$ of the $i$-th column $(x_{1,i},x_{2,i},\ldots,x_{N,i})^\intercal$:
\begin{align}\label{eq:def_Lambda}
    \LLambda_{i,N}:=\dfrac{1}{1-\lambda_i^2}\begin{pmatrix}
    1&\lambda_i &\ldots&\lambda_i^{N-1}\\
    \lambda_i&1&\ldots&\lambda_i^{N-2}\\
    \vdots&\vdots&\ddots&\vdots\\
    \lambda_i^{N-1}&\lambda_i^{N-2}&\ldots&1
\end{pmatrix}.
\end{align}
In what follows, we will often drop indices $N,p$ when it is not crucial. 
With  this, we can present the design matrix as $\boldsymbol{\Phi}_N=(\LLambda_{1,N}^{1/2}\bg_{1},\ldots,\LLambda_{p,N}^{1/2}\bg_{p})$, where the $\bg_i$ are i.i.d. standard Gaussian vectors. In this form it is easy to see the connection to the previously well-studied models.

\textit{Case 1: $\lambda_i=0$ for $i=1,\ldots,p.$} 
In this case all $\LLambda_{i,N}$ are the identity and our model collapses to that of standard linear regression, where the design matrix has i.i.d. standard Gaussian elements. This model was studied in the CDMA \cite{tanaka2002statistical}
 and compressed sensing contexts \cite{DMM09,krzakala2012statistical,barbier2016mutual,barbier2020mutual,reeves_2016}. 

\textit{Case 2: $\lambda_i=\lambda\neq0$ for $i=1,\ldots,p.$} 
Again, all columns of $\boldsymbol{\Phi}_N$ will share the same covariance matrix, denoted by  $\LLambda$. This allows us to rewrite $\boldsymbol{\Phi}_N=\LLambda^{1/2}\G_N$, where $\G_N$ is an $N\times p$ matrix with i.i.d. standard Gaussian elements. Since $\G_N$ is rotationally invariant, this results in the design matrix $\boldsymbol{\Phi}_N$  belonging to the class of right-rotationally invariant matrices. Single-letter formulas for the mutual information of such models was obtained in \cite{Takeda_2006,tulino2013support,barbier2018mutual}.

In this work, we aim for an arbitrary diagonal matrix $\bA_p$ with some natural assumptions. However, in order to address the general case, we  first need to consider matrices 
$\bA_p$ for which the number of eigenvalues remains fixed to $k$ as $p$ goes to infinity. This assumption introduces a "block" structure that can be later "smoothed out" by taking afterwards $k\rightarrow\infty$. 

\textit{Case 3:}\label{Assum:A} We assume that diagonal matrix $\bA_p$ has $k$ different eigenvalues $\lambda_1,\ldots,\lambda_k$ ($k$ fixed) associated to the respective index sets $I_1,\ldots,I_k$. More precisely, there is a partition of  $\{1,\ldots,N\}=\bigcup_{i=1}^k I_i$ such that $A_{jj}=\lambda_i$ for $j\in I_i$. Importantly, we require that the cardinality of each such set grows proportionally with $p$, i.e.,  we have $|I_i|/p\rightarrow l_i>0$ for each $i=1,\ldots,k$.

Without loss of generality, we further assume that  $I_1$ contains the first $|I_1|$ integers, $I_2$ contains the next $|I_2|$ consecutive integers and so on, since it affects only the permutation of the columns of $\boldsymbol{\Phi}_N$. Then,
 $\boldsymbol{\Phi}_N$ can be represented as a block matrix with $k$ blocks  of dimensions $N\times |I_i|$, i.e., $(\LLambda_{1,N}^{1/2}\mathbf{G}_N^{(1)},\ldots,\LLambda_{k,N}^{1/2}\mathbf{G}_N^{(k)})$, where for every $i=1,\ldots,k$,  $\mathbf{G}_N^{(i)}$ is an $N\times |I_i|$ independent matrix with i.i.d. standard Gaussian elements. With this, (\ref{eq:mod_vec}) can be rewritten as
\begin{align}
    \Y_N=\frac{1}{\sqrt{p}}(\LLambda_{1,N}^{1/2}\mathbf{G}_N^{(1)},\ldots,\LLambda_{k,N}^{1/2}\mathbf{G}_N^{(k)})\bbeta_0+\Z_N \label{diagonal_lambda_model}.
\end{align}
The block structure of the design matrix naturally propagates to the signal. In what follows we define   $\bbeta_0^{(i)} := (\beta_{0,j})_{j\in I_i} \in \mathbb{R}^{|I_i|}$ as the $i$-th block of $\bbeta_0$, and similarly for $\bbeta^{(i)}$.
We emphasize that in this setting the design matrix is not right-rotationally invariant, contrary to the Cases 1 and 2 described above. Instead, it exhibits a richer "block-wise" right-rotational invariance. Consequently, for the Cases 1 and 2 the MMSE can be obtained by the vector approximate message passing (VAMP) algorithm \cite{rangan2019vector} (when no statistical-to-computational gap is present) which properly exploits the right rotationally invariant structure; see also the related concurrent orthogonal AMP algorithm (OAMP) \cite{ma2016orthogonal}.

In the "block" model, however, the optimality of VAMP (or OAMP) is not guaranteed because of the absence of right rotational invariance. We will show empirically that in spite of the lack of theoretical guarantees for VAMP in this case, it nevertheless remains very robust and can consistently yield Bayes-optimal performance.


\subsection{Related literature}
We now discuss known results on high-dimensional time series. In particular, we compare ours to those from the closely related information theory literature, with a special focus on asymptotic formulas for mutual informations and MMSEs within the realm of (random) regression tasks. As we are going to also explore the algorithmic implications of our work using VAMP, we recall its original applications in scenarios involving right rotationally invariant ensembles of covariates matrices.

Analysis of high-dimensional time series is pervasive in the literature and a 
 non-exhaustive account of the data generating models and estimation methodologies can be found in \cite{banbura2010large,nardi2011autoregressive,chen2013covariance,basu2015regularized,medeiros2016,wu2016performance,davis2016sparse,wong2020lasso}; see also the survey \cite{basu2021survey}. A basic prototypical model of high-dimensional time series is the stochastic regression model (see \eqref{eq:def_x_t} and \eqref{eq_SRM}) which has been analyzed in \cite{basu2015regularized} from a penalized regression perspective. In \cite{basu2015regularized}, the authors consider this model with a sparse signal vector $\beta$ and analyze the estimation risk of the Lasso estimator. We will instead analyze the very same model but from a purely Bayesian perspective, in the regime where the number of features grows proportionally to the number of observations.  

On the other hand the standard linear regression model, i.e.,  \eqref{eq_SRM} with $(Y_\mu,\X_\mu)_{\mu=1}^N$ conditionally i.i.d. observations, has been thoroughly analyzed, as well as its non-linear generalisation \cite{barbier2019optimal}. Originally due to the pioneering work by Tanaka on CDMA \cite{tanaka2002statistical} (where each element
of the design is taken i.i.d. from $\pm1$), the single-letter formula for the
normalized mutual information between measurements and signal has been generalized to many regression tasks using statistical physics, in particular the replica method \cite{mezard2009information} as, e.g., in \cite{krzakala2012statistical}, and later on rigorously proved using independent methods \cite{barbier2016mutual,barbier2019adaptive,barbier2019optimal,reeves_2016}. Since then, there has been a considerable amount of work in the i.i.d. design set-up \cite{DMM09,bayati_mont,LASSO_Gauss, rangan2011generalized,lnk_flra_survey,donoho2016high,Sur_2,maleki20bridge,elkaroui_pnas_1,elkaroui_pnas_2,elkaroui_ptrf,Sur_1,Sur_2,Sur_3,Sur_4,stojnic_1,stojnic_2,stojnic_3,CT_1,CT_3,CT_5}. In the dependent (or structured) design setup less is known. Yet, a single-letter formula for a large class of correlated matrices forming a subclass of the right-rotationally invariant ensemble has been proved in \cite{barbier2018mutual} and later extended in \cite{li2023random}.  

From the algorithmic perspective, for the measurement matrices with i.i.d. Gaussian entries, the 
approximate message-passing (AMP) algorithm \cite{DMM09} has been proved to potentially be Bayes-optimal. Indeed, AMP achieves the performance of the MMSE estimator for a large set of
parameters \cite{barbier2020mutual}. 
For rotationally invariant matrices, different but related
approaches were proposed \cite{cakmak2014s,minka2013expectation,fletcher2016expectation,ma2016orthogonal,rangan2019vector}. In the present paper we use the standard implementation of VAMP \cite{rangan2019vector}. We note that for Cases 1 and 2 our model is right rotationally invariant and VAMP should thus be optimal (again, if not stuck by a computational gap). However, in the more general \textit{Case 3}, the assumption of right rotational invariance breaks down and VAMP is not expected to produce the optimal MSE \emph{a-priori}. However, in Section \ref{sec:sim} we will observe that VAMP works surprisingly well for certain choices of $\lambda_i$s.


\subsection{Notations and outline of the paper}
 For all vectors and matrices we will be using bold notations $\Y_N,\boldsymbol{\Phi},$ etc. and their elements will be $Y_{N,i}, \Phi_{ij},$ etc. $\mathbf{I}_n$ represents the identity matrix of size $n\times n$. We use the standard notations $O(.)$ and $o(.)$ to denote ``asymptotically bounded'' and ``asymptotically vanishing'' respectively, i.e., we say $a_N=O(\gamma_N)$ if $\limsup|a_N/\gamma_N|\leq C$ and $a_N=o(\gamma_N)$ if $\limsup|a_N/\gamma_N|=0$. 
 Expectation with respect to a group of random variables (conditioned on the rest of the random variables) will be denoted as $\ex_X[.]$  where $X$ would denote the relevant collection of random variables. On the other hand $\ex[.]$ and $\langle .\rangle$ will be reserved for unconditional expectation and expectation with respect to the posterior measure respectively. Norm $\|.\|$ is the $\ell_2$ norm of a vector, unless otherwise mentioned. For matrices the $\|\bA\|$ will always denote the largest singular value of $\bA$ (operator norm) and hence the same notation $\|.\|$ will be reserved for the $\ell_2$ norm of a vector and the operator norm of a matrix, the difference being understood from the context. Throughout the notation $C$ will denote a generic constant unless otherwise specified, that may change from place to place. The Kronecker delta function will be denoted as  
$\delta_{\mu,\nu}=1_{\mu=\nu}$. For a Gaussian standard random variable $z$ we denote $Dz$ the integration over its measure.

In Section~\ref{main_res}, we describe the main results of our paper. In Section \ref{sec:sim}, we provide simulation results showcasing the effectiveness of VAMP for recovering the signal $\bbeta_0$ and compare the mean-square error reached by the VAMP estimator to our theoretical predictions for the MMSE. The rigorous proof of the single-letter formula for the mutual information by the adaptive interpolation method \cite{barbier2019adaptive,barbier2019adaptive_2} is provided in Section \ref{int_prf}, while the replica calculations which first allowed us to form a conjecture are presented in Appendix \ref{rep_cal} for completeness. In Section~\ref{sec:mmse} we prove the relation between MMSE and measurement MMSE, and, finally, in Section~\ref{sec:k_infty} we generalize the results for an arbitrary diagonal matrix $\bA_p$.  Appendices~\ref{apx:KMS}-\ref{apx:Nish} contain some known results concerning KMS matrices and Nishimori identity, while in Appendices~\ref{apx:concen_z}-\ref{apx:lemmas}  we prove some auxiliary results used in Section~\ref{sec:proofs}.

\section{Main results\label{main_res}}

In this section we state our main theorem on the asymptotic expression for the normalized mutual information. Consider the stochastic regression model given by equations \eqref{eq_SRM} and \eqref{eq:def_x_t}. We assume that we are in the Bayesian optimal setting, meaning that the prior $P_0$ is known together with the noise distribution as well as all hyperparameters of the problem. This means that we can write the true posterior distribution associated to (\ref{eq:mod_vec}):
\begin{align}
   dP(\bbeta|\Y_N):=\mathbb{P}(\bbeta_0\in (\bbeta,\bbeta+d\bbeta)|\Y_N)=\frac{P_0(\bbeta)\,d\bbeta}{\mathcal{Z}(\Y_N)} \exp\Big\{-\frac{1}{2\sigma^2}\|\Y_N-p^{-1/2}\boldsymbol{\Phi}_N\bbeta\|^2\Big\},\label{posterior_0}
\end{align}
where $\mathcal{Z}(\Y_N)$ is the partition function (i.e., normalisation constant) which reads, when expressing the observations in terms of the independent variables, as
\begin{align}
    \mathcal{Z}(\Y_N):=\int P_0(\bbeta) \,\exp\Big\{-\frac{1}{2\sigma^2}\|p^{-1/2}\boldsymbol{\Phi}_N(\bbeta_0-\bbeta)+\Z_N\|^2\Big\}\,d\bbeta,
\end{align}
and recall that $\bbeta_0$ is the ground truth. We interpret the posterior as a Boltzmann-Gibbs measure as in statistical mechanics and thus use the standard Gibbs bracket notation  $\langle\cdot\rangle$ to denote the expectation w.r.t. the posterior $dP(\bbeta|\Y)$ (which thus remains a function of the data $\Y_N$ even if not apparent from notation):
\begin{align}
 \langle g(\bbeta)\rangle :=\mathbb{E}[ g(\bbeta) | \Y_N]  = \int dP(\bbeta|\Y_N) g(\bbeta).
\end{align}

\subsection{Mutual information}

One of the key objects of interest is the so-called \emph{free energy} $\bar{f}_p$, which is simply the standard Shannon entropy of the data distribution $H(\Y_N)$ given the design divided by the number of samples:
\begin{equation}
   \bar{f}_p=-\frac{1}{p}\ex_{\Y_N}\ln\mathcal{Z}(\Y_N)=\frac{1}{p}H(\Y_N|\boldsymbol{\Phi}_N) -\frac{N}{2}\log 2\pi\sigma\label{free_energy_def}.
\end{equation}
It is connected to the mutual information through a simple relation
\begin{align}\label{eq:mut_inf}
    i_p:=\frac{1}{p}I(\bbeta_0;\Y_N|\boldsymbol{\Phi}_N)=\frac{1}{p}H(\Y_N|\boldsymbol{\Phi}_N)-\frac{1}{p}H(\Y_N| \bbeta_0,\boldsymbol{\Phi}_N)=\bar{f}_p-\frac{c}{2}+o(1).
\end{align}

The aim of this article is to obtain a tractable low-dimensional variational formula for the normalized mutual information $i_p$ between the observations $\Y_N$ and the signal $\bbeta$ that is asymptotically correct as $N,p\rightarrow \infty$. This will be expressed thanks to a \emph{replica symmetric potential} (this name comes from the derivation of such results based on the replica method from statistical physics \cite{mezard2009information}, see the Appendix) which in the present context reads
\begin{align}
    i_{\rm RS}(\mathbf{r}_1,\mathbf{r}_2):=\frac{c}{2\pi}\int_{0}^{\pi}\ln \Big(\frac1{\sigma^2 }\sum_{i=1}^kl_i\delta_i(\theta)r_{2,i}+1\Big)d\theta
   -\frac{1}{2}\sum_{i=1}^kl_ir_{1,i}r_{2,i}
   +\sum_{i=1}^kl_iI(\beta;\sqrt{r_{1,i}}\beta+Z), \label{replica_potential}
\end{align}
 where $\mathbf{r}_1=(r_{1,1},\ldots,r_{1,k})$ and $\mathbf{r}_2=(r_{2,1},\ldots,r_{2,k})$ are two $k$-dimensional vectors of parameters. 
  Here $\delta_i(\theta)$ is the generating function (see Appendix \ref{apx:KMS}) of matrix $\LLambda_{i,N}$  (\ref{eq:def_Lambda}) and given by
\begin{align}
        \delta_i(\theta):=\dfrac{1}{1-2\lambda_i\cos \theta+\lambda_i^2},
    \end{align}
and $I(\beta;\sqrt{r_{1,i}}\beta+Z)$ is the mutual information of the scalar channel $Y=\sqrt{r_{1,i}}\beta+Z$ with $Z\sim \mathcal{N}(0,1)$, $\beta\sim P_0$. We can now state our main theorem:
 \begin{theorem}[Replica formula for the stochastic regression model]{\label{main_thm}}
 Assume that the signal $\bbeta$ has i.i.d. entries with prior  $P_0$ with compact support and matrix $\bA_p$ is diagonal with a number of different eigenvalues $(\lambda_1,\ldots,\lambda_k)$ independent of $N,p$ (see \textbf{Case 3} or \eqref{diagonal_lambda_model}). Then we have 
  $$\lim_{p\rightarrow \infty}i_p= \inf_{\mathbf{r}_1\in[0,+\infty)^k}\sup_{\mathbf{r}_2\in[0,\rho]^k} i_{\rm RS}(\mathbf{r}_1,\mathbf{r}_2)\label{info_limit}.$$
\end{theorem}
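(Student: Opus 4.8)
The plan is to establish the formula via the adaptive interpolation method, following the scheme of Barbier--Macris \cite{barbier2019adaptive,barbier2019adaptive_2}, adapted to the block-correlated Gaussian design \eqref{diagonal_lambda_model}. First I would set up an interpolating model indexed by $t\in[0,1]$ that continuously deforms the original channel \eqref{eq:mod_vec} into a decoupled scalar denoising problem, together with a Gaussian side channel carrying the overlap information. Because the design is block-wise rotationally invariant with $k$ blocks, the natural interpolation uses $k$ pairs of time-dependent parameters $(R_{1,i}(t),R_{2,i}(t))$ rather than a single scalar pair; the $i$-th block of $\bbeta_0$ is observed through a scalar channel of signal-to-noise ratio $R_{1,i}(t)$, while the remaining signal is propagated through the matrix channel with a $t$-rescaled effective noise governed by the $R_{2,i}(t)$. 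The key structural input is that, conditionally on the overlaps, each column block $\LLambda_{i,N}^{1/2}\bg_i$ behaves like a Gaussian vector whose covariance contributes through the spectral generating function $\delta_i(\theta)$ — this is where the $\frac{c}{2\pi}\int_0^\pi \ln(\cdots)\,d\theta$ term in \eqref{replica_potential} originates, via the asymptotic log-determinant of a sum of KMS-type matrices (Appendix \ref{apx:KMS}).

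The second step is the fundamental sum rule: differentiating the interpolating free energy $f_p(t)$ in $t$ and using the Nishimori identities (Appendix \ref{apx:Nish}) to simplify the resulting overlap expectations, I would obtain $f_p(0)-f_p(1)=\int_0^1 (\text{remainder})\,dt$, where $f_p(1)$ reproduces the decoupled terms $-\frac12\sum_i l_i r_{1,i}r_{2,i}+\sum_i l_i I(\beta;\sqrt{r_{1,i}}\beta+Z)$ plus the spectral integral, and the remainder is a sum over blocks of terms of the form $\tfrac{l_i}{2}(\text{something})\,\ex\langle (Q_i-\text{target})(\cdots)\rangle$. The matching/cancellation is engineered by choosing the interpolation functions $R_{1,i}(t),R_{2,i}(t)$ to solve an appropriate first-order ODE system (the "adaptive" choice), so that the remainder is controlled by the concentration of the per-block overlaps $Q_i=\frac1{|I_i|}\bbeta^{(i)}\cdot\bbeta_0^{(i)}$ around their expectations.

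The third step is proving that overlap concentration, i.e.\ $\ex\langle (Q_i-\ex\langle Q_i\rangle)^2\rangle = o(1)$ for each block. This uses the standard two-part argument: thermal fluctuations are bounded by a Nishimori-identity/Gaussian-integration-by-parts estimate (needing the compact support of $P_0$ and the boundedness of relevant operator norms, which here follows since $\|\LLambda_{i,N}\|\le (1-|\lambda_i|)^{-2}$ is bounded uniformly in $N$), and disorder fluctuations are handled by a small perturbation of the parameters combined with convexity/monotonicity of the free energy and a bounded-difference (martingale/McDiarmid) argument over the Gaussian entries of the $\bg_i$ and $\Z_N$; the concentration estimates for $\Z_N$ needed here are exactly those deferred to Appendix \ref{apx:concen_z}. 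Combining, $\lim_p i_p = \sup\text{-}\inf$ of $i_{\rm RS}$, and matching the direction of the extremization comes from the two one-sided bounds the interpolation yields (an upper bound valid for all $\mathbf{r}_1$ with optimized $\mathbf{r}_2$, a matching lower bound from the adaptive path), giving the stated $\inf_{\mathbf{r}_1}\sup_{\mathbf{r}_2}$. I expect the main obstacle to be the remainder control in step two: because the design is only block-wise rotationally invariant, the overlap that must concentrate is genuinely $k$-dimensional and the effective "channel" seen by each block depends on the full covariance structure through $\delta_i(\theta)$, so one must show that the relevant quadratic form $\frac1N\Tr(\LLambda_{i,N}(\sigma^2\mathbf{I}+\textstyle\sum_j R_{2,j}\LLambda_{j,N})^{-1})$ converges to its spectral-integral limit uniformly along the interpolation path — this requires the KMS matrix estimates of Appendix \ref{apx:KMS} to be quantitative enough, and care that the perturbation used for disorder concentration does not destroy this convergence.
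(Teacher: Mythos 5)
Your plan is correct and follows essentially the same route as the paper: an adaptive interpolation with $2k$ time-dependent parameters $(R_{1,i}(t),R_{2,i}(t))_{i\le k}$, a block decoupling of the signal into $k$ scalar denoising channels plus a Gaussian side channel with KMS covariance, Nishimori identities and Gaussian integration by parts to produce the sum rule with a remainder proportional to $r_{2,i}(t)-(\rho-Q_i)$, overlap concentration obtained by averaging over a small hypercube of initial conditions, and matching upper/lower bounds that yield the $\inf_{\mathbf{r}_1}\sup_{\mathbf{r}_2}$. Two small corrections to the details: the paper's free-energy concentration uses Gaussian Poincar\'e for the Gaussian disorder $(\G,\V,\Z,\tilde\Z)$ and a bounded-difference argument only for the compactly supported signal $\bbeta_0$ (not the other way around), and the KMS spectral asymptotics enter to evaluate the endpoint term $I_2(\mathbf{R}_2(1))$ at $t=1$ rather than to control the remainder uniformly along the interpolation path (the function $\mathcal{A}_i$ driving the ODE is defined directly as the spectral integral, so no uniform convergence along the path is needed).
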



\begin{remark}\label{rem:equiv_gamma}
    The final replica formula for the limiting normalized mutual information can also be written as
\begin{align}
    \lim_{p\rightarrow \infty}i_p=\inf_{\Gamma}i_{\rm RS}(\mathbf{r}_1,\mathbf{r}_2),
\end{align}
whenever the infimum of $\Gamma$, the set of critical points of the potential, is unique: 
\begin{align}
    \Gamma=\Big\{&(\mathbf{r}_1, \mathbf{r}_2)\in\mathbb{R}_+^k\times[0,\rho]^k \ |\ \forall\, i=1,\ldots,k: \nonumber \\
    &\qquad r_{2,i}=\text{mmse}(\beta|\sqrt{r_{1,i}}\beta+Z),\quad r_{1,i}=\frac{c}{\pi}\int_{0}^\pi\frac{\delta_i(\theta)d\theta}{\sum_{j\le k}l_jr_{2,j}\delta_j(\theta)+\sigma^2}\Big\}\label{fixed_point}.
\end{align}
Above, $\text{mmse}(\beta| f(\beta))$ is the MMSE on $\beta$ given data $f(\beta)$. The equivalence of the two formulations has been shown in \cite{barbier2019optimal} and the equivalence in our current setup follows along similar lines.
\end{remark}

\paragraph{General $\mathbf{A}_p$}
 Theorem~\ref{main_thm} can be generalized to a broader class of matrices $\mathbf{A}_p$. Let us consider the sequence of diagonal matrices $\{\mathbf{A}_p\}_{p=1}^{\infty}$  with eigenvalues $(\lambda_{1,p},\ldots,\lambda_{p,p})$ and their respective empirical eigenvalue measure $\eta_p=\frac{1}{p}\sum_{i=1}^p\delta_{\lambda_{i,p}}$. We assume the following natural hypotheses:

\begin{description}
    \item[h1)\label{Assum:supp}] Independently on $p$ there exist an interval $[a,b]\in[0,1)$ such that $\lambda_{i,p}\in[a,b]$ for all $i$ and $p$. 
    \item[h2)\label{Assum:eta}]  As $p\rightarrow\infty$ the empirical measures  $\eta_p$ converge in Wasserstein-2 distance to some measure $\eta$ defined on $[a,b]$.
\end{description}
Let us also define the sets of Lipschitz functions $\mathcal{C}_1=\{f:[a,b]\mapsto[0,C]\}$ (where $C$ is some constant that depends only on $a,b$) and $ \mathcal{C}_2=\{f:[a,b]\mapsto[0,\rho]\}$. Then, Theorem~\ref{main_thm} can be generalized as follows.
\begin{theorem}[Replica formula with general eigenvalue distribution]\label{th:gen_A}
Assume that the signal $\bbeta$ has i.i.d. entries with prior $P_0$ with compact support and matrix $\bA_p$ satisfies \textbf{h1}, \textbf{h2} above. Then we have 
    \begin{align}
        \lim_{p\rightarrow \infty}i_p= \inf_{r_1\in \mathcal{C}_1}\sup_{r_2\in \mathcal{C}_2} i_{\rm RS}[r_1,r_2],\label{mutual_info_generalAp}
    \end{align}
    where, letting $\delta(\theta,\lambda):=(1-2\lambda\cos \theta+\lambda^2)^{-1}$, the corresponding replica symmetric potential is
\begin{multline}
    i_{\rm RS}[r_1,r_2]:=\frac{c}{2\pi}\int_0^\pi\ln\Big(\frac1 {\sigma^2}\int_a^b\delta(\theta,\lambda)r_2(\lambda)d\eta(\lambda)+1\Big)d\theta-\frac{1}{2}\int_a^br_1(\lambda)r_2(\lambda)d\eta(\lambda)\\    +\int_a^bI(\beta;\sqrt{r_1(\lambda)}\beta+Z)d\eta(\lambda).\label{irs[r]}
\end{multline}
\end{theorem}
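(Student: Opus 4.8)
We prove Theorem~\ref{th:gen_A} by reducing it to Theorem~\ref{main_thm} through a two-step approximation. Writing $i_p(\bA_p)$ to exhibit the dependence of the normalized mutual information on the eigenvalue matrix, the idea is: (a) replace the general diagonal $\bA_p$ by a ``block'' matrix $\bA_p^{(k)}$ with only $k$ distinct eigenvalues, so that Theorem~\ref{main_thm} applies to $\bA_p^{(k)}$; (b) send $k\to\infty$, showing on the one hand that $\sup_p|i_p(\bA_p)-i_p(\bA_p^{(k)})|\to0$, and on the other that the finite-dimensional minimax produced by Theorem~\ref{main_thm} converges to the functional minimax \eqref{mutual_info_generalAp}. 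Both approximation errors must be controlled uniformly---in $p$ on the information side, and over the admissible parameters on the variational side---after which a routine $\varepsilon$-argument (equivalently, a diagonal extraction) closes the proof and, as a by-product, shows that $\lim_{p\to\infty}i_p(\bA_p)$ exists.

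For the block construction, fix $k$, write $[a,b]=\bigcup_{i=1}^k J_i$ as a union of subintervals of length at most $(b-a)/k$ whose endpoints are continuity points of the distribution function of $\eta$ (all but countably many partitions qualify), choose a representative $\mu_i\in J_i$, and let $\bA_p^{(k)}$ be obtained from $\bA_p$ by rounding each eigenvalue $\lambda_{j,p}$ to the representative $\mu_i$ of the bin $J_i$ containing it. Set $I_i^{(k)}=\{j:\lambda_{j,p}\in J_i\}$, so that $|I_i^{(k)}|/p=\eta_p(J_i)\to\eta(J_i)=:l_i^{(k)}$ by \textbf{h2}. After discarding (or merging into a neighbour) the bins with $l_i^{(k)}=0$, the matrix $\bA_p^{(k)}$ is of the type covered by \textbf{Case~3}, and Theorem~\ref{main_thm} gives $\lim_{p\to\infty}i_p(\bA_p^{(k)})=\inf_{\mathbf{r}_1}\sup_{\mathbf{r}_2}i_{\rm RS}^{(k)}(\mathbf{r}_1,\mathbf{r}_2)$, where $i_{\rm RS}^{(k)}$ is precisely the potential \eqref{irs[r]} evaluated at the discrete measure $\eta^{(k)}=\sum_i l_i^{(k)}\delta_{\mu_i}$ and over step functions $r_a(\lambda)=\sum_i r_{a,i}\mathbf{1}_{J_i}(\lambda)$.

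It remains to control the two discrepancies. (i) On the information side, write the $j$-th column of $\boldsymbol{\Phi}_N$ as $\LLambda(\lambda_{j,p})^{1/2}\bg_j$ and interpolate the eigenvalues linearly, $\lambda_{j,p}(t)=(1-t)\mu_{i(j)}+t\,\lambda_{j,p}$, then differentiate the free energy $\bar f_p(t)$ along this path. Since $\theta\mapsto\delta(\theta,\lambda)$ and its $\lambda$-derivative are bounded uniformly over $[a,b]\subset[0,1)$, a Duhamel estimate yields $\|\partial_\lambda\LLambda(\lambda)^{1/2}\|\le C$ on $[a,b]$; Gaussian integration by parts in the $\bg_j$ and in $\Z_N$, combined with the boundedness of the posterior overlaps in the Bayes-optimal setting (valid since $P_0$ has compact support), then gives $|\tfrac{d}{dt}\bar f_p(t)|\le\tfrac{C}{p}\sum_j|\lambda_{j,p}-\mu_{i(j)}|\le C(b-a)/k$, and integrating in $t$ and using \eqref{eq:mut_inf} yields $\sup_p|i_p(\bA_p)-i_p(\bA_p^{(k)})|\le C/k$. (ii) On the variational side, the three terms of \eqref{irs[r]} are jointly continuous in $(r_1,r_2)$ and in $\eta$ (the argument of the logarithm is bounded below by $1$, the scalar mutual information $r\mapsto I(\beta;\sqrt r\,\beta+Z)$ is Lipschitz, and all integrands are bounded on $\mathcal{C}_1\times\mathcal{C}_2$), and $\eta^{(k)}\to\eta$ in Wasserstein-2 distance at rate $(b-a)/k$. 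Moreover step functions are dense in $\mathcal{C}_1,\mathcal{C}_2$ for the uniform norm, and any critical point of the potential is automatically Lipschitz: the functional analogue of \eqref{fixed_point} forces $r_1(\lambda)=\frac{c}{\pi}\int_0^\pi\delta(\theta,\lambda)\big(\int_a^b\delta(\theta,\mu)r_2(\mu)\,d\eta(\mu)+\sigma^2\big)^{-1}d\theta$ and $r_2(\lambda)=\text{mmse}(\beta\,|\,\sqrt{r_1(\lambda)}\beta+Z)$, both Lipschitz on $[a,b]$ because the denominator is $\ge\sigma^2>0$, so restricting the minimax to $\mathcal{C}_1\times\mathcal{C}_2$ loses nothing. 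A standard minimax comparison then gives $\inf\sup i_{\rm RS}^{(k)}\to\inf_{r_1\in\mathcal{C}_1}\sup_{r_2\in\mathcal{C}_2}i_{\rm RS}[r_1,r_2]$.

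The main obstacle is estimate (i): the normalized free energy must be shown to be Lipschitz in the eigenvalues \emph{uniformly in $p$}, even though each $\lambda_{j,p}$ enters through the whole $N\times N$ covariance matrix $\LLambda(\lambda_{j,p})$. This hinges on the Duhamel/operator-norm control of $\partial_\lambda\LLambda(\lambda)^{1/2}$---clean here only because $[a,b]\subset[0,1)$ keeps $\delta(\theta,\lambda)$ bounded away from $0$ and $\infty$---and on the Bayes-optimal concentration of overlaps invoked when integrating by parts; these are essentially the ingredients already present in the adaptive-interpolation proof of Theorem~\ref{main_thm}, here deployed along a path that moves the eigenvalues rather than opening a scalar side channel. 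A subsidiary nuisance is the treatment of measures $\eta$ with atoms or gaps in the block construction, handled by the continuity-point choice of the bin endpoints and by absorbing the vanishing-mass bins into neighbouring ones.
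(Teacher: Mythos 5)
Your overall strategy mirrors the paper's: discretize $\bA_p$ into a $k$-eigenvalue block matrix, apply Theorem~\ref{main_thm}, control the error on both the information side and the variational side, then send $k\to\infty$. Where you genuinely diverge is in step (i), the uniform-in-$p$ bound on $|i_p(\bA_p)-i_p(\bA_p^{(k)})|$. You propose a full interpolation-along-eigenvalues argument (linear eigenvalue path, differentiate $\bar f_p(t)$, Gaussian integration by parts in $\bg_j$ and $\Z_N$, a Duhamel bound on $\partial_\lambda\LLambda(\lambda)^{1/2}$, and overlap control), which is correct in spirit but considerably heavier than what the paper does. The paper instead compares the two Hamiltonians $H$ and $\bar H$ directly through the Gibbs inequality, $\frac{1}{p}\ex\langle \bar H - H\rangle_{\bar H}\le \bar i_p(k)-i_p\le \frac{1}{p}\ex\langle \bar H - H\rangle_{H}$, and then bounds $p^{-1}|\bar H - H|$ pointwise via operator-norm and Cauchy--Schwarz estimates, with $\|\boldsymbol{\Phi}-\bar{\boldsymbol\Phi}\|$ controlled by $\max_i\|\LLambda_{i,N}-\bar\LLambda_i(k)\|=O(1/k)$ using the fact that KMS matrices asymptotically share their eigenspace. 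This requires no differentiation, no integration by parts, and no invocation of overlap concentration; your interpolation route would need the per-eigenvalue derivative bound $|\partial_{\lambda_j}\bar f_p|\le C/p$ verified carefully, which is plausible but is exactly the work the Jensen bound lets you skip. On the variational side the two arguments coincide (uniform convergence of the potential over Lipschitz functions under Wasserstein-2 convergence of $\bar\eta_k\to\eta$, with the fixed-point equations forcing critical $r_1,r_2$ to be Lipschitz since the denominator is bounded below by $\sigma^2$), and your extra care in choosing bin endpoints at continuity points of $\eta$ and merging vanishing-mass bins is a legitimate refinement that the paper glosses over but implicitly needs so that $\eta_p(J_i)\to\eta(J_i)$.
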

Also in this case  Remark~\ref{rem:equiv_gamma} applies, with now 
\begin{align}
    \lim_{p\rightarrow \infty}i_p=\inf_{\Gamma_\infty}i_{\rm RS}[{r}_1,{r}_2],\label{ip_critical_2}
\end{align}
where
\begin{align}
    \Gamma_\infty=\Big\{&({r}_1, {r}_2)\in\mathcal{C}_1 \times \mathcal{C}_2 \ |\ \forall\, \lambda\in[a,b]: \nonumber \\
    &\qquad r_{2}(\lambda)=\text{mmse}(\beta|\sqrt{r_{1}(\lambda)}\beta+Z),\quad r_{1}(\lambda)=\frac{c}{\pi}\int_{0}^\pi\frac{\delta(\theta,\lambda)d\theta}{\int_a^b r_{2}(x)\delta(\theta,x)d\eta(x)+\sigma^2}\Big\}\label{fixed_point_2}.
\end{align}

\paragraph{Special Cases}
In the known special cases our result \eqref{replica_potential} reduces to familiar expressions. First, we consider the standard setting $\mathbf{A}_p=\mathbf{0}$. In this case $\X_\mu$s are independent Gaussian vectors and the design matrix $\boldsymbol{\Phi}_N$ will be a matrix with i.i.d. Gaussian entries. We note that $\delta_i(\theta)=1$ and using \eqref{fixed_point} it can be easily shown that \eqref{replica_potential} and \eqref{info_limit} reduce to 

\begin{align}
   \lim_{p\rightarrow \infty}i_p&= \inf_{E\in[0,\rho]} i_{\rm RS}(E) \quad \text{with}\quad i_{\rm RS}(E)=\frac{c}{2}\ln\Big(\frac{E}{\sigma^2}+1\Big)
   -\frac{Ec}{2(E+\sigma^2)}
   +I\Big(\beta;\sqrt{\frac{c}{E+\sigma^2}}\beta+Z\Big),
\end{align}
which is the same (up to rescaling) as the equations (11-15) obtained in \cite{barbier2016mutual}.

Next we consider the case when $\mathbf{A}_p=\lambda \mathbf{I}_p$. In this case $\delta_i(\theta)=\delta(\theta)=(1-2\lambda \cos \theta+\lambda^2)^{-1}$ and using \eqref{fixed_point} it can be easily shown that \eqref{replica_potential} and \eqref{info_limit} reduce this time to 
\begin{align}
\lim_{p\rightarrow \infty}i_p&= \inf_{r \in (0,\infty)}\sup_{E\in[0,\rho]} i_{\rm RS}(r,E),\\
   \text{with} \quad i_{\rm RS}(r,E)&=\frac{c}{2\pi}\int_{0}^{\pi}\ln \Big(\frac{\delta(\theta)E}{\sigma^2}+1\Big)d\theta
   -\frac{1}{2}Er
+I(\beta;\sqrt{r}\beta+Z)\label{replica_lambda}.
\end{align}
This recovers the result in Theorem 1.1 of \cite{barbier2018mutual} (originally derived using the replica method \cite{tulino2013support}) when the first term in the right-hand side of \eqref{replica_lambda} is written in terms of the R-transform (the usual transform from random matrix theory \cite{potters2020first}) of the matrix $\boldsymbol{\Phi}_N^\intercal\boldsymbol{\Phi}_N/p$.

\subsection{Minimum mean-square errors}

An important criterion to measure the performance of an estimator is comparing it with the Bayes risk or the MMSE. In particular the estimator which minimizes the mean square error is the mean of the posterior distribution \eqref{posterior_0}, $\langle\bbeta\rangle$, and the MMSE is given by
\begin{align}
    {\rm mmse}(\bbeta | \Y_N,\boldsymbol{\Phi}_N):=\frac1p\ex\|\bbeta_0-\langle\bbeta\rangle\|^2 =\sum_{i=1}^k l_i \,{\rm mmse}^{(i)}:=\sum_{i=1}^kl_i\frac{1}{|I_i|}\ex\|\bbeta_0^{(i)}-\langle\bbeta^{(i)}\rangle\|^2,\label{blockMMSE}
\end{align}
where $\bbeta_0^{(i)} = (\beta_{0j})_{j\in I_i} \in \mathbb{R}^{|I_i|}$ is the $i$th block of $\bbeta_0$ in its natural decomposition induced by the structure \eqref{diagonal_lambda_model}, and similarly for $\bbeta^{(i)}$.
As it is not easy to access $\text{mmse}(\bbeta| \Y_N,\boldsymbol{\Phi}_N)$ directly from $i_p$, we introduce the \emph{measurement MMSE}, which is the MMSE on the responses:
\begin{align}
    \text{ymmse}:=\frac{1}{N}\ex\|\frac{1}{\sqrt{p}}\boldsymbol{\Phi}_N\bbeta_0-\langle\frac{1}{\sqrt{p}}\boldsymbol{\Phi}_N\bbeta\rangle\|^2.\
\end{align}
It is easy to find an expression for $\ymmse$ through the I-MMSE formula \cite{guo2005mutual} 
\begin{align}
    \frac{d i_p}{d \sigma^{-2}}=\frac{c_N}{2}\text{ymmse}.
\end{align}
From Theorem~\ref{main_thm} we obtain the expression for the limit of $\text{ymmse}$
\begin{align}
    \lim_{p\rightarrow\infty}\text{ymmse}=\frac{2}{c}\lim_{p\rightarrow\infty}\frac{d i_{p}}{d \sigma^{-2}}=\frac{2}{c}\frac{d i_{\rm RS}(\tilde{\mathbf{r}})}{d \sigma^{-2}}, \label{ymmse_formula}
\end{align}
where $\tilde{\mathbf{r}}=(\tilde{\mathbf{r}}_1,\tilde{\mathbf{r}}_2)$ is the global minimum of $i_{\rm RS}$ on $\Gamma$. The commutation of the large $p$ limit and the derivative used for the last equality above is justified by concavity of the mutual information in $\sigma^{-2}$. We can compute explicitly the RHS of \eqref{ymmse_formula} using Theorem~\ref{info_limit}. This proves the first formula for the measurement MMSE in the theorem below. In addition, we prove in Section~\ref{sec:mmse} a connection between the measurement MMSE and the block MMSEs defined by \eqref{blockMMSE} (see the second formula below). 
\begin{theorem}[Measurement MMSE, and connection to the block MMSEs]\label{th:ymmse_mmse}
Let $\tilde{\mathbf{r}}=(\tilde{\mathbf{r}}_1,\tilde{\mathbf{r}}_2)$ be the global minimum of $i_{\rm RS}$ on the set of critical points $\Gamma$ given by \eqref{fixed_point}. The limiting measurement MMSE is given by
\begin{align}
\lim_{p\rightarrow\infty}\text{ymmse} =\frac{\sigma^2}{\pi}\int_0^{\pi}\frac{\sum_{i=1}^{k}l_i\delta_i(\theta)\tilde{r}_{2,i}}{\sum_{i=1}^{k}l_i\delta_i(\theta)\tilde{r}_{2,i}+\sigma^2}d\theta.\label{ymmse_explicit}
\end{align}
Moreover, the measurement MMSE is related to the MMSE on the blocks of the regressor ${\rm mmse}^{(i)}:=\frac{1}{|I_i|}\ex\|\bbeta_0^{(i)}-\langle\bbeta^{(i)}\rangle\|^2$ as follows:
\begin{align}    \text{ymmse}=\frac{\sigma^2}{\pi}\int_0^{\pi}\frac{\sum_{i=1}^{k}l_i\delta_i(\theta)\text{mmse}^{(i)}}{\sum_{i=1}^{k}l_i\delta_i(\theta)\text{mmse}^{(i)}+\sigma^2}d\theta .
\end{align}
In the case of a non-discrete asymptotic spectral distribution for $\bA_p$ verifying hypotheses \textbf{h1}, \textbf{h2} above, we  have 
\begin{align*}
\lim_{p\rightarrow\infty}\text{ymmse} =\frac{\sigma^2}{\pi}\int_0^{\pi}\frac{\int_a^b\delta(\theta,\lambda)\tilde{r}_{2}(\lambda) d\eta(\lambda)}{\int_a^b\delta(\theta,\lambda)\tilde{r}_{2}(\lambda)d\eta(\lambda)+\sigma^2}d\theta.
\end{align*}
\end{theorem}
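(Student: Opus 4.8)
The plan is to obtain all three displays from the replica formula of Theorem~\ref{main_thm}, combined with the I-MMSE identity and an envelope (stationarity) argument; the only genuinely new ingredient is the identification of the limiting block MMSEs with the order parameters $\tilde r_{2,i}$. For the explicit expression \eqref{ymmse_explicit} I would start from \eqref{ymmse_formula}, i.e. $\lim_p\text{ymmse}=\tfrac{2}{c}\,\tfrac{d}{d\sigma^{-2}}\,i_{\rm RS}(\tilde{\mathbf{r}})$, the exchange of limit and derivative being granted by concavity of $\sigma^{-2}\mapsto i_p$. The key remark is that in the potential \eqref{replica_potential} the variable $u:=\sigma^{-2}$ appears only through the first term, whereas $\tilde{\mathbf{r}}=\tilde{\mathbf{r}}(u)$ is a stationary point of $i_{\rm RS}(\cdot\,;u)$ (indeed the set $\Gamma$ of \eqref{fixed_point} is exactly $\{\nabla_{\mathbf{r}}\,i_{\rm RS}=0\}$), so the chain rule collapses to
\[
\frac{d}{du}\,i_{\rm RS}\big(\tilde{\mathbf{r}}(u);u\big)=\frac{\partial i_{\rm RS}}{\partial u}\Big|_{\mathbf{r}=\tilde{\mathbf{r}}}=\frac{c}{2\pi}\int_0^\pi\frac{S(\theta)}{u\,S(\theta)+1}\,d\theta,\qquad S(\theta):=\sum_{i\le k}l_i\delta_i(\theta)\tilde r_{2,i}.
\]
Multiplying by $2/c$ and clearing $u=\sigma^{-2}$ yields \eqref{ymmse_explicit}. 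The only subtlety is differentiability of $u\mapsto\tilde{\mathbf{r}}(u)$, which can fail at the (at most countably many) values of $\sigma$ where the minimiser of $i_{\rm RS}$ over $\Gamma$ jumps; there one argues with one-sided derivatives and the monotonicity of $\text{ymmse}$, so \eqref{ymmse_explicit} holds for all but a negligible set of $\sigma$, which is all we need.

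For the link with the block MMSEs the missing identity is $\lim_p\text{mmse}^{(i)}=\tilde r_{2,i}$ for each $i$. I would obtain it by perturbing the model with auxiliary Gaussian side observations $\Y'_i=\sqrt{t_i}\,\bbeta_0^{(i)}+\Z'_i$, $\Z'_i\sim\mathcal N(0,\mathbf{I}_{|I_i|})$ i.i.d.: the adaptive interpolation of Section~\ref{int_prf} applies verbatim to the augmented model, whose replica potential is \eqref{replica_potential} with $I(\beta;\sqrt{r_{1,i}}\beta+Z)$ replaced by $I(\beta;\sqrt{r_{1,i}+t_i}\,\beta+Z)$ (a direct observation merely adds SNR to the effective scalar channel of block $i$, with no extra overlap term). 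On the one hand, I-MMSE in $t_i$ gives $\frac{d}{dt_i}\frac1p I(\bbeta_0;\Y_N,\Y'_i\mid\boldsymbol{\Phi}_N)\big|_{t_i=0}=\frac{|I_i|}{2p}\,\text{mmse}^{(i)}$; on the other, the same stationarity argument as above, now in $t_i$, gives in the limit $\mathcal F'(0)=l_i\,\partial_{t_i}I(\beta;\sqrt{\tilde r_{1,i}+t_i}\,\beta+Z)\big|_{t_i=0}=\frac{l_i}{2}\,\text{mmse}(\beta\mid\sqrt{\tilde r_{1,i}}\beta+Z)=\frac{l_i}{2}\tilde r_{2,i}$, where the middle equality is the scalar I-MMSE and the last is the fixed-point relation \eqref{fixed_point}. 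Since $|I_i|/p\to l_i$, equating yields $\lim_p\text{mmse}^{(i)}=\tilde r_{2,i}$; plugging this into \eqref{ymmse_explicit} gives the claimed relation between $\text{ymmse}$ and the $\text{mmse}^{(i)}$, understood, like the rest of the theorem, in the limit $p\to\infty$.

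For the non-discrete spectral law the formula follows by the very same $k\to\infty$ limiting procedure that takes Theorem~\ref{main_thm} to Theorem~\ref{th:gen_A}: approximate $\eta$ in Wasserstein-$2$ distance by finitely supported measures, apply \eqref{ymmse_explicit} along the sequence, and pass to the limit. Under \textbf{h1} one has $\delta(\theta,\lambda)\le(1-b)^{-2}$ and $\tilde r_2(\lambda)\le\rho$, so $x\mapsto x/(x+\sigma^2)$ is evaluated on a bounded set where it is Lipschitz; combined with the $W_2$-convergence of the approximating measures (guaranteed by \textbf{h2}), the continuity of the minimiser $\tilde r_2$ of the potential of Theorem~\ref{th:gen_A}, and dominated convergence in $\theta$, this yields the stated formula.

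I expect the main obstacle to be precisely the identity $\lim_p\text{mmse}^{(i)}=\tilde r_{2,i}$: it is block-wise overlap concentration, and the stationarity argument for it presupposes that an infinitesimal perturbation of a single block does not cross a phase transition, i.e. that the saddle of the perturbed potential connects continuously to $\tilde{\mathbf{r}}$ as $t_i\to0$. This is automatic for generic $(\sigma,c,P_0)$; at the at most countably many exceptional parameters one again falls back on one-sided derivatives together with concavity of $t_i\mapsto\frac1p I(\bbeta_0;\Y_N,\Y'_i\mid\boldsymbol{\Phi}_N)$ and of its limit, so that the three identities hold outside a negligible set of parameters, which is all the theorem requires.
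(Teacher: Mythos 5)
Your derivation of the explicit formula \eqref{ymmse_explicit} (part one) follows the same envelope-theorem route the paper uses (it is the line of argument preceding the theorem statement), and your $k\to\infty$ argument for the non-discrete spectrum is also the paper's approach. The genuine divergence is in the second display relating $\text{ymmse}$ to the block quantities $\text{mmse}^{(i)}$.

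For that display the paper does \emph{not} go through the identification $\lim_p\text{mmse}^{(i)}=\tilde r_{2,i}$: that identity is exactly what the paper flags as the unproven conjecture \eqref{eq:mmse_conj}, stated immediately after Theorem~\ref{th:ymmse_mmse}. Instead, Section~\ref{sec:mmse} proves the ymmse-to-$\text{mmse}^{(i)}$ relation \emph{directly}, without ever computing $\lim_p\text{mmse}^{(i)}$: one introduces the deterministic whitening matrix $\mathbf{E}=(\mathbf{I}_N+\sigma^{-2}\sum_i l_i\LLambda_{i,N}\,\text{mmse}^{(i)})^{-1/2}$, integrates by parts in the noise $\Z$ and then in the Gaussian entries $G^{(i)}_{\nu j}$ (using the Nishimori identity at each step, together with the concentration Lemmas~\ref{lemma_y2_1}--\ref{lemma_y2_2}), and arrives at the finite-$p$ identity
\begin{align*}
\text{ymmse}=\frac{1}{N}\mathrm{Tr}\Big(\big(\textstyle\sum_i l_i\LLambda_{i,N}\,\text{mmse}^{(i)}\big)\big(\mathbf{I}_N+\sigma^{-2}\textstyle\sum_i l_i\LLambda_{i,N}\,\text{mmse}^{(i)}\big)^{-1}\Big)+o(1),
\end{align*}
which becomes the stated integral by the Szeg\H{o}-type limit for the KMS matrices $\LLambda_{i,N}$. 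This relation holds for every value of $(\sigma,c,P_0)$ with no genericity restriction, precisely because it never invokes uniqueness or regularity of the replica minimiser.

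Your perturbation route, by contrast, proves strictly more than the theorem asserts: if it worked it would settle the conjecture \eqref{eq:mmse_conj}. That should already be a warning sign. Concretely, two things would need to be nailed down and neither is a small step. First, you would need a full adaptive-interpolation proof for the \emph{augmented} model $(\Y_N,\Y'_i)$ — the replica potential you write down is the natural guess, but ``applies verbatim'' hides the re-verification of overlap concentration and of the min-max structure with an extra SNR knob $t_i$. Second, and more seriously, equating $\frac{|I_i|}{2p}\text{mmse}^{(i)}$ with $\frac{l_i}{2}\tilde r_{2,i}$ requires two-sided differentiability of the limiting mutual information in $t_i$ at $t_i=0$, i.e.\ continuity of the (assumed-unique) saddle $\tilde{\mathbf r}(t_i)$ through $t_i=0$. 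Concavity in $t_i$ only gives a.e.\ differentiability in $t_i$, not differentiability at the specific boundary point $t_i=0$; one gets the identity only for ``generic'' $(\sigma,c,P_0)$. Since the theorem is unqualified, and since the function $(x_1,\dots,x_k)\mapsto\frac{\sigma^2}{\pi}\int_0^\pi\frac{\sum_i l_i\delta_i(\theta)x_i}{\sum_i l_i\delta_i(\theta)x_i+\sigma^2}\,d\theta$ is not injective (so the two parts of the theorem together do \emph{not} imply $\text{mmse}^{(i)}\to\tilde r_{2,i}$), your plan leaves a real gap that the paper's integration-by-parts argument avoids entirely.
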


Based on this theorem, we  formulate a natural conjecture for the discrete eigenvalues profile case:
\begin{align}
&\lim_{p\rightarrow\infty}\text{mmse}^{(i)}=\tilde{r}_{2,i} \ \ \text{for all $i \in \{1,\ldots,k\}$,}\nonumber\\
&\lim_{p\rightarrow\infty} {\rm mmse}(\bbeta | \Y_N,\boldsymbol{\Phi}_N)=\sum_{i=1}^k l_i\,\tilde{r}_{2,i},\label{eq:mmse_conj}
\end{align}
where $\tilde{\mathbf{r}}=(\tilde{\mathbf{r}}_1,\tilde{\mathbf{r}}_2)$ is again the unique global minimum of $i_{\rm RS}$ on $\Gamma$. For the non-discrete case, the conjecture is as follows: 
\begin{align*}
    \lim_{p\rightarrow\infty} {\rm mmse}(\bbeta | \Y_N,\boldsymbol{\Phi}_N)=\int_a^b\,\tilde{r}_{2}(\lambda)d\eta(\lambda),
\end{align*}
where $\tilde r_2$ attains the supremum in \eqref{mutual_info_generalAp}, or equivalently corresponds to the infimum in $\Gamma_{\infty}$.

\section{Numerical experiments with VAMP\label{sec:sim}}
In this section, we provide some numerical experiments to compare the predicted measurement MMSE obtained from Theorem~\ref{th:ymmse_mmse}, along with the MMSE on the regressor conjectured to be given by \eqref{eq:mmse_conj}, to their counterparts obtained using the vector approximate message passing algorithm (VAMP) proposed in \cite{rangan2019vector} (which is closely linked to the orthogonal AMP \cite{ma2017orthogonal}). We consider various  choices of matrices $\mathbf{A}_p$.  
\begin{figure*}[!t]
\centering
\subfloat[]{\includegraphics[width=8cm]{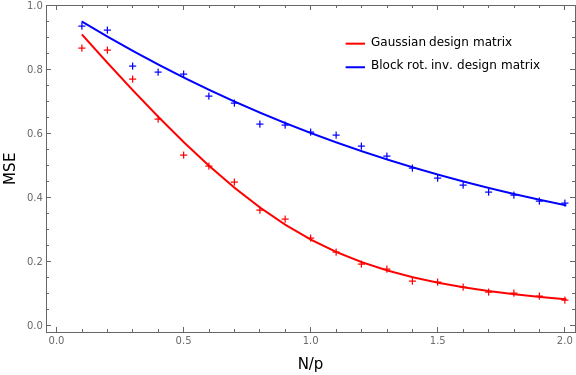}%
\label{fig:gaussian_2case}}
\hfill
\subfloat[]{\includegraphics[width=8cm]{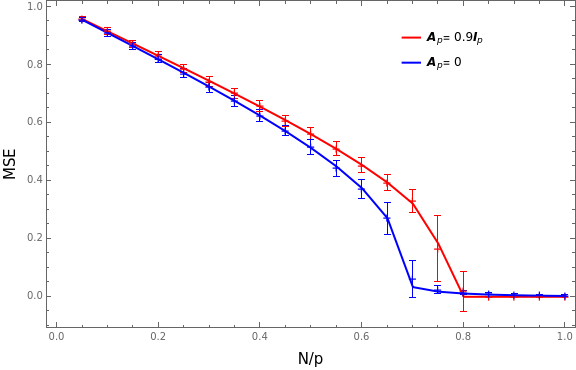}%
\label{fig:old_cases}}
    \caption{On the left MMSE versus $N/p$ for Gaussian prior. We take  $p=2100$, and $\mathbf{A}_p=0$ (red)  and $\mathbf{A}_p={\rm diag}(0.9,0.7,0.5,0.3,0.1)$ (blue). On the right we see models with $\bbeta$ drawn from Rademacher prior with $p=2100$, $\sigma^2=0.1$. Blue line represents theoretically obtained MMSE for the case with i.i.d. Gaussian design matrix ($\mathbf{A}_p=0$) and on the red - the case of  right rotation invariant design matrix ($\mathbf{A}_p=0.9\mathbf{I}_p$). Bars represents the span of MSE obtained through the VAMP algorithm for 50 instances.} 
\end{figure*}

First, let us remark that in the case of Gaussian prior, due to the rotational invariance of standard Gaussian vectors, we can replace the initial signal $\bbeta_0$ by $\mathbf{U}_N\bbeta_0$, where $\mathbf{U}_N$ is a Haar distributed random matrix. Then, the "new" design matrix $\boldsymbol{\Phi}_N^\prime=\boldsymbol{\Phi}_N\mathbf{U}_N$ is right rotationally invariant itself, which makes this case eligible for VAMP. And indeed
Figure~\ref{fig:gaussian_2case} shows two realizations of two models with Gaussian signal $\bbeta$ and equal SNR. The red curve  corresponds to the model with $\mathbf{A}_p=0$ and the blue one to the  model where $\bA_p$ has eigenvalues $\{0.9,0.7,0.5,0.1\}$ with equal multiplicity each.  We can observe that  MSE obtained with VAMP (marked with "+") matches replica prediction (continuous lines).

As was mentioned before, in \textit{Case 1} and \textit{Case 2}, that is when $\mathbf{A}_p=\lambda \mathbf{I}_p$, the design matrix $\boldsymbol{\Phi}_N$ can be decomposed as $\boldsymbol{\Phi}_N=\LLambda_N\mathbf{G}_N$, where $\LLambda_N$ is of the form (\ref{eq:def_Lambda}) for $\lambda_i=\lambda$ and $\mathbf{G}_N $ is $N\times p$ matrix with i.i.d. standard Gaussian 
 entries. The presence of $\mathbf{G}_N$ makes $\boldsymbol{\Phi}_N$ right rotationally invariant. This must imply a match between the MSE curves obtained by VAMP and the MMSE curves obtained through our results, up to finite size corrections guaranteed to vanish in the large size limit.  Figure~\ref{fig:old_cases} compares theoretically obtained MSE, represented by the continuous lines) with the MSE obtained with VAMP (bars represent span of MSE obtained from 50 realisations of the problem). The signal is drawn from Rademacher prior, noise variance is $\sigma^2=0.1$, dimension $p=2100$, and $\mathbf{A}_p=0$ (blue) and $\mathbf{A}_p=0.9 \mathbf{I}_p$ (red). In both cases we see slightly instability around the phase transition point. 
 \begin{figure*}[!t]
\centering
\subfloat[]{\includegraphics[width=8cm]{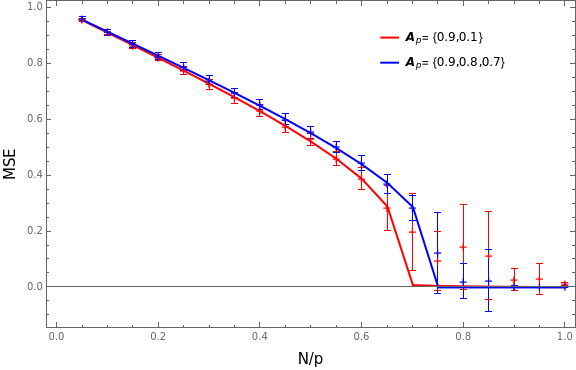}%
\label{fig:aver_big_gap}}
\hfil
\subfloat[]{\includegraphics[width=8cm]{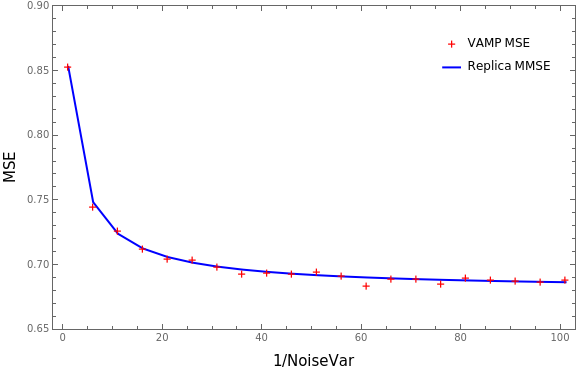}%
\label{fig:mse-snr}}
\caption{On the right MMSE versus $c_N=N/p$ for models with Rademacher prior, $\sigma^2=0.1$,  $p=2100$, and two different choices of $\bA_p$. Continuous lines represent the theoretically obtained MMSE, \eqref{eq:mmse_conj}, while the bars show the span of MSE obtained from the VAMP algorithm for 50 instances of the problem. On the left we see MMSE given by \eqref{eq:mmse_conj} versus $1/\sigma^2$ (blue), and MSE of VAMP averaged over 50 instances with signal with entries drawn from the Rademacher prior with $p=2100$ and parameters $N/p=0.3$, with $\bA_p=\{0.9,0.1\}$.}
\end{figure*}

\textit{Case 3} is more interesting. The block structure in \eqref{diagonal_lambda_model}, where
for all $i=1,\ldots,k$ the matrix $\mathbf{G}_N^{(i)}$ is $N\times |I_i|$ with i.i.d. standard Gaussian elements, implies that despite each block is individually right rotationally invariant, the overall design is not. Consequently the state evolution analysis of \cite{rangan2019vector} does not hold so we cannot guarantee a good performance of VAMP, nor being able to track its performance. 

 Figure~\ref{fig:aver_big_gap} depicts two such cases. For both models, we took the signal $\bbeta_0$ of dimension $p=2100$ with binary $\pm 1$ elements drawn from a Rademacher distribution, and noise $\Z$  taken with variance $\sigma^2=0.1$. The red curves correspond to the model with $\bA_p=\{0.9,0.1\}$, i.e., diagonal matrix which consists of $0.9$ and $0.1$ eigenvalues with equal multiplicity, while for the blue curves, $\bA_p$ is again a diagonal matrix with eigenvalues $\{0.9,0.8,0.7\}$ of equal multiplicity. The continuous lines correspond to theoretical results. As we can see, the MMSE curves display a phase transition. Apart from a region rather close to the transition, VAMP seems essentially optimal for both models despite the lack of rotational invariance. However, around the transition point we observe that VAMP is highly unstable. Furthermore, the greater the dispersion of the eigenvalues of
$\bA_p$, the more unstable the algorithm becomes. This is likely due to the fact that, naturally, when the eigenvalues of $\bA_p$ are grouped together, the resulting design matrix is "closer" to being right-rotationally invariant.

Figure~\ref{fig:mse-snr} shows the MMSE versus the inverse $\sigma^{-2}$ of the noise variance with ratio $c_N=0.3$, far from the transition. Remarkably, for $\bA_p=\{0.9,0.1\}$ the MSE obtained from VAMP averaged over 50 instances matches almost perfectly the theoretical curve, suggesting again a strong robustness to the rotational-invariance hypothesis, and demonstrating its potential applicability in the context of times series.

In Figure~\ref{fig:ymmse} we compare the measurement MMSE (YMMSE) obtained theoretically from \eqref{ymmse_explicit} and the one calculated from the VAMP estimator of $\bbeta_0$. In this case, due to the matrix multiplication with the design, the important estimation error on the regressor close to the transition point in the first figure is amplified even more. But, again, away from it the match with the MMSE theory is excellent. 
\begin{figure*}[!t]
\centering
    \subfloat[]{\includegraphics[width=8cm]{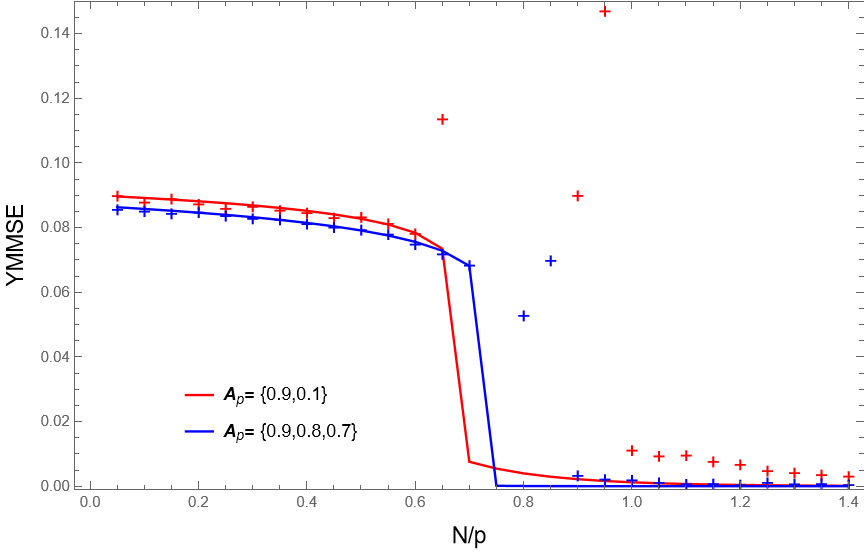}
  \label{fig:ymmsse_new}}
\hfill
       \subfloat[]{ \includegraphics[width=8cm]{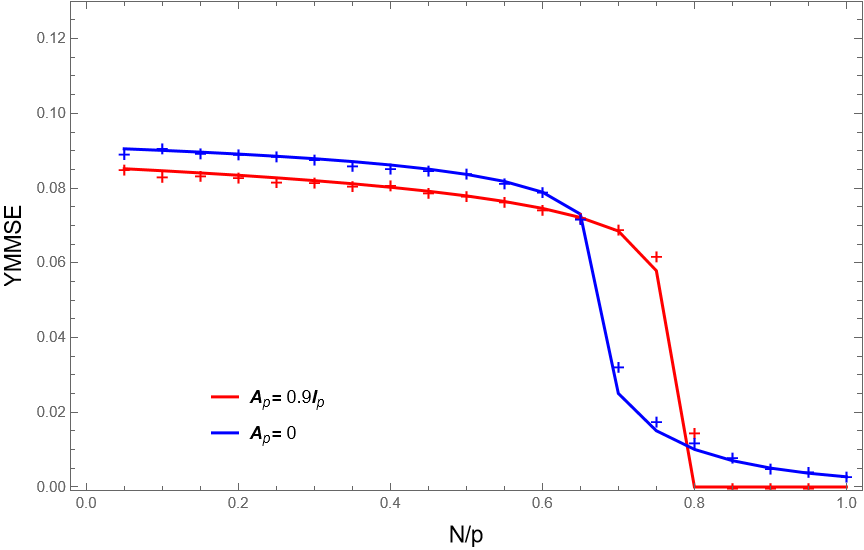}
\label{fig:ymmse_old}}
    \caption{YMMSE versus $c_N=N/p$ for models with Rademacher prior, $\sigma^2=0.1$,  $p=2100$, and  different choices of $\bA_p$. Continuous lines represent the theoretically obtained YMMSE, while "+" show the YMMSE averaged over 50 instances obtained from the VAMP algorithm. The plot on the right shows two models with right rotationally invariant design matrices ($\bA_p=0$ and $\bA_p=0.9 \mathbf{I}_p$ and on the left we see two models with block right rotationally invariant design matrices ($\bA_p=\{0.9,0.1\}$ and $\bA_p=\{0.9,0.8,0.7\}$).}
  \label{fig:ymmse}
\end{figure*}

\section{Proofs}\label{sec:proofs}
\subsection{Proof of the replica formula by the adaptive interpolation method\label{int_prf}}

In this section we prove Theorem \ref{main_thm} using the adaptive interpolation method \cite{barbier2019adaptive,barbier2019optimal}. First we set up some notations.

From now on, we drop some subscripts indicating the vectors and matrix sizes. Recall that our linear model \eqref{eq_SRM}  with $\X_\mu$ given by \eqref{eq:def_x_t} can be reformulated in the form \eqref{diagonal_lambda_model}. For brevity we write   $\LLambda^{1/2}_{i,N}$ as $\LLambda^{(i)}$ and $\bbeta_0^{(i)}$ ($\bbeta^{(i)}$) as the   block  of length $|I_i|$ of vector $\bbeta_0$ ($\bbeta$) with  elements on positions $I_i$, each block of entries being non-overlapping with the others.
\begin{align}
    Y_{\mu}=\sqrt{\frac{1}{p}}(\boldsymbol{\Phi} \bbeta_0)_{\mu}+Z_\mu=\sqrt{\frac{1}{p}}\sum_{i=1}^k(\LLambda^{(i)} \mathbf{G}^{(i)} \bbeta_0^{(i)})_{\mu}+Z_\mu.
\end{align}
   We fix a sequence $s_p\in (0,1/2] $ such that $s_p\rightarrow 0$ as $p$ goes to  infinity. Let $\boldsymbol{\epsilon}=(\epsilon_{1,1},\ldots,\epsilon_{1,k}, \epsilon_{2,1},\ldots,\epsilon_{2,k})\in [s_p,2s_p]^{2k}$, and   $r_{1,i}: [0, 1] \rightarrow \mathbb{R}$ and $r_{2,i} : [0, 1] \rightarrow [0,\infty)$ be $2k$
continuous ``interpolation functions'' (that will later depend on $\boldsymbol{\epsilon}$). We define 
\begin{align}
 R_{1,i}(t):=\epsilon_{1,i}+\int_0^t r_{1,i}(u)du, \qquad R_{2,i}(t):=\epsilon_{2,i}+\int_0^t r_{2,i}(u)du
\end{align}
for $t\in[0,1]$. We will denote $\mathbf{R}_1(t)=(R_{1,1}(t),\ldots,R_{1,k}(t))$ and $\mathbf{R}_2(t)=(R_{2,1}(t),\ldots,R_{2,k}(t))$.
Consider the following $t$-dependent observation
channels:
\begin{align}
    \Y_{t}&=\sqrt{\frac{1-t}{p}}\boldsymbol{\Phi} \bbeta_0+\sum_{i=1}^{k}\sqrt{R_{2,i}(t)l_i}\LLambda^{(i)} \V^{(i)}+\Z, \\
    \tilde{\Y}_{t}^{(i)}&=\sqrt{R_{1,i}(t)}\bbeta_{0}^{(i)}+\tilde{\Z}^{(i)},\; i=1,\ldots,k.\label{scalar_int}
\end{align}
Here $\V^{(i)}\sim \mathcal{N}(0,\mathbf{I}_N)$, the vector channel $\tilde{\Y}_{t}^{(i)}$ is of dimension $|I_i|$, and $\tilde{\Z}^{(i)}\sim\mathcal{N}(0,\mathbf{I}_{|I_i|})$. We also define $N$ dimensional vector $\tilde{\Y}_{t}=(\tilde{\Y}_{t}^{(1)\intercal},\ldots,\tilde{\Y}_{t}^{(k)\intercal})^\intercal$. The surrogate inference problem defined above for the sake of the proof is to jointly recover $\bbeta_0$ and $\V^{(i)}$ from the observations $(\Y_{t},\tilde{\Y}_{t})$ and given $\boldsymbol{\Phi}$. In the Bayesian setting the posterior distribution of $\bbeta$ and $\v$ is given in Gibbs-Boltzmann form
\begin{align}\label{eq:post_inter}
    dP(\bbeta,\v|\Y_t,\tilde{\Y}_t,\boldsymbol{\Phi})=\frac{dP_0(\bbeta)D\v e^{-\mathcal{H}(t,\bbeta,\v,\Y_t,\tilde{\Y}_t,\boldsymbol{\Phi})}}{\int dP_0(\bbeta)D\v e^{-\mathcal{H}(t,\bbeta,\v,\Y_t,\tilde{\Y}_t,\boldsymbol{\Phi})}},
\end{align}
where the Hamiltonian $ \mathcal{H}=\mathcal{H}(t,\bbeta,\v^{(i)},\Y_t,\tilde{\Y}_t^{(i)},\boldsymbol{\Phi})$ (i.e., negative of the log-likelihood and ignoring the constants) and associated partition function $\mathcal{Z}_t(\Y_t,\tilde{\Y}_t)$
are given by
\begin{align}
    \mathcal{H}&:=\frac{1}{2\sigma^2}\|\Y_t-\sqrt{\frac{1-t}{p}}\boldsymbol{\Phi} \bbeta-\sum_{i=1}^k\sqrt{R_{2,i}(t)l_i}\LLambda^{(i)}\v^{(i)}\|^2+\frac{1}{2}\sum_{i=1}^k\|\tilde{\Y}^{(i)}_t-\sqrt{R_{1,i}(t)}\bbeta^{(i)}\|^2,\label{interpol_hamiltonian}\\
    \mathcal{Z}_t(\Y_t,\tilde{\Y}_t)&:=\int dP_0(\bbeta)D\v\,\exp\big\{-\mathcal{H}(t,\bbeta,\v,\Y_t,\tilde{\Y}_t,\boldsymbol{\Phi})\big\}.\label{eq:part_inter}
\end{align}
The normalized mutual information for the interpolating model  is given by
\begin{equation}
    i_{p,\epsilon}(t):=-\frac{1}{p}\ex_{\boldsymbol{\Phi},\V,\bbeta_0}\ln \int dP_0(\bbeta)D\v\, \exp\big\{-\mathcal{H}(t,\bbeta,\v,\Y_t,\tilde{\Y}_t,\boldsymbol{\Phi})\big\}-\frac{c+1}{2}. \label{interpolating_info}
\end{equation}
First note that $i_p=i_{p,\epsilon}(0)+o(1)$. 
Indeed, denote the interpolating free energy, i.e. the first term on the RHS of \eqref{interpolating_info}) by $f_{p,\epsilon}(t)$. Define the Hamiltonian of the original problem by
$$\mathcal{H}(\bbeta,\Y,\boldsymbol{\Phi}):=\frac{1}{2\sigma^2}\|\Y-\sqrt{\frac{1}{p}}\boldsymbol{\Phi} \bbeta\|^2.$$
We note that $f_{p,\epsilon}(0)$ converges to $f_{p,0}(0)$ (see proof of Lemma 1 of \cite{barbier2019adaptive} for example), where
\begin{align*}
f_{p,0}(0)&=-\frac{1}{p}\ex_{\boldsymbol{\Phi},\bbeta_0}\ln \int dP_0(\bbeta)\exp\big\{-\mathcal{H}(\bbeta,\Y,\boldsymbol{\Phi})\big\}+1/2
\end{align*}
We can identify the first term on the RHS as $\bar{f}_p$ (appearing in \eqref{free_energy_def}).
It thus follows that
$$i_{p,\epsilon}(0)=f_{p,0}(0)-\frac{c+1}{2}+o(1)=\bar{f}_p-\frac{c}{2}=i_p+o(1).$$
Next, the goal is to compute $i_{p,\epsilon}(0)$ using the following identity 
\begin{align}
i_{p,\epsilon}(0)=i_{p,\epsilon}(1)-\int_{0}^{1}i'_{p,\epsilon}(t)dt.    
\end{align}
We will obtain an asymptotic expression for the right-hand side of the above equation and then, letting $N\to\infty$, obtain the single-letter formula for the mutual information. This formula will not depend on $\boldsymbol{\epsilon}$, since $\boldsymbol{\epsilon}$ lies in the hyper-rectangle [$s_p,2s_p]^{2k}$, whose volume goes to $0$ as $N,p\rightarrow \infty$ (we give rigorous arguments later).

Henceforth, we will denote $i_{p,\epsilon}(1)$ by $i_p(1)$ suppressing the dependence on $\boldsymbol{\epsilon}$.
Considering the form of the Hamiltonian in \eqref{interpol_hamiltonian}, it can be easily verified that $i_p(1)$ can be written in the form
\begin{align*}
    i_p(1)&=\sum_{i=1}^kl_iI(\bbeta_{0}^{(i)};\sqrt{R_{1,i}(1)}\bbeta_{0}^{(i)}+\tilde{\Z}^{(i)})+I(\V;\sum_{i=1}^{k}\sqrt{R_{2,i}(1)l_i}\LLambda^{(i)} \V^{(i)}+\Z)\\
    &:=\sum_{i=1}^kl_iI_1(R_{1,i}(1))+I_2(\mathbf{R}_2(1))
\end{align*}
Further, analogous to equation (8) of \cite{barbier2018mutual} $I_2(\mathbf{R}_2(1))$ is given by $\frac{c}{2}\ex_{X'}\log(1+X'/\sigma^2)$  where $X'$ is drawn from the empirical spectral distribution of the matrix $L^\intercal L$ where
$$L:=(\sqrt{R_{2,1}(1)l_1}\LLambda^{(1)},\ldots,\sqrt{R_{2,k}(1)l_k}\LLambda^{(k)}).$$
Indeed, $I_2(\mathbf{R}_2(1))$ is the normalized mutual information $\frac{1}{p}I(\mathbf{W},\boldeta|L)$ where
$\mathbf{W}=L\boldeta+\bar{\mathbf{Z}}$ for centred Gaussian noise $\bar{\mathbf{Z}}$ with variance $\sigma^2$. On the other hand $I(\mathbf{W},\boldeta|L)=\frac{1}{2}\log \det(L^\intercal L/\sigma^2+I)$ by the celebrated log-det formula for mutual information (see \cite{tulino2004random}) and expressing the same formula in terms of the spectrum of $L^\intercal L$ gives the desired result.
Therefore we only need to analyze the spectrum of $L^\intercal L$, or, equivalently, of $L L^\intercal$
$$L L^\intercal =\sum_{i=1}^k R_{2,i}(1)l_i\LLambda^{(i)} \LLambda^{{(i)}\intercal}=\sum_{i=1}^k l_iR_{2,i}(1)\LLambda_{i,N}.$$
 We can show that (cf. derivation of \eqref{eigen_KMS} from \eqref{eigen_KMS_mid} in Appendix \ref{rep_cal})
\begin{align}\label{eq:I2}
    I_2(\mathbf{R}_2(1))=\frac{c}{2}\ex_{X'}\log(1+X'/\sigma^2)=\frac{c}{2\pi}\int_{0}^{\pi}\ln \Big(\frac1{\sigma^2}\sum_{i=1}^kl_i\delta_i(\theta)R_{2,i}(1)+1\Big)d\theta.
\end{align}
With this we conclude that $I_2(\mathbf{R}_2(1))$ is indeed a function of $\{R_{2,i}(1)\}_{i=1}^k$, or more explicitly, of the $\{r_{2,i}(\cdot)\}_{i=1}^k$.

We now have defined all the necessary set-up for the adaptive interpolation proof. We are going to decompose the proof into steps, starting with the  evaluation of the time derivative $i'_{p,\epsilon}(t)$. In order to do so, define the overlaps 
\begin{align}
Q_i:=\frac{1}{|I_i|}\sum_{j=1}^{|I_i|}\beta_{0,j}^{(i)}\beta^{(i)}_j  \label{overlap_1} 
\end{align}
and an $N$-dimensional vector $\mathbf{u}_t$ with elements
\begin{equation}{\label{utnu}}
    u_{t,\nu}:=\sqrt{\frac{1-t}{p}}\Big(\boldsymbol{\Phi}(\bbeta_0-\bbeta)\Big)_{\nu}+\sum_{i=1}^k\sqrt{R_{2,i}(t)l_i}\Big(\LLambda^{(i)}(\V^{(i)}-\v^{(i)})\Big)_{\nu}+Z_{\nu}.
\end{equation}

\paragraph{Step 1} First, we obtain an asymptotic expression for the derivative of the mutual information (in terms of the interpolating parameters $r_{1,i}(t)$, $r_{2,i}(t)$ and the overlaps $Q_i$):
 
\begin{lemma}[Time derivative]{\label{der_lemma}} We have
\begin{equation}
  i'_{p,\epsilon}(t)=\frac{1}{2}\sum_{i=1}^{k}l_ir_{1,i}(t)(\rho-\ex[\langle Q_i\rangle_{t,\epsilon}])
  +\frac{1}{2\sigma^4p}\sum_{i=1}^{k}l_i\ex[\langle (r_{2,i}(t)-(\rho-Q_i))\Z^\intercal(\LLambda^{(i)})^2 \mathbf{u}_{t}\rangle_{t,\epsilon}]+o(1) \label{info_deriv}
\end{equation}
where $\langle.\rangle_{t,\epsilon}$ is the expectation with respect to the posterior \eqref{eq:post_inter} of the interpolating model.
\end{lemma}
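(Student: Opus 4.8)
\emph{Proof strategy.} The plan is to differentiate $i_{p,\epsilon}(t)=-\tfrac1p\ex_{\boldsymbol{\Phi},\V,\bbeta_0}\ln\mathcal{Z}_t(\Y_t,\tilde{\Y}_t)-\tfrac{c+1}{2}$ directly in $t$, keeping in mind that the parameter enters in two distinct ways: \emph{explicitly}, through the coefficients $\sqrt{(1-t)/p}$, $\sqrt{R_{2,i}(t)l_i}$, $\sqrt{R_{1,i}(t)}$ appearing in the Hamiltonian \eqref{interpol_hamiltonian}, and \emph{implicitly}, through the observations $\Y_t$ and $\tilde{\Y}_t$, which are built from the fixed randomness $(\bbeta_0,\V,\Z,\tilde{\Z})$ with $t$-dependent weights. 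Using $\tfrac{d}{dt}\ln\mathcal{Z}_t=-\langle\tfrac{d}{dt}\mathcal{H}\rangle_{t,\epsilon}$ with the \emph{total} derivative, the second summand of $\mathcal{H}$ (the $k$ scalar side channels \eqref{scalar_int}) contributes a term whose $t$-derivative involves the $r_{1,i}(t)$, while the first summand contributes $\tfrac1{\sigma^2 p}\ex\langle\mathbf{u}_t^\intercal\dot{\mathbf{u}}_t\rangle_{t,\epsilon}$ with $\mathbf{u}_t$ as in \eqref{utnu} and
\[
\dot{\mathbf{u}}_t=-\frac{1}{2\sqrt{(1-t)p}}\,\boldsymbol{\Phi}(\bbeta_0-\bbeta)+\sum_{i=1}^k\frac{\sqrt{l_i}\,r_{2,i}(t)}{2\sqrt{R_{2,i}(t)}}\,\LLambda^{(i)}(\V^{(i)}-\v^{(i)}).
\]

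I would then \emph{linearize in every exogenous Gaussian} by Gaussian integration by parts: with respect to $\tilde{\Z}^{(i)}$, $\Z$, $\V^{(i)}$, and -- crucially -- the design Gaussians $\mathbf{G}^{(i)}$ (recall $\boldsymbol{\Phi}=(\LLambda^{(1)}\mathbf{G}^{(1)},\dots,\LLambda^{(k)}\mathbf{G}^{(k)})$ with deterministic $\LLambda^{(i)}$), each step turning a term linear in a Gaussian into a posterior (``thermal'') covariance plus a ``direct'' trace term. For the side channels this is routine: after one application of the Nishimori identity (Appendix~\ref{apx:Nish}), which gives $\ex\langle\|\bbeta^{(i)}\|^2\rangle=|I_i|\rho$ and $\ex\langle\bbeta_0^{(i)\intercal}\bbeta^{(i)}\rangle=|I_i|\ex\langle Q_i\rangle$ with $Q_i$ as in \eqref{overlap_1}, this part equals exactly $\tfrac12\sum_i l_i r_{1,i}(t)(\rho-\ex\langle Q_i\rangle_{t,\epsilon})$, up to the vanishing $|I_i|/p\to l_i$ error. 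The delicate piece is the main channel: the term carrying $(1-t)^{-1/2}\boldsymbol{\Phi}(\bbeta_0-\bbeta)$ is \emph{a priori} singular as $t\to1$, but every such factor gets paired either with the $\sqrt{1-t}$ already sitting inside $\mathbf{u}_t$ (in the ``$\boldsymbol{\Phi}$-against-$\boldsymbol{\Phi}$'' piece) or with the $\sqrt{1-t}$ produced by $\partial_{\mathbf{G}^{(i)}}\mathcal{H}$ after integrating by parts on the design (in the remaining pieces), so the singularity is harmless. Collecting the survivors, using $\tfrac1N\Tr\LLambda_{i,N}=\tfrac1{1-\lambda_i^2}=\tfrac1\pi\int_0^\pi\delta_i(\theta)\,d\theta$ (Appendix~\ref{apx:KMS}), applying Nishimori once more to turn exogenous-noise covariances into overlap combinations, and discarding off-diagonal and $O(1/p)$ trace terms, I expect all the remaining $O(1)$ ``trace-type'' and thermal contributions to organize into a single leftover, namely the mismatch term $\tfrac{1}{2\sigma^4 p}\sum_i l_i\,\ex\langle(r_{2,i}(t)-(\rho-Q_i))\,\Z^\intercal(\LLambda^{(i)})^2\mathbf{u}_t\rangle_{t,\epsilon}$, the factor $(\LLambda^{(i)})^2=\LLambda_{i,N}$ arising from composing the $\LLambda^{(i)}$ in $\mathbf{u}_t$ with the one in $\dot{\mathbf{u}}_t$. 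Note that at a ``matched'' interpolation ($r_{2,i}(t)\equiv\rho-Q_i$) this leftover vanishes identically, which is exactly what the adaptive choice of $r_{2,i}$ will later exploit.

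It then remains to push all error terms into the $o(1)$. For this I would invoke the auxiliary results of the appendices: uniform boundedness of $\|\LLambda_{i,N}\|$ in $N$ from the KMS structure (Appendix~\ref{apx:KMS}), compactness of $\mathrm{supp}\,P_0$, and the concentration bounds on $\Z$-dependent and overlap-dependent quantities (Appendices~\ref{apx:concen_z}--\ref{apx:lemmas}), to show that every boundary term, every off-diagonal trace term, and the error $|I_i|/p-l_i$ are $O(1/p)+O(s_p)$ uniformly over $\boldsymbol{\epsilon}\in[s_p,2s_p]^{2k}$, hence vanishing; together with the already noted $i_{p,\epsilon}(0)=i_p+o(1)$ this closes the computation.

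The main obstacle, and where I expect most of the work to lie, is precisely this ``organization'' step in the main-channel term: one must (i) correctly pair the derivative of $\sqrt{1-t}$ with a compensating $\sqrt{1-t}$ to neutralise the apparent $(1-t)^{-1/2}$ singularity, (ii) keep track of the many quadratic-in-$\bbeta$ and quadratic-in-$\v^{(i)}$ terms generated by the integrations by parts and show -- using the Nishimori identity and the fact that the structured design enters only through the deterministic $\LLambda_{i,N}$ -- that all the $O(1)$ trace pieces cancel except the stated mismatch term, and (iii) verify that the discarded remainder is genuinely $o(1)$.
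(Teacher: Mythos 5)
Your proposal rests on the same machinery as the paper (Gaussian integration by parts in $\tilde{\Z}$, $\Z$, $\V$ and the design Gaussians $\mathbf{G}^{(i)}$, the Nishimori identity, and the identification $(\LLambda^{(i)})^2=\LLambda_{i,N}$), and the pieces you spell out are correct; but the starting decomposition is genuinely different from the paper's and is somewhat less economical. You differentiate $\ln\mathcal{Z}_t$ \emph{totally} in $t$ with the quenched noises fixed, giving $i'_{p,\epsilon}(t)=\tfrac1p\ex\langle\tfrac{d}{dt}\mathcal{H}\rangle_{t,\epsilon}$, whose main-channel piece $\tfrac{1}{\sigma^2p}\ex\langle\mathbf{u}_t^\intercal\dot{\mathbf{u}}_t\rangle_{t,\epsilon}$ is \emph{quadratic} in the exogenous Gaussians (it contains, e.g., $(\V^{(j)}-\v^{(j)})^\intercal\LLambda^{(j)\intercal}\LLambda^{(i)}(\V^{(i)}-\v^{(i)})$ and $\boldsymbol{\Phi}^\intercal\boldsymbol{\Phi}$ cross terms). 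The paper instead differentiates through the $t$-dependent density of $(\Y_t,\tilde{\Y}_t)$, writing $i'_{p,\epsilon}(t)=\tfrac1p\ex[\mathcal{H}'(t,\bbeta_0,\V,\ldots)\ln\mathcal{Z}]+\tfrac1p\ex\langle\mathcal{H}'(t,\bbeta,\v,\ldots)\rangle_{t,\epsilon}$ with $\mathcal{H}'$ the partial derivative at \emph{fixed observations}, and kills the second term by Nishimori at the very start; this leaves a quenched prefactor \eqref{eq:der_gam} that is \emph{linear} in $\Z,\tilde{\Z},\V$ (and linear in $\mathbf{G}$ through $\boldsymbol{\Phi}\bbeta_0$), so each integration by parts lands cleanly on $\ln\mathcal{Z}$ and produces exactly one Gibbs bracket $-\langle\partial_\cdot\mathcal{H}\rangle$. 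The two starting points are equal --- the paper's early Nishimori step is precisely the identity that turns its product form into your Gibbs-average form --- but the paper's version avoids most of the cross-term bookkeeping. Two checks worth recording: your formula for $\dot{\mathbf{u}}_t$ is right, and the side-channel coefficient $\tfrac12$ does come out, though in your parametrization it arises from a cancellation: the $\ex\langle\|\bbeta_0^{(i)}-\bbeta^{(i)}\|^2\rangle$ contribution alone gives $l_ir_{1,i}(\rho-\ex\langle Q_i\rangle)$ and the $\tilde{\Z}^{(i)}$-IBP contributes $-\tfrac12 l_ir_{1,i}(\rho-\ex\langle Q_i\rangle)$, so a Nishimori step alone does not close it.

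The genuine gap is the main-channel ``organization'' you defer: the assertion that ``I expect all the remaining $O(1)$ trace-type and thermal contributions to organize into a single leftover'' is where essentially all of the lemma's content lies. In the paper's route this step consists of the $\V^{(i)}$- and $\mathbf{G}^{(i)}$-IBPs, the emergence of the overlap $Q_i$ and of the empirical second moment $\hat\rho_i=\|\bbeta_0^{(i)}\|^2/|I_i|$, the add-and-subtract of $\rho\,\Z^\intercal\LLambda_{i,N}\mathbf{u}_t$ to replace $\hat\rho_i$ by $\rho$, and the $O(p^{-1/2})$ moment bound (using $\ex\langle\|\mathbf{u}_t\|^4\rangle=O(p^2)$) that shows the substitution error vanishes. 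In your parametrization you will in addition need to handle the genuinely quadratic pieces (the $\boldsymbol{\Phi}^\intercal\boldsymbol{\Phi}$ and $\V\!-\!\v$ cross terms) and show their net contribution either matches or is absorbed into the $\Z^\intercal\LLambda_{i,N}\mathbf{u}_t$ term; this requires extra Nishimori cancellations you have not spelled out. The strategy can work, but if you carry it out I would strongly recommend first passing to the paper's decomposition (i.e., applying Nishimori once at the outset to reduce to a quantity linear in the Gaussians) and only then doing the design IBPs.
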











\begin{proof}
First, we note that
$$i'_{p,\epsilon}(t)=\frac{1}{p}\ex[ \mathcal{H}'(t,\bbeta_0,\V,\Y_t,\tilde{\Y}_t,\boldsymbol{\Phi})\ln \mathcal{Z}]+\frac{1}{p}\ex[\langle \mathcal{H}'(t,\bbeta,\v,\Y_t,\tilde{\Y}_t,\boldsymbol{\Phi})\rangle_{t,\epsilon}]$$
where $\mathcal{Z}$ is partition function defined as (\ref{eq:part_inter}) and $\ex$ the joint expectation w.r.t. all
quenched variables (i.e. fixed by the realization of the
problem), namely $(\bbeta_0, \V, \Y_t, \tilde{\Y}_t,\boldsymbol{\Phi})$, or equivalently w.r.t.
$(\bbeta_0, \V, \Z, \tilde{\Z},\boldsymbol{\Phi})$. Here $\mathcal{H}'$ means a time-derivative, given by
\begin{align}
  \mathcal{H}'(t,\bbeta_0,\V,\Y_t,\tilde{\Y}_t,\boldsymbol{\Phi})=\frac{1}{2\sigma^2}\sum_{\mu=1}^NZ_{\mu}\Big(\sqrt{\frac{1}{p(1-t)}}(\boldsymbol{\Phi} \bbeta_0)_{\mu}-\sum_{i=1}^k\frac{r_{2,i}(t)\sqrt{l_i}}{\sqrt{R_{2,i}(t)}}(\LLambda^{(i)}\V^{(i)})_{\mu}\Big)\nonumber\\
  -\frac{1}{2}\sum_{i=1}^{k}\sum_{j=1}^{|I_i|}\tilde{Z}^{(i)}_j\frac{r_{1,i}(t)}{\sqrt{R_{1,i}(t)}}\bbeta^{(i)}_{0,j}.  \label{eq:der_gam}
\end{align}
 Nishimori's identity then yields
$$\ex[\langle \mathcal{H}'(t,\bbeta,\v,\Y_t,\tilde{\Y}_t,\boldsymbol{\Phi})\rangle_{t,\epsilon}]]=\ex[\mathcal{H}'(t,\bbeta_0,\V,\Y_t,\tilde{\Y}_t,\boldsymbol{\Phi})]=0$$
and hence we obtain,
$$i'_{p,\epsilon}(t)=\frac{1}{p}\ex[ \mathcal{H}'(t,\bbeta_0,\V,\Y_t,\tilde{\Y}_t,\boldsymbol{\Phi})\ln \mathcal{Z}].$$
We plug \eqref{eq:der_gam} in the last equation and transform each of the three terms by Gaussian integration by parts. The last term will be integrated by parts  with respect to $\Tilde{Z}_j^{(i)}$
\begin{align*}
&\ex \Big[-\frac{1}{2p}\sum_{i=1}^{k}\sum_{j=1}^{|I_i|}\tilde{Z}^{(i)}_j\frac{r_{1,i}(t)}{\sqrt{R_{1,i}(t)}}\beta^{(i)}_{0,j}\ln \mathcal{Z}\Big]
= \frac{1}{2p}\sum_{i=1}^{k}\sum_{j=1}^{|I_i|}\frac{r_{1,i}(t)}{\sqrt{R_{1,i}(t)}}\ex [\beta^{(i)}_{0,j}\langle\sqrt{R_{1,i}(t)}(\beta^{(i)}_{0,j}-\beta^{(i)}_j)+\tilde{Z}^{(i)}_j\rangle_{t,\epsilon}]\\
&=\frac{1}{2p}\sum_{i=1}^{k}\sum_{j=1}^{|I_i|}r_{1,i}(t)\rho-\frac{1}{2p}\sum_{i=1}^{k}\sum_{j=1}^{|I_i|}r_{1,i}(t)\ex[\langle \beta^{(i)}_{0,j}\beta^{(i)}_j\rangle_{t,\epsilon}]
=\frac{1}{2}\sum_{i=1}^{k}l_ir_{1,i}(t)(\rho-\ex[\langle Q_i\rangle_{t,\epsilon}]).
\end{align*}
Similarly, using Gaussian integration by parts for the second term, this time w.r.t. $V^{(i)}_j$ we obtain 
\begin{align*}&\ex \Big[-\frac{1}{2\sigma^2p}\sum_{\mu=1}^{N}Z_{\mu}\Big(\sum_{i=1}^k\frac{r_{2,i}(t)\sqrt{l_i}}{\sqrt{R_{2,i}(t)}}(\LLambda^{(i)}\V^{(i)})_{\mu}\Big)\ln \mathcal{Z}\Big]\\
&=\sum_{i=1}^{k}\sum_{j,\mu,\nu=1}^{N} \frac{l_i}{2\sigma^4p}\ex \Big[Z_{\mu}r_{2,i}(t)\Lambda^{(i)}_{\mu j}\Lambda^{(i)}_{\nu j}\Big\langle\sqrt{\frac{1-t}{p}}\Big(\boldsymbol{\Phi}(\bbeta_0-\bbeta)\Big)_{\nu}\\
&+\sum_{m=1}^k\sqrt{R_{2,m}(t)l_m}\Big(\LLambda^{(m)}(\V^{(m)}-\v^{(m)})\Big)_{\nu}+Z_{\nu}\Big\rangle_{t,\epsilon}\Big]
=\frac{1}{2\sigma^4p}\sum_{i=1}^{k}\sum_{\nu=1}^N r_{2,i}(t)l_i\ex [(\Z^\intercal\LLambda_{i,N})_{\nu}\langle u_{t,\nu}\rangle_{t,\epsilon}]\\
&=\frac{1}{2\sigma^4p}\sum_{i=1}^{k} r_{2,i}(t)l_i\ex [\langle\Z^\intercal\LLambda_{i,N} \mathbf{u}_{t}\rangle_{t,\epsilon}].
\end{align*}

Finally, another use of Gaussian integration by parts of the first term with respect to  $G^{(i)}_{j\ell}$ yields
\begin{align*}
&\ex \Big[\frac{1}{2\sigma^2p}\sum_{\mu=1}^{N}Z_{\mu}\sqrt{\frac{1}{p(1-t)}}(\boldsymbol{\Phi} \bbeta_0)_{\mu}\ln \mathcal{Z}\Big]
=\frac{1}{2\sigma^2p}\ex \Big[\sum_{\mu=1}^{N}Z_{\mu}\sqrt{\frac{1}{p(1-t)}}\sum_{i=1}^k(\LLambda^{(i)} \mathbf{G}^{(i)} \bbeta_0^{(i)})_{\mu}\ln \mathcal{Z}\Big]\\
&=-\frac{1}{2\sigma^4p^2}\ex \Big[\sum_{i=1}^k\sum_{\mu,\nu,j=1}^{N}\sum_{\ell=1}^{|I_i|}Z_{\mu}\Lambda^{(i)}_{\mu j}\Lambda^{(i)}_{\nu j} \beta^{(i)}_{0,\ell}\Big\langle(\beta^{(i)}_{0,\ell}-\beta^{(i)}_\ell)\Big(\sqrt{\frac{1-t}{p}}\Big(\boldsymbol{\Phi}(\bbeta_0-\bbeta)\Big)_{\nu}\\
&+\sum_{i=m}^k\sqrt{R_{2,m}(t)l_m}\Big(\LLambda^{(m)}(\V^{(m)}-\v^{(m)})\Big)_{\nu}+Z_{\nu}\Big)\Big\rangle_{t,\epsilon}\Big]\\
&=-\frac{1}{2\sigma^4p^2}\ex \Big[\sum_{\nu=1}^{N}\sum_{i=1}^k\sum_{\ell=1}^{|I_i|}(\Z^\intercal\LLambda_{i,N})_{\nu} \langle \beta^{(i)}_{0,\ell}(\beta^{(i)}_{0,\ell}-\beta^{(i)}_\ell)u_{t,\nu}\rangle_{t,\epsilon}\Big]\\
&=-\frac{1}{2\sigma^4p}\ex \Big[\sum_{\nu=1}^{N}\sum_{i=1}^kl_i(\Z^\intercal\LLambda_{i,N})_{\nu} \Big\langle \Big(\frac{1}{|I_i|}\sum_{\ell=1}^{|I_i|}(\beta^{(i)}_{0,\ell})^2-\frac{1}{|I_i|}\sum_{\ell=1}^{|I_i|}\beta^{(i)}_{0,\ell}\beta^{(i)}_\ell\Big)u_{t,\nu}\Big\rangle_{t,\epsilon}\Big]\\
&=-\frac{1}{2\sigma^4p}\ex \Big[\sum_{i=1}^kl_i \langle (\hat{\rho}_i-Q_i)\Z^\intercal\LLambda_{i,N}\mathbf{u}_{t}\rangle_{t,\epsilon}\Big],
\end{align*}
where we define $\hat{\rho}_i$ as a normalised norm of block $\bbeta^{(i)}_{0}$, i. e. $\hat{\rho}_i=\frac{1}{|I_i|}\|\bbeta^{(i)}_{0}\|^2$.
Bringing together all above and adding-subtracting term $(2\sigma^4p)^{-1}\sum_{i=1}^k\ex [l_i\rho \langle\Z^\intercal\LLambda_{i,N}  \mathbf{u}_{t}\rangle_{t,\epsilon}]$,  we have
\begin{align*}
i'_{p,\epsilon}(t)&=\frac{1}{2}\sum_{i=1}^{k}l_ir_{1,i}(t)(\rho-\ex[\langle Q_i\rangle_{t,\epsilon}])+\frac{1}{2\sigma^4p}\sum_{i=1}^{k}l_i\ex[\langle(r_{2,i}(t)-(\rho-Q_i))\Z^\intercal\LLambda_{i,N} \mathbf{u}_{t}\rangle_{t,\epsilon}]\\
& -\frac{1}{2\sigma^4p}\sum_{i=1}^k\ex [l_i(\hat{\rho}_i-\rho)\Z^\intercal\LLambda_{i,N} \langle \mathbf{u}_{t}\rangle_{t,\epsilon}].
\end{align*}
To conclude the proof we show that the third term is $o(1)$. In particular, it suffices to show that 
$$\frac{1}{2\sigma^4p}\ex [(\hat{\rho}_i-\rho)\Z^\intercal\LLambda_{i,N} \langle \mathbf{u}_{t}\rangle_{t,\epsilon}]=o(1).$$
Separating factors with Cauchy-Schwarz and using  Markov inequality, we see immediately that $\ex(\hat{\rho}_i-\rho)^2=O(1/|I_i|)$. Thus, it is enough to show that  $\ex [\langle(2p)^{-2}(\Z^\intercal\LLambda_{i,N}  \mathbf{u}_{t})^2\rangle_{t,\epsilon}]$ is bounded. Note that
$$\ex [\langle(\Z^\intercal\LLambda_{i,N} \mathbf{u}_{t})^2\rangle_{t,\epsilon}]\leq (\ex \|\Z\|^4)^{1/2} \|\LLambda_{i,N}\|^2 (\ex[\langle \|\mathbf{u}_{t}\|^4\rangle_{t,\epsilon}])^{1/2}.$$
Since $(\ex \|\Z\|^4)^{1/2}=O(p)$ and  $\|\LLambda_{i,N}\|^2=O(1)$, we only need to check that $(\ex[\langle \|\mathbf{u}_{t}\|^4\rangle_{t,\epsilon}])^{1/2}=O(p)$.
Recall the definition of $u_{t,\nu}$ from \eqref{utnu}.  We have 
\begin{align*}\ex[\langle \|\mathbf{u}_{t}\|^4\rangle_{t,\epsilon}]&\leq C\ex\Big(\frac{1}{p^2}\Big(\|\boldsymbol{\Phi} \bbeta_0\|^4+\langle\|\boldsymbol{\Phi} \bbeta\|^4\rangle_{t,\epsilon}\Big)+\sum_{i=1}^kR^{2}_{2,i}(t)l^2_i(\|\LLambda^{(i)}\V^{(i)}\|^4+\langle\|\LLambda^{(i)}\v^{(i)}\|^4\rangle_{t,\epsilon})+\|\Z\|^4\Big)\\
&\leq C\ex\Big(\frac{1}{p^2}\|\boldsymbol{\Phi} \bbeta_0\|^4+\sum_{i=1}^kR^{2}_{2,i}(t)l_i\|\LLambda^{(i)}\V^{(i)}\|^4+\|\Z\|^4\Big)
\end{align*}
where the last inequality follows from the Nishimori identity.
One can also show by direct computation that $\ex[\|\boldsymbol{\Phi}\|_F^4]=O(p^2)$. Finally, observing that $\ex[\|\bbeta_0\|^4],\ex[\|\Z\|^4]$ are $O(p^2)$ and $\ex[\|\V^{(i)}\|^4]$ is $O(p^4)$,  we obtain $(\ex[\langle \|\mathbf{u}_{t}\|^4\rangle_{t,\epsilon}])^{1/2}=O(p)$ which was required to be shown.

 \end{proof}

\paragraph{Step 2}:
Next we show that the second term in the right-hand side of \eqref{info_deriv} is $o(1)$ by choosing  $r_{2,i}(t)=\rho-\ex[\langle Q_i\rangle_{t,\epsilon}]$ and then showing that $Q_i$ concentrates around $\ex[\langle Q_i\rangle_{t,\epsilon}]$ in a suitable sense. In particular we show the overlap concentration (concentration of $Q_i$) in the following lemma:

\begin{lemma}[Overlap concentration]\label{lemma:overlap_concentr}
    Assume that for any $t\in(0,1)$ the map $\boldsymbol{\epsilon}=(\epsilon_{1,1},\ldots,\epsilon_{1,k}, \epsilon_{2,1},\ldots$
    $\ldots\epsilon_{2,k})\in[s_p,2s_p]^{2k}\mapsto \mathbf{R}(t,\epsilon)=(R_{1,i}(t,\epsilon),\ldots,R_{2,k}(t,\epsilon))$ is a $\mathcal{C}^1$ diffeomorphism with Jacobian determinant greater or equal to 1. Then we can find sequence $s_p$ going to 0 slowly enough such that there exists constants $C$ and $\gamma$ that depend only on support and moments of prior distribution $P_0$ and $c$ for which:
    \begin{align}
        \frac{1}{s_p^{2k}}\int_{[s_p,2s_p]^{2k}}d\epsilon\int_0^1dt\ex\langle(Q_i-\ex\langle Q_i\rangle_{t,\epsilon})^2\rangle_{t,\epsilon}\leq Cp^{-\gamma},\quad\text{for }i=1,\ldots, k.
    \end{align}
\end{lemma}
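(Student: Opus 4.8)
The plan is to use the standard two-level decomposition of the overlap fluctuations into a \emph{thermal} (posterior) part and a \emph{quenched} (disorder) part,
\[
\ex\langle(Q_i-\ex\langle Q_i\rangle_{t,\epsilon})^2\rangle_{t,\epsilon}=\ex\langle(Q_i-\langle Q_i\rangle_{t,\epsilon})^2\rangle_{t,\epsilon}+\ex\big(\langle Q_i\rangle_{t,\epsilon}-\ex\langle Q_i\rangle_{t,\epsilon}\big)^2 ,
\]
and to bound the average over $(\boldsymbol{\epsilon},t)\in[s_p,2s_p]^{2k}\times[0,1]$ of each term separately. The point of the side-channels $\tilde{\Y}_t^{(i)}=\sqrt{R_{1,i}(t)}\,\bbeta_0^{(i)}+\tilde{\Z}^{(i)}$ is that $\langle Q_i\rangle_{t,\epsilon}$ and its fluctuations are governed by the first and second derivatives of the interpolating free energy $f_{p,\epsilon}(t)$ with respect to $R_{1,i}(t)$. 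Since, by hypothesis, $\boldsymbol{\epsilon}\mapsto\mathbf{R}(t,\epsilon)$ is a $\mathcal{C}^1$ diffeomorphism with Jacobian $\ge 1$, for any nonnegative integrand one has $\int d\boldsymbol{\epsilon}\,(\cdot)\le\int d\mathbf{R}(t)\,(\cdot)$ on the image, so it suffices to prove both bounds for the model parametrised directly by $\mathbf{R}(t)$ and then integrate over a box of side at most $2s_p$ in the $\mathbf{R}$-variables; this decouples the estimate from the adaptive $\boldsymbol{\epsilon}$-dependence of the interpolation functions $r_{1,j},r_{2,j}$.

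\emph{Thermal part.} Gaussian integration by parts, carried out exactly as in the proof of Lemma~\ref{der_lemma} but differentiating in $R_{1,i}$ rather than in $t$, gives $\partial_{R_{1,i}}f_{p,\epsilon}=-\frac{|I_i|}{2p}\big(\rho-\ex\langle Q_i\rangle_{t,\epsilon}\big)+o(1)$, and differentiating once more, together with the Nishimori identity to absorb the $\tilde{\Z}^{(i)}$-linear contributions, yields an inequality of the form $\partial_{R_{1,i}}\ex\langle Q_i\rangle_{t,\epsilon}\ge c_0\,|I_i|\,\ex\langle(Q_i-\langle Q_i\rangle_{t,\epsilon})^2\rangle_{t,\epsilon}-C$, with $c_0,C$ depending only on $\mathrm{supp}\,P_0$ and on $\|\LLambda^{(i)}\|=O(1)$ (recall $|\lambda_i|<1$). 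Integrating this over $R_{1,i}\in[s_p,2s_p]$ and using $0\le\ex\langle Q_i\rangle_{t,\epsilon}\le\rho$ gives $\int_{s_p}^{2s_p}\ex\langle(Q_i-\langle Q_i\rangle_{t,\epsilon})^2\rangle_{t,\epsilon}\,dR_{1,i}\le C/|I_i|=O(1/p)$. Bounding the remaining $2k-1$ integrations trivially (the overlap is bounded since $\mathrm{supp}\,P_0$ is compact) and the $t$-integration over $[0,1]$, then dividing by $s_p^{2k}$, the thermal contribution is $O\!\big(1/(p\,s_p)\big)$.

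\emph{Quenched part.} First one establishes concentration of the interpolating free energy, $\ex\big(f_{p,\epsilon}(t)-\ex f_{p,\epsilon}(t)\big)^2=O(1/p)$, by applying the Gaussian--Poincar\'e inequality to the Gaussian variables $\V^{(i)},\Z,\tilde{\Z}^{(i)}$ and to the i.i.d.\ Gaussian entries of the blocks $\mathbf{G}^{(i)}$ building $\boldsymbol{\Phi}$ — the gradients of $\frac1p\ln\mathcal{Z}_t$ in these variables being bounded uniformly in $p$ thanks to $\|\LLambda^{(i)}\|=O(1)$ and compactness of $\mathrm{supp}\,P_0$ — together with an Efron--Stein (bounded-difference) estimate for the discrete-support vector $\bbeta_0$. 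Next, using that $R_{1,i}\mapsto p\,f_{p,\epsilon}(t)$ is convex up to an explicit affine correction, the standard argument converting concentration of a convex function into concentration of its derivative over a vanishing interval upgrades the above to $\int_{s_p}^{2s_p}\ex\big(\partial_{R_{1,i}}f_{p,\epsilon}-\ex\partial_{R_{1,i}}f_{p,\epsilon}\big)^2\,dR_{1,i}=o(1)$, with an explicit rate in $s_p,p$. Since $\partial_{R_{1,i}}f_{p,\epsilon}=-\frac{|I_i|}{2p}\big(\hat\rho_i-\langle Q_i\rangle_{t,\epsilon}\big)$ plus a $\tilde{\Z}^{(i)}$-linear term, with $\hat\rho_i$ and that extra term concentrating by elementary arguments and $|I_i|\sim l_i p$, this gives $\frac{1}{s_p^{2k}}\int_{[s_p,2s_p]^{2k}}d\boldsymbol{\epsilon}\int_0^1 dt\;\ex\big(\langle Q_i\rangle_{t,\epsilon}-\ex\langle Q_i\rangle_{t,\epsilon}\big)^2=o(1)$ with a rate of the same type.

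Putting the two pieces together, the left-hand side of the lemma is $O\!\big(1/(p\,s_p)+s_p^{\,a}+1/(p\,s_p^{\,b})\big)$ for some fixed $a,b>0$; choosing $s_p=p^{-\alpha}$ with $\alpha>0$ small enough then yields the bound $Cp^{-\gamma}$ for some $\gamma>0$ and a constant $C$ depending only on the support and moments of $P_0$ and on $c$. I expect the quenched step to be the main obstacle: proving free-energy concentration for the \emph{structured}, non-rotationally-invariant design $\boldsymbol{\Phi}=(\LLambda^{(i)}\mathbf{G}^{(i)})_{i\le k}$ so that the Poincar\'e gradient bounds absorb the operator norms $\|\LLambda^{(i)}\|$, and then carrying out the convexity-to-derivative passage while keeping the change of variables to $\mathbf{R}(t)$ — and hence the Jacobian-$\ge1$ hypothesis — under control, since the interpolation functions themselves depend on $\boldsymbol{\epsilon}$.
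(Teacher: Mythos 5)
Your overall strategy — split into thermal and quenched fluctuations, control the quenched part via free-energy concentration (Poincar\'e / bounded-difference) plus a convexity-to-derivative passage over the small $\boldsymbol{\epsilon}$-box, and use the Jacobian~$\ge 1$ hypothesis to trade the $\boldsymbol{\epsilon}$-integral for an $\mathbf{R}$-integral — is the same as the paper's, which in turn defers to the analogous lemmas in the adaptive-interpolation references. Your identification of the potential obstruction (free-energy concentration for the non-rotationally-invariant block design) is also correct, and the paper handles it with exactly the kind of argument you sketch (Proposition~\ref{prop:concent_free_energy} in Appendix~\ref{apx:concen_z}).

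The genuine gap is in your thermal step. The paper does not estimate $\ex\langle (Q_i-\langle Q_i\rangle)^2\rangle$ directly; instead it introduces the proxy
$\mathcal{L}_i=\frac{1}{|I_i|}\sum_{j}\big(\tfrac12(\beta^{(i)}_j)^2-\beta^{(i)}_j\beta^{(i)}_{0,j}-\tfrac{\beta^{(i)}_j\tilde Z^{(i)}_j}{2\sqrt{R_{1,i}}}\big)$,
which is (up to a $\bbeta$-independent shift) $|I_i|^{-1}\partial_{R_{1,i}}\mathcal H$, then proves the chain
$\ex\langle(Q_i-\ex\langle Q_i\rangle)^2\rangle\le 4\,\ex\langle(\mathcal L_i-\ex\langle\mathcal L_i\rangle)^2\rangle$,
$\ex\langle(\mathcal L_i-\langle\mathcal L_i\rangle)^2\rangle\le C/p$, and $\ex(\langle\mathcal L_i\rangle-\ex\langle\mathcal L_i\rangle)^2\le C p^{-\gamma}$ (the last after $\boldsymbol{\epsilon}$-averaging). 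The reason one passes through $\mathcal L_i$ is precisely that the derivative you invoke does not directly produce a variance of $Q_i$: since $\partial_{R_{1,i}}\mathcal H=|I_i|\mathcal L_i+\text{const}$, one has $\partial_{R_{1,i}}\ex\langle Q_i\rangle=-|I_i|\,\ex\big[\langle Q_i\mathcal L_i\rangle-\langle Q_i\rangle\langle\mathcal L_i\rangle\big]$, a posterior \emph{cross-}covariance of $Q_i$ with $\mathcal L_i$. Because $\mathcal L_i=-Q_i+\tfrac{1}{2|I_i|}\|\bbeta^{(i)}\|^2-\tfrac{1}{2|I_i|\sqrt{R_{1,i}}}\bbeta^{(i)\intercal}\tilde\Z^{(i)}$, this produces $|I_i|\,\mathrm{Var}_{\rm post}(Q_i)$ \emph{plus} cross-covariances with the self-overlap and with the $\tilde\Z$-linear term that are a priori of the same order; your claimed lower bound $\partial_{R_{1,i}}\ex\langle Q_i\rangle\ge c_0|I_i|\ex\langle(Q_i-\langle Q_i\rangle)^2\rangle-C$ therefore does not follow from Cauchy--Schwarz alone and is not justified. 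Controlling those cross-terms is exactly what the $\mathcal L_i$-based inequalities (equation (111) of \cite{barbier2019adaptive} and Lemmas~29--30 of \cite{barbier2019optimal}) are engineered to do, and it is the one place where the structure of the side channel and the Nishimori identities must be exploited carefully; without some version of those identities your thermal estimate has a hole.
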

    \begin{proof}
        The proof follows in the same vein as the concentration of overlaps proved \cite{barbier2019adaptive,barbier2019optimal}, hence we outline the steps and omit the details. Indeed it is enough to show that for all $\boldsymbol{\epsilon}$

    \begin{align}
        \ex\langle(Q_i-\ex\langle Q_i\rangle_{t,\epsilon})^2\rangle_{t,\epsilon}\leq Cp^{-\gamma},\quad\text{for }i=1,\ldots, k. \label{overlap_conc_interpol}
    \end{align}
   Define $\mathcal{L}_i=\frac{1}{|I_i|}\sum_{j=1}^{|I_i|}\Big(\frac{(\beta^{(i)}_{j})^2}{2}-\beta^{(i)}_j\beta^{(i)}_{0,j}-\frac{\beta^{(i)}_{j}\tilde{Z}^{(i)}_j}{2\sqrt{R_{1,i}}}\Big)$. Using the fact  that the free energy concentrates (see Proposition~\ref{prop:concent_free_energy}), i.e.
\begin{equation}\ex\Big[\Big|\frac{1}{p}\log \mathcal{Z}-\ex\Big[\frac{1}{p}\log \mathcal{Z}\Big]\Big|^2\Big]\leq Cp^{-1}\label{conc_free_energy},\end{equation}
 one can show that the following inequalities hold:
\begin{equation}
    \ex\langle(Q_i-\ex\langle Q_i\rangle_{t,\epsilon})^2\rangle_{t,\epsilon}\leq 4 \ex\langle(\mathcal{L}_i-\ex\langle \mathcal{L}_i\rangle_{t,\epsilon})^2\rangle_{t,\epsilon}\label{overlap_bound}
    \end{equation}

\begin{equation}
     \ex[\langle(\mathcal{L}_i-\langle \mathcal{L}_i\rangle_{t,\epsilon})^2\rangle_{t,\epsilon}]\leq Cp^{-1}\label{ell1_bound}
    \end{equation}

\begin{equation}
    \ex[(\langle\mathcal{L}_i\rangle_{t,\epsilon}-\ex\langle \mathcal{L}_i\rangle_{t,\epsilon})^2] \leq Cp^{-\gamma}\label{ell2_bound}
    \end{equation}

The first inequality follows from the proof of equation (111) of \cite{barbier2019adaptive} and the last two inequalities follow from the proof of lemma 29 and 30 from \cite{barbier2019optimal}, respectively. We note that the same proof works because of the separability of the Hamiltonian in $R_{1,i}(t)$ and $\tilde{\Z}^{(i)}$. In particular, since $R_{1,i}(t)$ and $\tilde{\Z}^{(i)}$ only appears in the Hamiltonian through the term $\|\tilde{\Y}^{(i)}_t-\sqrt{R_{1,i}(t)}\bbeta^{(i)}\|^2=\|\tilde{\Z}^{(i)}+\sqrt{R_{1,i}(t)}(\bbeta^{(i)}_0-\bbeta^{(i)})\|^2$, the  differentiation with respect to $R_{1,i}(t)$ and Gaussian integration by parts with respect to $\tilde{Z}^{(i)}_j$ proceeds along the same lines as in the original proofs given in  \cite{barbier2019adaptive,barbier2019optimal}.


\end{proof}

\paragraph{Step 3}: From Lemmas \ref{der_lemma} and \ref{lemma:overlap_concentr} we obtain the following formula for normalized mutual information in terms of the interpolating functions $r_{1,i}(\cdot), r_{2,i}(\cdot)$.

\begin{lemma}[Fundamental identity]\label{lem:fund_iden}
    We take $\boldsymbol{\epsilon}=(\epsilon_{1,1},\ldots,\epsilon_{1,k}, \epsilon_{2,1},\ldots,\epsilon_{2,k})\in[s_p,2s_p]^{2k}\mapsto \mathbf{R}(t,\boldsymbol{\epsilon})$ where $\mathbf{R}(t,\boldsymbol{\epsilon})=(R_{1,1}(t,\boldsymbol{\epsilon}),\ldots,R_{2,k}(t,\boldsymbol{\epsilon}))$ and the sequence $s_p$  as in  Lemma~\ref{lemma:overlap_concentr}. Assume that for all $t\in [0,1]$ and $\boldsymbol{\epsilon}\in[s_p,2s_p]^{2k}$ we have $r_{2,i}(t,\boldsymbol{\epsilon})=\rho-\ex\langle Q_i\rangle_{t,\epsilon}$. Then
    \begin{multline}
     i_p= \frac{1}{s_{p}^{2k}}\int_{[s_p,2s_p]^{2k}}d\boldsymbol{\epsilon}\Big\{\sum_{i=1}^k l_iI_1\Big(\int_0^1 
 r_{1,i}(t,\boldsymbol{\epsilon})dt\Big)+I_2\big(\int_0^1 
 r_{2,1}(t,\boldsymbol{\epsilon})dt,\ldots,\int_0^1 
 r_{2,k}(t,\boldsymbol{\epsilon})dt\Big)\\
 -\frac{1}{2}\sum_{i=1}^k\int_0^1l_i r_{1,i}(t,\boldsymbol{\epsilon})r_{2,i}(t,\boldsymbol{\epsilon})dt\Big\}+o(1).
    \end{multline}
\end{lemma}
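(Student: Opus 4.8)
The plan is to assemble the three ingredients already in place: the boundary identity $i_p = i_{p,\epsilon}(0) + o(1)$, the sum rule $i_{p,\epsilon}(0) = i_{p,\epsilon}(1) - \int_0^1 i'_{p,\epsilon}(t)\,dt$, the closed form $i_{p,\epsilon}(1) = \sum_{i=1}^k l_i I_1(R_{1,i}(1)) + I_2(\mathbf{R}_2(1))$, and the derivative formula \eqref{info_deriv} of Lemma~\ref{der_lemma}. The first move is to impose the choice $r_{2,i}(t,\boldsymbol{\epsilon}) = \rho - \ex\langle Q_i\rangle_{t,\epsilon}$ for every $i$. This turns the first term of \eqref{info_deriv} into exactly $\tfrac12\sum_i l_i r_{1,i}(t)\,r_{2,i}(t)$ and makes the prefactor $r_{2,i}(t) - (\rho - Q_i)$ of the second term equal to the centered overlap $Q_i - \ex\langle Q_i\rangle_{t,\epsilon}$. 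Since this is an implicit equation for $R_{2,i}$ (the Gibbs bracket $\langle\cdot\rangle_{t,\epsilon}$ depends on $R_{2,i}$ itself), one must first invoke the Cauchy--Lipschitz/Picard argument of \cite{barbier2019adaptive,barbier2019optimal} to produce a $\mathcal{C}^1$ solution $t\mapsto R_{2,i}(t)$ and to verify that $\boldsymbol{\epsilon}\mapsto\mathbf{R}(t,\boldsymbol{\epsilon})$ is a $\mathcal{C}^1$ diffeomorphism whose Jacobian determinant is at least $1$, which is precisely the hypothesis under which Lemma~\ref{lemma:overlap_concentr} applies; I would state this and refer to those works for the details.

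The second step is to show that the residual term of \eqref{info_deriv} is negligible after integration in $t$ and averaging in $\boldsymbol{\epsilon}$, namely that
$$\frac{1}{s_p^{2k}}\int_{[s_p,2s_p]^{2k}}\!d\boldsymbol{\epsilon}\int_0^1\!dt\;\frac{1}{2\sigma^4 p}\sum_{i=1}^k l_i\,\ex\big[\langle(Q_i - \ex\langle Q_i\rangle_{t,\epsilon})\,\Z^\intercal(\LLambda^{(i)})^2 \mathbf{u}_t\rangle_{t,\epsilon}\big] = o(1).$$
Cauchy--Schwarz under $\ex\langle\cdot\rangle_{t,\epsilon}$ bounds the bracket by $\big(\ex\langle(Q_i - \ex\langle Q_i\rangle_{t,\epsilon})^2\rangle_{t,\epsilon}\big)^{1/2}\big(\ex\langle(\Z^\intercal(\LLambda^{(i)})^2\mathbf{u}_t)^2\rangle_{t,\epsilon}\big)^{1/2}$, and the second factor is $O(p)$ uniformly in $t,\boldsymbol{\epsilon}$ by the same fourth-moment estimates on $\|\Z\|$, $\|\bbeta_0\|$, $\|\V^{(i)}\|$, $\|\boldsymbol{\Phi}\|_F$ and the Nishimori identity already used at the end of the proof of Lemma~\ref{der_lemma} (together with $\|\LLambda^{(i)}\| = O(1)$). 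Pulling that factor out and applying Cauchy--Schwarz (Jensen) once more to the normalized $(t,\boldsymbol{\epsilon})$-average leaves $O(p)\cdot\big(\tfrac{1}{s_p^{2k}}\int d\boldsymbol{\epsilon}\int_0^1 dt\,\ex\langle(Q_i - \ex\langle Q_i\rangle_{t,\epsilon})^2\rangle_{t,\epsilon}\big)^{1/2}$, which by Lemma~\ref{lemma:overlap_concentr} is $O(p)\cdot O(p^{-\gamma/2})$; dividing by $p$ gives $O(p^{-\gamma/2}) = o(1)$. The $o(1)$ remainder already present inside \eqref{info_deriv} is uniform in $(t,\boldsymbol{\epsilon})$ (it is $O(p^{-1/2})$ by the estimate in the proof of Lemma~\ref{der_lemma}), so it also survives these averages.

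Putting this together and integrating \eqref{info_deriv} over $t\in[0,1]$ in the sum rule yields, for every fixed $\boldsymbol{\epsilon}$,
$$i_p = \sum_{i=1}^k l_i I_1(R_{1,i}(1)) + I_2(\mathbf{R}_2(1)) - \frac{1}{2}\sum_{i=1}^k\int_0^1 l_i\, r_{1,i}(t,\boldsymbol{\epsilon})\, r_{2,i}(t,\boldsymbol{\epsilon})\,dt + o(1).$$
Now $R_{1,i}(1) = \epsilon_{1,i} + \int_0^1 r_{1,i}(t)\,dt$ and $R_{2,i}(1) = \epsilon_{2,i} + \int_0^1 r_{2,i}(t)\,dt$ with $\epsilon_{\cdot,i}\in[s_p,2s_p]$. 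Since $I_1$ is Lipschitz (being, up to normalization, the mutual information of a scalar Gaussian channel of bounded input power, with derivative $\tfrac12\,\text{mmse}\le\rho/2$) and $I_2$ has bounded gradient on the relevant compact range of its arguments — manifest from the explicit expression \eqref{eq:I2}, since each $\delta_i(\theta)$ is bounded on $[0,\pi]$ and each $r_{2,i}(t)\in[0,\rho]$ — one may replace $R_{1,i}(1)$ by $\int_0^1 r_{1,i}(t)\,dt$ and $\mathbf{R}_2(1)$ by $\big(\int_0^1 r_{2,i}(t)\,dt\big)_i$ at the cost of $O(s_p) = o(1)$. Finally, since the left-hand side $i_p$ does not depend on $\boldsymbol{\epsilon}$, averaging this identity against the normalized Lebesgue measure $s_p^{-2k}\,d\boldsymbol{\epsilon}$ on $[s_p,2s_p]^{2k}$ leaves the left side unchanged and reproduces exactly the claimed formula.

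The main obstacle is the second step: one must organize the two Cauchy--Schwarz applications so that the $t$- and $\boldsymbol{\epsilon}$-averages land on the overlap factor — the only form in which Lemma~\ref{lemma:overlap_concentr} delivers smallness — while the $\mathbf{u}_t$-moment factor is controlled uniformly, and one must verify that the implicitly defined $r_{2,i}$ yields an admissible interpolation meeting the Jacobian-determinant hypothesis of Lemma~\ref{lemma:overlap_concentr}. Both points are standard within the adaptive-interpolation framework of \cite{barbier2019adaptive,barbier2019optimal}, but this is where the genuine technical care lies.
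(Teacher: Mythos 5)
Your proof fills in details the paper leaves implicit (the paper states the fundamental identity as an immediate consequence of Lemmas~\ref{der_lemma} and~\ref{lemma:overlap_concentr} without writing out the argument), and the route you take is exactly the intended adaptive-interpolation one: set $r_{2,i}(t,\boldsymbol{\epsilon})=\rho-\ex\langle Q_i\rangle_{t,\epsilon}$ so the first term of \eqref{info_deriv} becomes $\tfrac12\sum_i l_i r_{1,i}r_{2,i}$ and the second term carries the centered overlap, apply Cauchy--Schwarz to isolate $\ex\langle(Q_i-\ex\langle Q_i\rangle)^2\rangle$ against the uniformly $O(p^2)$ moment $\ex\langle(\Z^\intercal(\LLambda^{(i)})^2\mathbf{u}_t)^2\rangle$, invoke overlap concentration after averaging over $(t,\boldsymbol{\epsilon})$, and absorb the $\boldsymbol{\epsilon}$-shift in $R_{a,i}(1)$ into an $O(s_p)=o(1)$ error using the Lipschitz bounds on $I_1$ and $I_2$. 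The one point worth stating explicitly in a write-up is that the diffeomorphism/Jacobian hypothesis of Lemma~\ref{lemma:overlap_concentr} is an assumption on the interpolation functions (verified later in the proofs of the upper and lower bounds via the Cauchy--Lipschitz and Liouville arguments), which you correctly flag.
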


\paragraph{Step 4}: In the last step we obtain the asymptotic expression for the normalized mutual information as a min-max problem. 
\begin{lemma}[Upper bound]{\label{lemma_ub}}
    \begin{align}\label{eq:upper_bound}
        \limsup_{p\rightarrow \infty}i_p\leq \inf_{\mathbf{r}_1\in[0,+\infty)^k}\sup_{\mathbf{r}_2\in[0,\rho]^k} i_{\rm RS}(\mathbf{r}_1,\mathbf{r}_2)
    \end{align}
\end{lemma}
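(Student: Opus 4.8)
The plan is to use the Fundamental Identity of Lemma~\ref{lem:fund_iden} as a starting point and make a judicious choice of interpolation functions $r_{1,i}(\cdot,\boldsymbol{\epsilon})$ so that the bracketed expression becomes, after the $o(1)$, bounded above by $i_{\rm RS}$ evaluated at some point in the admissible domain. Concretely, fix an arbitrary target vector $\mathbf{r}_1 \in [0,+\infty)^k$. I would define the interpolation functions by the ODE system obtained by setting $r_{1,i}(t,\boldsymbol{\epsilon})$ equal to the constant $r_{1,i}$ and letting $r_{2,i}(t,\boldsymbol{\epsilon}) = \rho - \ex\langle Q_i\rangle_{t,\epsilon}$ be defined self-consistently through the interpolating posterior (as required by Lemma~\ref{lem:fund_iden}); one must verify that this coupled system admits a solution and that the induced map $\boldsymbol{\epsilon}\mapsto \mathbf{R}(t,\boldsymbol{\epsilon})$ is a $\mathcal{C}^1$ diffeomorphism with Jacobian $\ge 1$, exactly as required for the overlap concentration Lemma~\ref{lemma:overlap_concentr}. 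This is a standard feature of the adaptive interpolation method: the dependence of $R_{2,i}(t,\cdot)$ on $\epsilon_{2,i}$ through an integral of a nonnegative, Lipschitz-in-$\boldsymbol{\epsilon}$ quantity keeps the Jacobian lower-triangular with unit diagonal up to a nonnegative correction.

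Next I would feed this choice into Lemma~\ref{lem:fund_iden}. The term $\sum_i l_i I_1(\int_0^1 r_{1,i}\,dt) = \sum_i l_i I_1(r_{1,i}) = \sum_i l_i I(\beta;\sqrt{r_{1,i}}\beta+Z)$ is already exactly the last sum in $i_{\rm RS}(\mathbf{r}_1,\mathbf{r}_2)$. For the $I_2$ term, I would use concavity: since $I_2(\mathbf{R}_2(1)) = \frac{c}{2\pi}\int_0^\pi \ln(\sigma^{-2}\sum_i l_i\delta_i(\theta)R_{2,i}(1)+1)\,d\theta$ is concave in $\mathbf{R}_2(1)$, and $R_{2,i}(1) = \epsilon_{2,i} + \int_0^1 r_{2,i}(t,\boldsymbol{\epsilon})\,dt = \epsilon_{2,i} + \int_0^1 (\rho - \ex\langle Q_i\rangle_{t,\epsilon})\,dt$, I would apply Jensen to pull the $t$-integral outside. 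Likewise the cross term $-\frac12\sum_i l_i \int_0^1 r_{1,i} r_{2,i}(t,\boldsymbol{\epsilon})\,dt = -\frac12 \sum_i l_i r_{1,i}\int_0^1 r_{2,i}(t,\boldsymbol{\epsilon})\,dt$ is linear in the same average. Setting $\bar r_{2,i}(\boldsymbol{\epsilon}) := \int_0^1 r_{2,i}(t,\boldsymbol{\epsilon})\,dt \in [0,\rho]$ (it lies in $[0,\rho]$ because $Q_i$ and $\rho-Q_i$ are controlled by the compact support of $P_0$ via the Nishimori identity, up to negligible corrections), the whole bracket is bounded above by $i_{\rm RS}(\mathbf{r}_1, \bar{\mathbf{r}}_2(\boldsymbol{\epsilon})) + o(1)$ plus a harmless $O(s_p)$ coming from the $\epsilon_{2,i}$ shifts and the difference between $\hat\rho_i$ and $\rho$. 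Since $\bar{\mathbf{r}}_2(\boldsymbol{\epsilon}) \in [0,\rho]^k$, we have $i_{\rm RS}(\mathbf{r}_1,\bar{\mathbf{r}}_2(\boldsymbol{\epsilon})) \le \sup_{\mathbf{r}_2\in[0,\rho]^k} i_{\rm RS}(\mathbf{r}_1,\mathbf{r}_2)$, and averaging over $\boldsymbol{\epsilon}$ in the hyper-rectangle of vanishing volume preserves this bound. Taking $\limsup_{p\to\infty}$ then $\inf$ over $\mathbf{r}_1$ yields the claimed inequality.

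The main obstacle is the justification that one can legitimately choose $r_{2,i}(t,\boldsymbol{\epsilon}) = \rho - \ex\langle Q_i\rangle_{t,\epsilon}$ simultaneously with a prescribed $r_{1,i}$ while retaining the diffeomorphism/Jacobian property needed for Lemma~\ref{lemma:overlap_concentr}: this is a fixed-point-in-function-space argument (the interpolating posterior depends on $R_{2,i}$, which depends on $r_{2,i}$, which depends on the posterior), and one must invoke regularity of $\mathbf{R}\mapsto \ex\langle Q_i\rangle$ — boundedness, Lipschitz continuity in $\boldsymbol{\epsilon}$ — typically established via a Picard-type iteration or via the general existence results for such ODEs in the adaptive interpolation literature \cite{barbier2019adaptive,barbier2019optimal}. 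A secondary technical point is bounding uniformly in $\boldsymbol{\epsilon}$ the error terms (the $\hat\rho_i - \rho$ term from Lemma~\ref{der_lemma} and the contribution of the $\boldsymbol{\epsilon}$-shift), which follows from the compact support of $P_0$ and the operator-norm bound $\|\LLambda_{i,N}\| = O(1)$ established earlier for KMS matrices, but must be tracked carefully so that it survives the $s_p^{-2k}$-normalized integral.
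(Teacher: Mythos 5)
Your proposal follows essentially the same route as the paper: fix a constant $\mathbf{r}_1\in[0,+\infty)^k$, take $r_{2,i}(t,\boldsymbol{\epsilon})=\rho-\ex\langle Q_i\rangle_{t,\epsilon}$, verify existence/regularity of the induced ODE system and the Jacobian lower bound via Liouville's formula so that the overlap concentration and Fundamental Identity apply, then bound by $\sup_{\mathbf{r}_2\in[0,\rho]^k}i_{\rm RS}(\mathbf{r}_1,\mathbf{r}_2)$ and take the infimum over $\mathbf{r}_1$. One small remark: the appeal to concavity and Jensen for the $I_2$ term is unnecessary here --- since $r_{1,i}$ is constant in $t$, the bracket in the Fundamental Identity is already \emph{exactly} $i_{\rm RS}(\mathbf{r}_1,\int_0^1\mathbf{r}_2(t,\boldsymbol{\epsilon})dt)$ up to the $O(s_p)$ shift, with no inequality needed; Jensen/concavity of $I_2$ is only needed in the lower-bound lemma.
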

\begin{proof}
    The proof of this Lemma is   very similar to Proposition 2.1 in  \cite{barbier2018mutual}, so we will just go through the main steps. We fix $\mathbf{r}_1(t)=(r_{1,1},\ldots,r_{1,k})$ with non-negative values of $r_{1,i}$ and, for $\boldsymbol{\epsilon}\in[s_p,2s_p]^{2k}$, we take the solution $\mathbf{R}(t,\boldsymbol{\epsilon})$ of first order differential equation: $\partial_t R_{1,i}(t)=r_{1,i}=:F_{1,i}$, $\partial_tR_{2,i}(t)=\rho-\ex\langle Q_i\rangle_{t,\epsilon}=:F_{2,i}(t,\mathbf{R}(t))$. One can easily check that  solution of such equation exists and is unique, since  $\mathbf{F}=(F_{1,1},\ldots,F_{2,k})$ is continuous with continuous non-negative partial derivatives with respect to $R_{a,i}$ for $a=1,2$ and $i=1,\ldots,k$ (see Proposition 6 in \cite{barbier2019optimal}). From this also follows that $\mathbf{R}(t,\boldsymbol{\epsilon})$ is  $\mathcal{C}^1$ in each argument. Moreover, by the Liouville formula, the Jacobian of $\boldsymbol{\epsilon}\rightarrow \mathbf{R}(t,\boldsymbol{\epsilon})$ is $\exp\{\int_0^t\sum_{i=1}^k\partial_{R_{2,i}}\mathbb{F}_2(s,\mathbf{R}(s,\boldsymbol{\epsilon}))\}\geq 1$ which yields that $\mathbf{R}_2$ is a $\mathcal{C}^1$ diffeomorphism. This allows as to apply Lemma~\ref{lem:fund_iden} which, together with (\ref{eq:replica_potential}) and (\ref{eq:I2}), gives
    \begin{align}
        i_p=\frac{1}{s_{p}^{2k}}\int_{[s_p,2s_p]^{2k}} i_{\rm RS}(\mathbf{r}_1,\int_0^1 \mathbf{r}_2(t,\boldsymbol{\epsilon})dt)d\boldsymbol{\epsilon}+o(1)\leq \sup_{\mathbf{r}_2\in[0,\rho]^k}i_{\rm RS}(\mathbf{r}_1, \mathbf{r}_2)+o(1).
    \end{align}

Since it holds for all $r_{1,i}\geq0$ we obtain immediately  (\ref{eq:upper_bound}).
\end{proof}

\begin{lemma}[Lower bound]
    \begin{align}
          \liminf_{p\rightarrow \infty}i_p\geq \inf_{\mathbf{r}_1\in[0,+\infty)^k}\sup_{\mathbf{r}_2\in[0,\rho]^k} i_{\rm RS}(\mathbf{r}_1,\mathbf{r}_2)
    \end{align}
\end{lemma}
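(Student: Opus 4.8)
The plan is to close the adaptive interpolation argument by supplementing the mandated choice $r_{2,i}(t,\boldsymbol{\epsilon})=\rho-\ex[\langle Q_i\rangle_{t,\epsilon}]$ with a second, \emph{adaptive} choice of the remaining interpolation functions $r_{1,i}(\cdot)$, following the lower-bound scheme of \cite{barbier2018mutual,barbier2019adaptive,barbier2019optimal}. Set $\psi_i(\mathbf v):=\frac{c}{\pi}\int_0^\pi\frac{\delta_i(\theta)\,d\theta}{\sum_{j\le k}l_j v_j\delta_j(\theta)+\sigma^2}$ for $\mathbf v\in[0,\rho]^k$, so that $\psi_i(\mathbf v)=\frac{2}{l_i}\partial_{v_i}I_2(\mathbf v)$ and the two relations defining $\Gamma$ in \eqref{fixed_point} read $r_{2,i}=\text{mmse}(\beta|\sqrt{r_{1,i}}\beta+Z)$ and $r_{1,i}=\psi_i(\mathbf r_2)$. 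I would let $\mathbf R(t,\boldsymbol{\epsilon})$ solve the coupled system $\partial_t R_{1,i}(t)=\psi_i\big((\rho-\ex[\langle Q_j\rangle_{t,\epsilon}])_{j\le k}\big)$, $\partial_t R_{2,i}(t)=\rho-\ex[\langle Q_i\rangle_{t,\epsilon}]$ with $\mathbf R(0)=\boldsymbol{\epsilon}$, i.e. take $r_{1,i}(t,\boldsymbol{\epsilon})=\psi_i\big((r_{2,j}(t,\boldsymbol{\epsilon}))_{j\le k}\big)$, so that along the whole path the pair $(r_{1,i}(t,\boldsymbol{\epsilon}),r_{2,i}(t,\boldsymbol{\epsilon}))$ satisfies the second relation of \eqref{fixed_point} identically.

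Next I would verify the hypotheses needed to use the earlier lemmas along this path: since $\psi_i$ is smooth and bounded on $[0,\rho]^k$ and $\mathbf R\mapsto\ex[\langle Q_i\rangle_{t,\epsilon}]$ is $C^1$ with bounded partial derivatives (as in \cite{barbier2019optimal}), the system is well posed and $\mathbf R(t,\cdot)$ is a $C^1$ diffeomorphism; using the monotonicity of $\ex[\langle Q_i\rangle_{t,\epsilon}]$ in the interpolation parameters together with the sign of $\partial_v\psi_i$ one then checks that its Jacobian is $\ge1$, so Lemma~\ref{lemma:overlap_concentr} applies. Feeding this choice into the fundamental identity of Lemma~\ref{lem:fund_iden}, and using a rearrangement/monotonicity bound along the matched path to control the cross term $\int_0^1 r_{1,i}r_{2,i}\,dt$ against $(\int_0^1 r_{1,i})(\int_0^1 r_{2,i})$, gives $i_p\ge\frac{1}{s_p^{2k}}\int_{[s_p,2s_p]^{2k}}i_{\rm RS}\big(\mathbf R_1(1,\boldsymbol{\epsilon}),\mathbf R_2(1,\boldsymbol{\epsilon})\big)\,d\boldsymbol{\epsilon}+o(1)$, modulo the $O(s_p)$ corrections from the shifts $\boldsymbol{\epsilon}$.

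The third step is to show that the endpoint of the interpolation is $o(1)$-close to $\Gamma$. Combining the overlap concentration of Lemma~\ref{lemma:overlap_concentr} with the I-MMSE relation for the scalar channels \eqref{scalar_int} and the structural control of the colored channel $\Y_t$ already carried out in Lemma~\ref{der_lemma} (the terms $\Z^\intercal\LLambda_{i,N}\mathbf u_t$, which extract the per-block contribution of $\boldsymbol{\Phi}$), one shows that $r_{2,i}(t,\boldsymbol{\epsilon})=\rho-\ex[\langle Q_i\rangle_{t,\epsilon}]$ coincides up to $o(1)$, after averaging over $\boldsymbol{\epsilon}$, with $\text{mmse}(\beta|\sqrt{R_{1,i}(t,\boldsymbol{\epsilon})}\,\beta+Z)$; hence for $\boldsymbol{\epsilon}$ outside a set of vanishing measure the pair $(\mathbf R_1(1,\boldsymbol{\epsilon}),\mathbf R_2(1,\boldsymbol{\epsilon}))$ satisfies both relations of \eqref{fixed_point} up to $o(1)$, i.e. it lies within $o(1)$ of $\Gamma$. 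To conclude, note from \eqref{replica_potential} that $\mathbf r_2\mapsto i_{\rm RS}(\mathbf r_1,\mathbf r_2)$ is concave on $[0,\rho]^k$ with stationarity condition $r_{1,i}=\psi_i(\mathbf r_2)$, so at every point of $\Gamma$ one has $i_{\rm RS}(\mathbf r_1,\mathbf r_2)=\sup_{\mathbf r_2'\in[0,\rho]^k}i_{\rm RS}(\mathbf r_1,\mathbf r_2')\ge\inf_{\mathbf r_1'\in[0,\infty)^k}\sup_{\mathbf r_2'\in[0,\rho]^k}i_{\rm RS}(\mathbf r_1',\mathbf r_2')$; by continuity of $i_{\rm RS}$ the same holds up to $o(1)$ at the approximate critical points above, giving $\liminf_{p\to\infty}i_p\ge\inf_{\mathbf r_1}\sup_{\mathbf r_2}i_{\rm RS}(\mathbf r_1,\mathbf r_2)$.

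The main obstacle is the third step — establishing that the interpolation endpoint really lands $o(1)$-close to $\Gamma$. Beyond the (delicate but standard) well-posedness and Jacobian verifications in the present block setting, it requires transferring the $\boldsymbol{\epsilon}$-averaged overlap concentration through the change of variables $\boldsymbol{\epsilon}\mapsto\mathbf R(t,\boldsymbol{\epsilon})$, and correctly identifying the effective scalar signal-to-noise ratio that the colored channel $\Y_t$ contributes to each block $\bbeta_0^{(i)}$ — precisely the place where the block structure of $\boldsymbol{\Phi}$ and the estimates behind Lemma~\ref{der_lemma} re-enter, and the technical heart of the adaptive interpolation method in this model.
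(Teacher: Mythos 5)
Your first step --- the coupled Cauchy problem with $r_{1,i}(t,\boldsymbol{\epsilon})=\psi_i(\mathbf r_2(t,\boldsymbol{\epsilon}))$ and $r_{2,i}(t,\boldsymbol{\epsilon})=\rho-\ex[\langle Q_i\rangle_{t,\epsilon}]$, plus the Liouville/Jacobian check --- is exactly the paper's construction (with $\psi_i=\mathcal A_i$). The genuine gap is in how you try to close the argument. You want to show that the \emph{endpoint} $\mathbf R(1,\boldsymbol{\epsilon})$ lands $o(1)$-close to $\Gamma$, i.e.\ that \emph{both} $R_{1,i}(1)\approx\psi_i(\mathbf R_2(1))$ and $R_{2,i}(1)\approx\text{mmse}(\beta|\sqrt{R_{1,i}(1)}\beta+Z)$ hold. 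Neither is available: $R_{1,i}(1)=\epsilon_{1,i}+\int_0^1\psi_i(\mathbf r_2(t))\,dt\neq\psi_i\big(\int_0^1\mathbf r_2(t)\,dt\big)$ because $\psi_i$ is nonlinear, and the second (mmse) condition would amount to proving the overlap limit \eqref{eq:mmse_conj}, which is stated as a \emph{conjecture} in this paper and is at least as hard as the free-energy statement you are trying to prove --- it cannot be used as an input. Your intermediate step also assumes $\int_0^1 r_{1,i}r_{2,i}\,dt\leq\big(\int_0^1 r_{1,i}\,dt\big)\big(\int_0^1 r_{2,i}\,dt\big)$; this is Chebyshev's sum inequality and requires anti-monotonicity of $t\mapsto r_{1,i}$ and $t\mapsto r_{2,i}$, which you do not establish (for $k>1$ the blocks compete through the common channel $\Y_t$, and there is no reason each $\ex\langle Q_i\rangle_t$ should be monotone in $t$).

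The paper closes the lower bound with a strictly weaker and easier observation, and this is precisely why the interpolation was designed with $r_1=\psi(r_2)$. Apply Jensen to the concave $I_1$ and $I_2$ with respect to the uniform measure in $t$, pushing the integral \emph{inside} (not to the endpoint): the fundamental identity becomes $i_p\geq\frac{1}{s_p^{2k}}\int d\boldsymbol{\epsilon}\int_0^1 dt\; i_{\rm RS}(\mathbf r_1(t,\boldsymbol{\epsilon}),\mathbf r_2(t,\boldsymbol{\epsilon}))+o(1)$, evaluated at the \emph{instantaneous} interpolation functions, so the cross term $-\tfrac12\sum_i l_i r_{1,i}(t)r_{2,i}(t)$ matches without any rearrangement. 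By construction $r_{1,i}(t)=\psi_i(\mathbf r_2(t))$ at every $t$, and since $\psi_i=\frac{2}{l_i}\partial_{r_{2,i}}I_2$, this is exactly the first-order condition for the concave map $\mathbf r_2\mapsto i_{\rm RS}(\mathbf r_1(t),\mathbf r_2)$. Hence $i_{\rm RS}(\mathbf r_1(t),\mathbf r_2(t))=\sup_{\mathbf r_2'\in[0,\rho]^k}i_{\rm RS}(\mathbf r_1(t),\mathbf r_2')\geq\inf_{\mathbf r_1'}\sup_{\mathbf r_2'}i_{\rm RS}(\mathbf r_1',\mathbf r_2')$ pointwise in $t$ and $\boldsymbol{\epsilon}$, and integrating gives the claim. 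Only the $\sup$-side stationarity is needed; the $\inf$-side (mmse) condition defining $\Gamma$ plays no role in the lower bound, which is why your ``main obstacle'' never has to be confronted.
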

\begin{proof}
We continue to follow the line of the proof of Lemma~2.2 in \cite{barbier2018mutual}. For this we fix $\mathbf{R}=(R_{1,1},\ldots,R_{2,k})$, where for each $i=1\ldots,k$  $R_{1,i}(t,\boldsymbol{\epsilon})=\epsilon_{1,i}+\int_0^t r_{1,i}(s,\boldsymbol{\epsilon})ds$ and $R_{2,i}(t,\boldsymbol{\epsilon})=\epsilon_{2,i}+\int_0^t r_{2,i}(s,\boldsymbol{\epsilon})ds$ are the solutions to the Cauchy problem
\begin{align}\label{eq:dif_eq}
\begin{cases}
    \partial R_{1,i}=\mathcal{A}_i(\rho-\ex\langle Q_1\rangle_{t,\boldsymbol{\epsilon}},\ldots,\rho-\ex\langle Q_k\rangle_{t,\boldsymbol{\epsilon}})=:F_{1,i}(t,\mathbf{R}(t))\\
    \partial R_{2,i}=\rho-\ex\langle Q_i\rangle_{t,\boldsymbol{\epsilon}}=:F_{2,i}(t,\mathbf{R}(t))\\
    \mathbf{R}(0)=\boldsymbol{\epsilon}
\end{cases}   
\end{align}
where $\mathcal{A}(\mathbf{z})=(\mathcal{A}_1(\mathbf{z}),\ldots,\mathcal{A}_k(\mathbf{z})):\mathbb{R}^k\rightarrow\mathbb{R}^k$ 
\begin{align}\label{eq:def_mathcal_A}
    \mathcal{A}_i(\mathbf{z})=\frac{c}{\pi}\int_{0}^\pi\frac{\delta_i(\theta)d\theta}{\sum_jz_jl_j\delta_j(\theta)+\sigma^2}.
\end{align}
It is worth to mention that since $\delta_i(\theta)>0$ for all $i$, we have $\mathcal{A}((\mathbb{R}^{+})^k)=(\mathbb{R}^{+})^k$.
One can check (as in Proposition 6 in \cite{barbier2019optimal}) that $\mathbf{F}(t,\mathbf{R})=(\mathbf{F}_1(t,\mathbf{R}),\mathbf{F}_2(t,\mathbf{R}))$ is bounded $\mathcal{C}^1$ function of $\mathbf{R}$, which implies that the unique solution of (\ref{eq:dif_eq}) is also of class $\mathcal{C}^1$. Since $\rho-\ex\langle Q_i\rangle\in [0,\rho]$ we get that for solution $\mathbf{R}$ holds $r_{2,i}\in[0,\rho]$ and $r_{1,i}\geq0$ (as $\mathcal{A}_1:\mathbb{R}^+\rightarrow\mathbb{R}^+$). As before we use Liouville formula for  Jacobian $J_{p,\epsilon}(t)$ of the map $\boldsymbol{\epsilon}\rightarrow \mathbf{R}(t,\boldsymbol{\epsilon})$, which gives 
$$J_{p,\epsilon}(t)=\exp\{\int_0^t\sum_{i=1}^k(\partial_{R_{1,i}}\mathbf{F}_1(s,\mathbf{R}(s,\boldsymbol{\epsilon}))+\partial_{R_{2,i}}\mathbf{F}_2(s,\mathbf{R}(s,\boldsymbol{\epsilon})))\}\geq 1.$$
Last inequality is true since $$\partial_{R_{1,i}}\mathbf{F}_1(s,\mathbf{R}(s,\boldsymbol{\epsilon}))=\frac{c}{\pi k}\int\frac{
\delta_i^2(\theta)}{k^{-1}\sum_j\delta_j(\theta)R_{1,j}+\sigma^2}>0.$$ 
By similar arguments as in Lemma \ref{lemma_ub}, we get that for any $t$ the map  is a diffeomorphism, hence, we can use Lemma~\ref{lem:fund_iden}.
    \begin{multline}
     i_p= \frac{1}{s_{p}^{2k}}\int_{[s_p,2s_p]^{2k}}d\boldsymbol{\epsilon}\Big\{\sum_{i=1}^k l_iI_1(\int_0^1 
 r_{1,i}(t,\boldsymbol{\epsilon})dt)+I_2(\int_0^1 
 r_{2,1}(t,\boldsymbol{\epsilon})dt,\ldots,\int_0^1 
 r_{2,k}(t,\boldsymbol{\epsilon})dt)\\
 -\frac{1}{2}\sum_{i=1}^k\int_0^1 l_ir_{1,i}(t,\boldsymbol{\epsilon})r_{2,i}(t,\boldsymbol{\epsilon})dt\Big\}+o(1).
    \end{multline}
    It is known that $I_1$ is a concave function (e.g.  \cite{barbier2019optimal}), let us show that $I_2(x_1,\ldots,x_k)$ is also concave. Due to (\ref{eq:I2}) we have $I_2(x_1,\ldots,x_k)=\frac{c}{2\pi}\int_0^\pi\ln(\sum_il_i\delta_i(\theta)x_i/\sigma^2+1)d\theta$ with Hessian $\mathbf{H}$ with elements
    \begin{align*}
        H_{ij}=\partial_{x_ix_j}I_2=-\frac{c}{2\pi\sigma^4}\int_0^\pi\frac{l_il_j\delta_i(\theta)\delta_j(\theta)}{(\sum_ml_m\delta_m(\theta)x_m/\sigma^2+1)^2}d\theta
    \end{align*}

    It is sufficient to show that $-\mathbf{H}$ is positive semi-definite, for this we take an arbitrary $k$~--dimensional vector $\mathbf{y}$ and write
    \begin{align*}
        -\mathbf{y}^\intercal \mathbf{H}\mathbf{y}=\sum_{i,j}\frac{c}{2\pi\sigma^4}\int_0^\pi\frac{y_il_il_j\delta_i(\theta)\delta_j(\theta)y_j}{(\sum_ml_m\delta_m(\theta)x_m/\sigma^2+1)^2}d\theta
        =\frac{c}{2\pi\sigma^4}\int_0^\pi\frac{(\sum_{i}y_il_i\delta_i(\theta))^2}{(\sum_ml_m\delta_m(\theta)x_m/\sigma^2+1)^2}d\theta\geq0.
    \end{align*}
 Then Jensen's inequality yields
 \begin{multline*}
     i_p\geq \frac{1}{s_{p}^{2k}}\int_{[s_p,2s_p]^{2k}}d\boldsymbol{\epsilon}\int_{[0,1]^k}dt_1\ldots dt_k\Big\{\sum_{i=1}^k l_iI_1( 
 r_{1,i}(t_i,\boldsymbol{\epsilon}))+I_2(
 r_{2,1}(t_1,\boldsymbol{\epsilon}),\ldots, r_{2,k}(t_k,\boldsymbol{\epsilon}))\\
 -\frac{1}{2}\sum_{i=1}^k l_ir_{1,i}(t_i,\boldsymbol{\epsilon})r_{2,i}(t_i,\boldsymbol{\epsilon})\Big\}+o(1)=\frac{1}{s_{p}^{2k}}\int_{[s_p,2s_p]^{2k}}d\boldsymbol{\epsilon}\int_{[0,1]^k}dt_1\ldots dt_ki_{\rm RS}(r_{1,1}(t_1,\boldsymbol{\epsilon}),\ldots,r_{2,k}(t_k,\boldsymbol{\epsilon}))+o(1).
 \end{multline*}
 We want to show that $i_{\rm RS}(r_{1,1}(t_1,\boldsymbol{\epsilon}),\ldots,r_{2,k}(t_k,\boldsymbol{\epsilon}))=\sup_{\mathbf{r}_2\in[0,\rho]^k}i_{\rm RS}(r_{1,1}(t_1),\ldots,r_{1,k}(t_k),\mathbf{r}_2).$ Indeed, let us  define function  $g: \mathbf{r}_2\mapsto i_{\rm RS}(r_1^1(t_1,\boldsymbol{\epsilon}),\ldots,r_1^k(t_k,\boldsymbol{\epsilon}),\mathbf{r}_2)$, since $I_2(\mathbf{r}_2)$ is concave, $g(\mathbf{r}_2)$ is also concave.  From the definition of the solution $\mathbf{R}(t,\boldsymbol{\epsilon})$ for each $i$ we have $\partial_{r_{2,i}} g(r_{2,i}(t_1,\boldsymbol{\epsilon}),\ldots,r_{2,k}(t_k,\boldsymbol{\epsilon}))=l_i\mathcal{A}_i(\mathbf{r}_2)-l_ir_{1,i}(t_i)=0$. This, together with concavity implies that $g$ reaches it's maximum at $(r_{2,1}(t_1,\boldsymbol{\epsilon}),\ldots,r_{2,k}(t_k,\boldsymbol{\epsilon}))$. From what immediately follows 
 \begin{multline*}
     i_p\geq \frac{1}{s_{p}^{2k}}\int_{[s_p,2s_p]^{2k}}d\boldsymbol{\epsilon}\int_{[0,1]^k}dt_1\ldots dt_k\sup_{\mathbf{r}_2\in[0,\rho]^k}i_{\rm RS}(r_{1,1}(t_1,\boldsymbol{\epsilon}),\ldots,r_{1,k}(t_k,\boldsymbol{\epsilon}),\mathbf{r}_2)+o(1)\\
 \geq \inf_{\mathbf{r}_1\in[0,+\infty)^k}\sup_{\mathbf{r}_2\in[0,\rho]^k}i_{\rm RS}(\mathbf{r}_1,\mathbf{r}_2)+o(1).
 \end{multline*} 

\end{proof}

\subsection{Proof of Theorem~\ref{th:ymmse_mmse}: the measurement MMSE}\label{sec:mmse}

For the convenience we repeat here some of the  notations. We denoted 
\begin{align}
    \text{mmse}^{(i)}=\frac{1}{|I_i|}\ex\|\bbeta^{(i)}_0-\langle\bbeta^{(i)}\rangle\|^2
\end{align}
and
\begin{align}
    \text{ymmse}=\frac{1}{N}\ex\|\frac{1}{\sqrt{p}}\boldsymbol{\Phi}_N\bbeta_0-\langle\frac{1}{\sqrt{p}}\boldsymbol{\Phi}_N\bbeta\rangle\|^2.
\end{align}
The goal is to prove
\begin{align}
        \lim_{p\rightarrow\infty}\text{ymmse}=\frac{\sigma^2}{\pi}\int_0^{\pi}\frac{\sum_{i=1}^{k}l_i\delta_i(\theta)\text{mmse}^{(i)}}{\sum_{i=1}^{k}l_i\delta_i(\theta)\text{mmse}^{(i)}+\sigma^2}d\theta.
    \end{align}
\begin{proof} We define matrix $\mathbf{E}=(\mathbf{I}_N+\sigma^{-2}\sum_{i=1}^kl_i\LLambda_{i,N}\mmse^{(i)})^{-1/2}$. This is symmetric deterministic  matrix of dimension $N\times N$, since $\mmse^{(i)}$ is non-negative, the matrix inside the brackets is positive-defined.
    First we show that 
    \begin{align}\label{eq:first_prove}
        \frac{1}{N}\ex[\|\frac{1}{\sqrt{p}}\mathbf{E}\boldsymbol{\Phi}\bbeta_0-\langle\frac{1}{\sqrt{p}}\mathbf{E}\boldsymbol{\Phi}\bbeta\rangle\|^2]=-\frac{1}{N\sqrt{p}}\ex[\Z^\intercal\mathbf{E}^2\langle\boldsymbol{\Phi}(\bbeta_0-\bbeta)\rangle]
    \end{align}
    Indeed, if we integrate by parts the RHS with respect to noise,  we obtain
    \begin{align}
         &-\frac{1}{N\sqrt{p}}\sum_{\mu=1}^N\ex[Z_\mu\langle(\mathbf{E}^2\boldsymbol{\Phi}(\bbeta_0-\bbeta))_\mu\rangle]\nonumber\\
         = & \frac{1}{N\sqrt{p}}\sum_{\mu=1}^N\ex[\langle(\mathbf{E}^2\boldsymbol{\Phi}(\bbeta_0-\bbeta))_\mu(\frac{1}{\sqrt{p}}\boldsymbol{\Phi}(\bbeta_0-\bbeta)+\Z)_\mu\rangle]\nonumber\\
         &-\frac{1}{N\sqrt{p}}\sum_{\mu=1}^N\ex[\langle(\mathbf{E}^2\boldsymbol{\Phi}(\bbeta_0-\bbeta))_\mu\rangle\langle(\frac{1}{\sqrt{p}}\boldsymbol{\Phi}(\bbeta_0-\bbeta)+\Z)_\mu\rangle]\nonumber\\
         =&\frac{1}{N}\ex[\langle\|\frac{1}{\sqrt{p}}\mathbf{E}\boldsymbol{\Phi}(\bbeta_0-\bbeta)\|^2\rangle]-\frac{1}{N}\ex[\|\langle\frac{1}{\sqrt{p}}\mathbf{E}\boldsymbol{\Phi}(\bbeta_0-\bbeta)\rangle\|^2]
    \end{align}
Due to Nishimori identity

\begin{align}\label{eq:examp_Nish}
    \frac{1}{N}\ex[\langle\|\frac{1}{\sqrt{p}}\mathbf{E}\boldsymbol{\Phi}(\bbeta_0-\bbeta)\|^2\rangle]=2\frac{1}{N}\ex[\|\langle\frac{1}{\sqrt{p}}\mathbf{E}\boldsymbol{\Phi}(\bbeta_0-\bbeta)\rangle\|^2]
\end{align}
Plugging this to the previous expression we get
\begin{align}
    -\frac{1}{N\sqrt{p}}\sum_{\mu=1}^N\ex[Z_\mu\langle(\mathbf{E}^2\boldsymbol{\Phi}(\bbeta_0-\bbeta))_\mu\rangle]
         =\frac{1}{N}\ex[\|\langle\frac{1}{\sqrt{p}}\mathbf{E}\boldsymbol{\Phi}(\bbeta_0-\bbeta)\rangle\|^2]
\end{align}
On the other hand, since $\boldsymbol{\Phi}(\bbeta_0-\bbeta)=\sum_{i=1}^k\LLambda^{(i)}\G^{(i)}(\bbeta_0^{(i)}-\bbeta^{(i)})$ we can integrate by parts with respect to $G^{(i)}_{\nu j}$ LHS of the last expression and get:
\begin{align}
    &\frac{1}{N}\ex[\|\langle\frac{1}{\sqrt{p}}\mathbf{E}\boldsymbol{\Phi}(\bbeta_0-\bbeta)\rangle\|^2]\nonumber\\
    =&\frac{1}{\sigma^2N\sqrt{p}}\sum_{i=1}^k\sum_{\mu,\nu=1}^N\sum_{j\in I_i}\ex[(\mathbf{E}^2\Z)_\mu\langle\Lambda^{(i)}_{\mu\nu}(\bbeta^{(i)}_0-\bbeta^{(i)})_j\sum_{\mu_1}(\frac{1}{\sqrt{p}}\boldsymbol{\Phi}(\bbeta_0-\bbeta)+\Z)_{\mu_1}\frac{1}{\sqrt{p}}\Lambda_{\mu_1\nu}^{(i)}(\bbeta_0^{(i)}-\bbeta^{(i)})_j\rangle]\nonumber\\
    &-\frac{1}{\sigma^2N\sqrt{p}}\sum_{i=1}^k\sum_{\mu,\nu=1}^N\sum_{j\in I_i}\ex[(\mathbf{E}^2\Z)_\mu\langle\Lambda^{(i)}_{\mu\nu}(\bbeta^{(i)}_0-\bbeta^{(i)})_j\rangle\langle\sum_{\mu_1}(\frac{1}{\sqrt{p}}\boldsymbol{\Phi}(\bbeta_0-\bbeta)+\Z)_{\mu_1}\frac{1}{\sqrt{p}}\Lambda_{\mu_1\nu}^{(i)}(\bbeta_0^{(i)}-\bbeta^{(i)})_j\rangle]\label{eq:E_ymmse}
    \end{align}
    Let us first take the closer look at the second term of the RHS After writing it in a vector form we obtain
    \begin{align*}
        &\frac{1}{\sigma^2Np}\sum_{i=1}^k\ex[\Z^\intercal\mathbf{E}^2\LLambda_{i,N}\langle(\frac{1}{\sqrt{p}}\boldsymbol{\Phi}(\bbeta_0-\bbeta)+\Z)(\bbeta_0^{(i)}-\bbeta^{(i)})^\intercal\rangle\langle\bbeta^{(i)}_0-\bbeta^{(i)}\rangle]\\
        =&\frac{1}{\sigma^2Np}\sum_{i=1}^k\ex[\Z^\intercal\mathbf{E}^2\LLambda_{i,N}\langle(\frac{1}{\sqrt{p}}\boldsymbol{\Phi}(\bbeta_0-\bbeta)+\Z)(\bbeta_0^{(i)}-\bbeta^{(i)})^\intercal\rangle\bbeta^{(i)}_0]\\
        &-\frac{1}{\sigma^2Np}\sum_{i=1}^k\ex[\Z^\intercal\mathbf{E}^2\LLambda_{i,N}\langle(\frac{1}{\sqrt{p}}\boldsymbol{\Phi}(\bbeta_0-\bbeta)+\Z)(\bbeta_0^{(i)}-\bbeta^{(i)})^\intercal\rangle\langle\bbeta^{(i)}\rangle].
    \end{align*}
    The last term is zero due to Nishimori identity, indeed
     \begin{align*}
 &\frac{1}{\sigma^2Np}\sum_{i=1}^k\ex[\Z^\intercal\mathbf{E}^2\LLambda_{i,N}\langle(\frac{1}{\sqrt{p}}\boldsymbol{\Phi}(\bbeta_0-\bbeta)+\Z)(\bbeta_0^{(i)}-\bbeta^{(i)})^\intercal\rangle\langle\bbeta^{(i)}\rangle]\\
 =&\frac{1}{\sigma^2Np}\sum_{i=1}^k\ex[\langle(\Y-\frac{1}{\sqrt{p}}\boldsymbol{\Phi}\bbeta_0)^\intercal\mathbf{E}^2\LLambda_{i,N}(\Y-\frac{1}{\sqrt{p}}\boldsymbol{\Phi}\bbeta)\bbeta_0^{(i)\intercal}\rangle\langle\bbeta^{(i)}\rangle]\\
 &-\frac{1}{\sigma^2Np}\sum_{i=1}^k\ex[\langle(\Y-\frac{1}{\sqrt{p}}\boldsymbol{\Phi}\bbeta)^\intercal\mathbf{E}^2\LLambda_{i,N}(\Y-\frac{1}{\sqrt{p}}\boldsymbol{\Phi}\bbeta_0)\bbeta^{(i)\intercal}\rangle\langle\bbeta^{(i)}\rangle]=0
    \end{align*}
    Returning back to (\ref{eq:E_ymmse}) we obtain
    \begin{align}
        &\frac{1}{N}\ex[\|\langle\frac{1}{\sqrt{p}}\mathbf{E}\boldsymbol{\Phi}(\bbeta_0-\bbeta)\rangle\|^2]\label{eq:Y_1-Y-2}\\
        =&\frac{1}{\sigma^2Np}\sum_{i=1}^k\ex[\Z^\intercal\mathbf{E}^2\LLambda_{i,N}\Z\langle\|\bbeta^{(i)}_0-\bbeta^{(i)}\|^2\rangle]+
        \frac{1}{\sigma^2Np^{3/2}}\sum_{i=1}^k\ex[\Z^\intercal\mathbf{E}^2\LLambda_{i,N}\langle\boldsymbol{\Phi}(\bbeta_0-\bbeta)\|\bbeta^{(i)}_0-\bbeta^{(i)}\|^2\rangle]\nonumber\\
        &-
         \frac{1}{\sigma^2Np}\sum_{i=1}^k\ex[\Z^\intercal\mathbf{E}^2\LLambda_{i,N}\Z\langle(\bbeta_0^{(i)}
         -\bbeta^{(i)})^\intercal\bbeta^{(i)}_0\rangle]\nonumber\\&
         -\frac{1}{\sigma^2Np^{3/2}}\sum_{i=1}^k\ex[\Z^\intercal\mathbf{E}^2\LLambda_{i,N}\langle\boldsymbol{\Phi}(\bbeta_0-\bbeta)(\bbeta_0^{(i)}-\bbeta^{(i)})^\intercal\rangle\bbeta^{(i)}_0]\nonumber\\
         =&\underbrace{-\frac{1}{\sigma^2Np}\sum_{i=1}^k\ex[\Z^\intercal\mathbf{E}^2\LLambda_{i,N}\Z\langle(\bbeta^{(i)}_0-\bbeta^{(i)})^\intercal\bbeta^{(i)}\rangle]}_{\mathcal{Y}_1}\nonumber\\
         &-\underbrace{
        \frac{1}{\sigma^2Np^{3/2}}\sum_{i=1}^k\ex[\Z^\intercal\mathbf{E}^2\LLambda_{i,N}\langle\boldsymbol{\Phi}(\bbeta_0-\bbeta)(\bbeta^{(i)}_0-\bbeta^{(i)})^\intercal\bbeta^{(i)}\rangle]}_{\mathcal{Y}_2}
        =\mathcal{Y}_1-\mathcal{Y}_2.\nonumber
    \end{align}
    Using again Nishimori we have
    \begin{align*}
         \mathcal{Y}_1 =&-\frac{1}{\sigma^2Np}\sum_{i=1}^k\ex[\Z^\intercal\mathbf{E}^2\LLambda_{i,N}\Z\langle(\bbeta^{(i)}_0-\bbeta^{(i)})^\intercal\bbeta^{(i)}\rangle]\\
         &=\frac{1}{\sigma^2Np}\sum_{i=1}^k\ex[\Z^\intercal\mathbf{E}^2\LLambda_{i,N}\Z\|\bbeta^{(i)}_0-\langle\bbeta^{(i)}\rangle\|^2]
         \end{align*}
          The random variable $\Z^\intercal\mathbf{E}^2\LLambda_{i,N}\Z/N$ is just an averaged norm of a Gaussian vector with covariance matrix $\mathbf{E}^2\LLambda_{i,N}$, its variance is of order $O(1/N)$, so applying Cauchy-Schwarz inequality we get 
         \begin{align*}
         \mathcal{Y}_1=&\frac{1}{\sigma^2}\ex[\Z^\intercal\mathbf{E}^2\frac{1}{N}\sum_{i=1}^kl_i\LLambda_{i,N}\mmse^{(i)}\Z]+o(1)\\
         &=\frac{1}{N}\mathrm{Tr}\Big((\sum_{i=1}^kl_i\LLambda_{i,N}\mmse^{(i)})(\mathbf{I}_N+\sum_{i=1}^kl_i\LLambda_{i,N}\mmse^{(i)})^{-1}\Big)+o(1)
    \end{align*}
To deal with $\mathcal{Y}_2$ we integrate by parts once again with respect to the elements $Z_\mu$ of the noise 
\begin{align*}
    \mathcal{Y}_2=&-\frac{1}{\sigma^2Np^{3/2}}\sum_{i=1}^k\sum_{\mu=1}^N\ex[\langle(\mathbf{E}^2\LLambda_{i,N}\boldsymbol{\Phi}(\bbeta_0-\bbeta))_\mu(\frac{1}{\sqrt{p}}\boldsymbol{\Phi}(\bbeta_0-\bbeta)+\Z)_\mu(\bbeta^{(i)}_0-\bbeta^{(i)})^\intercal\bbeta^{(i)}\rangle]\\
    &+\frac{1}{\sigma^2Np^{3/2}}\sum_{i=1}^k\sum_{\mu=1}^N\ex[\langle(\mathbf{E}^2\LLambda_{i,N}\boldsymbol{\Phi}(\bbeta_0-\bbeta))_\mu(\bbeta^{(i)}_0-\bbeta^{(i)})^\intercal\bbeta^{(i)}\rangle\langle(\frac{1}{\sqrt{p}}\boldsymbol{\Phi}(\bbeta_0-\bbeta)+\Z)_\mu\rangle]\\
    =&\frac{1}{\sigma^2Np^2}\sum_{i=1}^k\ex[\langle(\boldsymbol{\Phi}\bbeta)^\intercal\mathbf{E}^2\LLambda_{i,N}(\boldsymbol{\Phi}(\bbeta_0-\bbeta))(\bbeta^{(i)}_0-\bbeta^{(i)})^\intercal\bbeta^{(i)}\rangle]\\
    &-
    \frac{1}{\sigma^2Np^2}\sum_{i=1}^k\ex[\langle(\boldsymbol{\Phi}\bbeta^\prime)^\intercal\mathbf{E}^2\LLambda_{i,N}(\boldsymbol{\Phi}(\bbeta_0-\bbeta))(\bbeta^{(i)}_0-\bbeta^{(i)})^\intercal\bbeta^{(i)}\rangle]
\end{align*}
To obtain this we notice that in both terms when unfolding brackets with $\frac{1}{\sqrt{p}}\boldsymbol{\Phi}(\bbeta_0-\bbeta)+\Z$, the part with $\frac{1}{\sqrt{p}}\boldsymbol{\Phi}\bbeta_0+\Z$ can go outside the Gibbs brackets and will be cancelled. To obtain the last term we use Nishimori to replace $\langle\bbeta\rangle$ with its independent copy $\langle\bbeta^\prime\rangle$. We state here two Lemmas, which are necessary to conclude the proof of Theorem~\ref{th:ymmse_mmse}. The proofs of the Lemmas are postponed to the Appendix.
\begin{lemma}{\label{lemma_y2_1}}
   \begin{equation}
    \frac{1}{Np^2}\sum_{i=1}^k\ex[\langle(\boldsymbol{\Phi}\bbeta^\prime)^\intercal\mathbf{E}^2\LLambda_{i,N}(\boldsymbol{\Phi}(\bbeta_0-\bbeta))(\bbeta^{(i)}_0-\bbeta^{(i)})^\intercal\bbeta^{(i)}\rangle=O(p^{-1/2}) \label{err_Y_2}
\end{equation} 
\end{lemma}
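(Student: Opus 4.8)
The idea is to symmetrize the expression over the ground truth and the two replicas via the Nishimori identity, which converts the factor $(\bbeta_0^{(i)}-\bbeta^{(i)})^{\intercal}\bbeta^{(i)}$ into a difference of block self-overlaps that are each pinned to $\rho$. Throughout, $\langle\cdot\rangle$ denotes the average over two conditionally i.i.d.\ replicas $\bbeta,\bbeta'$ from the posterior \eqref{posterior_0}, $\bbeta_0$ is the ground truth, and $\ex$ is over $(\boldsymbol{\Phi},\Z,\bbeta_0)$. By construction, conditionally on $(\boldsymbol{\Phi},\Y_N)$ the three vectors $\bbeta_0,\bbeta,\bbeta'$ are i.i.d.; hence, since the integrand below depends only on $(\boldsymbol{\Phi},\bbeta_0,\bbeta,\bbeta')$, the Nishimori identity (Appendix~\ref{apx:Nish}) makes $\ex$ invariant under any permutation of the triple $(\bbeta_0,\bbeta,\bbeta')$ (ground-truth$\leftrightarrow$replica swaps being Nishimori, replica$\leftrightarrow$replica swaps being trivial). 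Write $M:=\mathbf{E}^2\LLambda_{i,N}$; this matrix is deterministic with $\|M\|\le\|\mathbf{E}^2\|\,\|\LLambda_{i,N}\|\le (1-|\lambda_i|)^{-2}=O(1)$, using $\mathbf{E}^2\preceq\mathbf{I}_N$ and the KMS spectral bound of Appendix~\ref{apx:KMS}. Let $X_i:=\ex\langle(\boldsymbol{\Phi}\bbeta')^{\intercal}M\boldsymbol{\Phi}(\bbeta_0-\bbeta)\,(\bbeta_0^{(i)}-\bbeta^{(i)})^{\intercal}\bbeta^{(i)}\rangle$ be the $i$-th summand; we must show $\tfrac1{Np^2}\sum_{i=1}^k X_i=O(p^{-1/2})$.

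First I would apply the polarization identity $(\bbeta_0^{(i)}-\bbeta^{(i)})^{\intercal}\bbeta^{(i)}=\tfrac12\|\bbeta_0^{(i)}\|^2-\tfrac12\|\bbeta^{(i)}\|^2-\tfrac12\|\bbeta_0^{(i)}-\bbeta^{(i)}\|^2$, writing $X_i=\tfrac12\ex[B'(\|\bbeta_0^{(i)}\|^2-\|\bbeta^{(i)}\|^2)]-\tfrac12X_i^{(b)}$ with $B':=(\boldsymbol{\Phi}\bbeta')^{\intercal}M\boldsymbol{\Phi}(\bbeta_0-\bbeta)$ and $X_i^{(b)}:=\ex[B'\,\|\bbeta_0^{(i)}-\bbeta^{(i)}\|^2]$. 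The crucial point is $X_i^{(b)}=0$: swapping $\bbeta_0\leftrightarrow\bbeta'$ yields $X_i^{(b)}=\ex[(\boldsymbol{\Phi}\bbeta_0)^{\intercal}M\boldsymbol{\Phi}(\bbeta'-\bbeta)\,\|\bbeta'^{(i)}-\bbeta^{(i)}\|^2]$, and swapping $\bbeta\leftrightarrow\bbeta'$ in this last expression flips the sign of $\boldsymbol{\Phi}(\bbeta'-\bbeta)$ while leaving $\|\bbeta'^{(i)}-\bbeta^{(i)}\|^2$ fixed, so $X_i^{(b)}=-X_i^{(b)}$. Hence $X_i=\tfrac12\ex[B'(\|\bbeta_0^{(i)}\|^2-\|\bbeta^{(i)}\|^2)]=\tfrac{|I_i|}{2}\,\ex[B'(\hat\rho_{0,i}-\hat\rho_i)]$, with $\hat\rho_{0,i}:=\tfrac1{|I_i|}\|\bbeta_0^{(i)}\|^2$ and $\hat\rho_i:=\tfrac1{|I_i|}\|\bbeta^{(i)}\|^2$.

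To conclude I would split $\hat\rho_{0,i}-\hat\rho_i=(\hat\rho_{0,i}-\rho)-(\hat\rho_i-\rho)$ and bound each term by Cauchy--Schwarz on the product measure. The bilinear factor is controlled by compact support of $P_0$ and the operator-norm moment bound on $\boldsymbol{\Phi}$: from $\|\bbeta'\|,\|\bbeta_0-\bbeta\|\le C\sqrt p$ and $\|M\|=O(1)$ one gets $B'^2\le Cp^2\|\boldsymbol{\Phi}\|^4$, hence $\ex[B'^2]=O(p^4)$ by $\ex\|\boldsymbol{\Phi}\|^4=O(p^2)$ (the moment bound already used in Section~\ref{int_prf}). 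The scalar factors are $O_{L^2}(p^{-1/2})$: $\hat\rho_{0,i}$ is an average of $|I_i|$ i.i.d.\ bounded variables of mean $\rho$, so $\ex[(\hat\rho_{0,i}-\rho)^2]=|I_i|^{-1}\mathrm{Var}(\beta_{0,1}^2)=O(1/p)$; and because $\hat\rho_i$ is a function of a single posterior replica, the Nishimori identity gives $\ex[(\hat\rho_i-\rho)^2]=\ex[(\hat\rho_{0,i}-\rho)^2]=O(1/p)$. Therefore $|X_i|\le|I_i|\,(\ex[B'^2])^{1/2}\,O(p^{-1/2})=O(p)\cdot O(p^2)\cdot O(p^{-1/2})=O(p^{5/2})$, and summing over the fixed number $k$ of blocks, $\tfrac1{Np^2}\sum_{i=1}^kX_i=O(p^{5/2})/O(p^3)=O(p^{-1/2})$.

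The main obstacle is exactly the first move: a direct norm bound on $X_i$ gives only $O(p^2)$, i.e.\ $O(1)$ after dividing by $Np^2$, so one must notice that polarizing $(\bbeta_0^{(i)}-\bbeta^{(i)})^{\intercal}\bbeta^{(i)}$ simultaneously kills the $\|\bbeta_0^{(i)}-\bbeta^{(i)}\|^2$ piece by antisymmetry and reduces the rest to a difference of self-overlaps that each concentrate on $\rho$. A more pedestrian route—Cauchy--Schwarz applied directly to $(\bbeta_0^{(i)}-\bbeta^{(i)})^{\intercal}\bbeta^{(i)}$—would instead demand concentration of $\hat\rho_i$ around its \emph{posterior} mean at rate $O(1/p)$, which is not available in the original model without an auxiliary perturbation channel on $\bbeta_0$; the symmetrization above avoids this issue.
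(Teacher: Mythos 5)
Your proof is correct, and it takes a genuinely different route from the paper's. Both proofs must circumvent the same obstacle: the cross-overlap $\bbeta_0^{(i)\intercal}\bbeta^{(i)}$ buried in $(\bbeta_0^{(i)}-\bbeta^{(i)})^\intercal\bbeta^{(i)}$ does not concentrate at rate $O(1/p)$ in the unperturbed model, so a naive Cauchy--Schwarz fails. The paper handles this by also expanding $B'=(\boldsymbol{\Phi}\bbeta')^\intercal\mathbf{E}^2\LLambda_{i,N}\boldsymbol{\Phi}(\bbeta_0-\bbeta)$ into the difference $\tilde Q_i(\bbeta',\bbeta_0)-\tilde Q_i(\bbeta',\bbeta)$, obtaining a four-term product of overlaps; the two terms containing $Q_i(\bbeta_0,\bbeta)$ are then seen to have equal expectation under the Nishimori swap $\bbeta_0\leftrightarrow\bbeta$ and cancel, after which it centers $Q_i(\bbeta,\bbeta)$ and applies Cauchy--Schwarz with a moment bound on $\tilde Q_i$. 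You instead keep $B'$ intact, polarize the scalar factor as $\tfrac12(\|\bbeta_0^{(i)}\|^2-\|\bbeta^{(i)}\|^2-\|\bbeta_0^{(i)}-\bbeta^{(i)}\|^2)$, and eliminate the $\|\bbeta_0^{(i)}-\bbeta^{(i)}\|^2$ piece (which is where the cross-overlap lives) by the clean antisymmetry argument: the joint swap $\bbeta_0\leftrightarrow\bbeta'$ followed by $\bbeta\leftrightarrow\bbeta'$ flips the sign of $\boldsymbol{\Phi}(\bbeta'-\bbeta)$ while fixing the squared distance, forcing that expectation to vanish. What remains is a difference of self-overlaps $\hat\rho_{0,i}-\hat\rho_i$, each concentrating at rate $O(p^{-1/2})$ in $L^2$ by the i.i.d.\ prior and Nishimori, exactly paralleling the paper's use of \eqref{self_ovlp_bnd}. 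Your moment bounds ($\ex\|\boldsymbol{\Phi}\|^4=O(p^2)$, $\|\mathbf{E}^2\LLambda_{i,N}\|=O(1)$, compact support of $P_0$) match the paper's ingredients, and the bookkeeping $|X_i|=O(p^{5/2})$, $\tfrac1{Np^2}\sum_i|X_i|=O(p^{-1/2})$ is consistent. The gain of your version is that it never needs to expand $B'$ into $\tilde Q_i$ overlaps and never has to identify a matching pair among four expansion terms; the paper's version has the minor advantage of working entirely within the $Q_i/\tilde Q_i$ notation it has already set up in \eqref{overlap_defn}, which it reuses in the proof of Lemma~\ref{lemma_y2_2}.
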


\begin{lemma}{\label{lemma_y2_2}}
\begin{align*}&\frac{1}
{\sigma^2Np^2}\sum_{i=1}^k\ex[\langle(\boldsymbol{\Phi}\bbeta)^\intercal\mathbf{E}^2\LLambda_{i,N}(\boldsymbol{\Phi}(\bbeta_0-\bbeta))(\bbeta^{(i)}_0-\bbeta^{(i)})^\intercal\bbeta^{(i)}\rangle]\\
=&\frac{1}{\sigma^2Np^2}\sum_{i=1}^k\ex[\langle(\boldsymbol{\Phi}\bbeta)^\intercal\mathbf{E}^2\LLambda_{i,N}(\boldsymbol{\Phi}(\bbeta_0-\bbeta))\rangle]\ex[\langle(\bbeta^{(i)}_0-\bbeta^{(i)})^\intercal\bbeta^{(i)}\rangle]+o(1).
\end{align*}
\end{lemma}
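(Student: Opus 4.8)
The statement is a decoupling estimate, to be derived from the concentration of overlaps. Fix $i\in\{1,\dots,k\}$ and write $a_i:=\frac{1}{Np^2}(\boldsymbol{\Phi}\bbeta)^\intercal\mathbf{E}^2\LLambda_{i,N}(\boldsymbol{\Phi}(\bbeta_0-\bbeta))$ and $b_i:=(\bbeta_0^{(i)}-\bbeta^{(i)})^\intercal\bbeta^{(i)}=|I_i|Q_i-\|\bbeta^{(i)}\|^2$, so that the lemma reduces to showing $\ex\langle a_i b_i\rangle=\ex\langle a_i\rangle\,\ex\langle b_i\rangle+o(1)$ for each of the (finitely many) $i$, after which one sums over $i$ and divides by $\sigma^2$. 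The plan is to use the elementary decomposition
$$\ex\langle a_i b_i\rangle=\ex\langle a_i\rangle\,\ex\langle b_i\rangle+\ex\langle a_i(b_i-\langle b_i\rangle)\rangle+\ex\big[(\langle a_i\rangle-\ex\langle a_i\rangle)(\langle b_i\rangle-\ex\langle b_i\rangle)\big]$$
and to bound the last two terms by Cauchy--Schwarz, using that $a_i$ has small amplitude while $b_i$, being essentially an overlap, concentrates both thermally and over the disorder.

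First I would establish the amplitude bound $\ex\langle a_i^2\rangle=O(p^{-2})$. Since the matrix inside the inverse defining $\mathbf{E}$ is positive semidefinite we have $\|\mathbf{E}\|\le 1$, and $\LLambda_{i,N}$ being a KMS matrix of parameter $|\lambda_i|<1$ has $\|\LLambda_{i,N}\|=O(1)$ uniformly in $N$ (Appendix~\ref{apx:KMS}); hence $\|\mathbf{E}^2\LLambda_{i,N}\|\le\|\LLambda_{i,N}\|=O(1)$ and $|a_i|\le\frac{C}{Np^2}\|\boldsymbol{\Phi}\bbeta\|\,\|\boldsymbol{\Phi}(\bbeta_0-\bbeta)\|$. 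Two applications of Cauchy--Schwarz (under $\langle\cdot\rangle$ and under $\ex$) together with the fourth-moment bounds $\ex\langle\|\boldsymbol{\Phi}\bbeta\|^4\rangle=O(p^4)$ and $\ex\langle\|\boldsymbol{\Phi}(\bbeta_0-\bbeta)\|^4\rangle=O(p^4)$, which follow from the estimates used in the proof of Lemma~\ref{der_lemma} (plus the Nishimori identity), then give $\ex\langle a_i^2\rangle\le\frac{C}{N^2p^4}(\ex\langle\|\boldsymbol{\Phi}\bbeta\|^4\rangle)^{1/2}(\ex\langle\|\boldsymbol{\Phi}(\bbeta_0-\bbeta)\|^4\rangle)^{1/2}=O(p^{-2})$.

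Next I would quantify the fluctuations of $b_i$. Writing $b_i/|I_i|=Q_i-\widehat R_i$ with $\widehat R_i:=|I_i|^{-1}\|\bbeta^{(i)}\|^2$, Lemma~\ref{lemma:overlap_concentr} gives $\ex\langle(Q_i-\ex\langle Q_i\rangle)^2\rangle=O(p^{-\gamma})$, and the same concentration arguments (as for Lemmas~29--30 of \cite{barbier2019optimal}) applied to the self-overlap yield $\ex\langle(\widehat R_i-\ex\langle\widehat R_i\rangle)^2\rangle=O(p^{-\gamma})$; consequently $\ex\langle(b_i-\langle b_i\rangle)^2\rangle=O(p^{2-\gamma})$ and $\ex(\langle b_i\rangle-\ex\langle b_i\rangle)^2=O(p^{2-\gamma})$. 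A point that needs care is that these concentration statements are proved for the perturbed interpolating model, whereas $\langle\cdot\rangle$ above is the posterior of the original model; as usual one reinstates a vanishing side observation $\sqrt{\epsilon}\,\bbeta_0+\widetilde\Z$ with $\epsilon\in[s_p,2s_p]$, notes that every quantity in the statement is Lipschitz (hence continuous) in $\epsilon$ with a limit as $s_p\to 0$, and transfers the estimate from $\epsilon>0$ to $\epsilon=0$ up to an $o(1)$.

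Combining the two ingredients, Cauchy--Schwarz gives $|\ex\langle a_i(b_i-\langle b_i\rangle)\rangle|\le(\ex\langle a_i^2\rangle)^{1/2}(\ex\langle(b_i-\langle b_i\rangle)^2\rangle)^{1/2}=O(p^{-1})O(p^{1-\gamma/2})=O(p^{-\gamma/2})$ and likewise $|\ex[(\langle a_i\rangle-\ex\langle a_i\rangle)(\langle b_i\rangle-\ex\langle b_i\rangle)]|\le(\ex\langle a_i^2\rangle)^{1/2}(\ex(\langle b_i\rangle-\ex\langle b_i\rangle)^2)^{1/2}=O(p^{-\gamma/2})$, both $o(1)$; summing over $i=1,\dots,k$ and dividing by $\sigma^2$ yields the lemma. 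The main obstacle I anticipate is the self-overlap concentration for $\widehat R_i$ together with the (routine but fiddly) transfer of the overlap-concentration estimates from the perturbed to the original model; everything else is bookkeeping with $\|\LLambda_{i,N}\|=O(1)$ and the moment bounds already available from Lemma~\ref{der_lemma}.
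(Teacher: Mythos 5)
Your decomposition is dual to the paper's: you treat $a_i=\frac{1}{Np^2}(\boldsymbol{\Phi}\bbeta)^\intercal\mathbf{E}^2\LLambda_{i,N}\boldsymbol{\Phi}(\bbeta_0-\bbeta)$ as the small factor (proving $\ex\langle a_i^2\rangle=O(p^{-2})$, which is correct) and concentrate the overlap $b_i$, whereas the paper does the opposite: it bounds $p^{-1}b_i$ by $O(1)$ in $L^2$ and concentrates the design quadratic form. Concretely, the paper replaces $\mathbf{E}^2\LLambda_{i,N}$ by $\LLambda^{(i)}\mathbf{E}^2\LLambda^{(i)}=\bB^\intercal\bB$ up to $o(1)$ using the asymptotic joint diagonalization of KMS matrices, defines $\q=\frac{1}{N}(\tPhi\bbeta)^\intercal\tPhi(\bbeta_0-\bbeta)$ with $\tPhi=p^{-1/2}\bB\boldsymbol{\Phi}$, and proves $\int d\sigma^{-1}\ex\langle(\q-\ex\langle\q\rangle)^2\rangle=o(1)$ via a thermodynamic route (second derivative of $i_p$ in $\sigma^{-1}$, concavity of the free energy, and the variance bound \eqref{var_a_bound}), following \cite{barbier2020mutual}. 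Your route is conceptually simpler, and your treatment of the self-overlap $\widehat R_i$ is fine (in fact even easier than you suggest: the Nishimori identity reduces $\ex\langle(\widehat R_i-\ex\langle\widehat R_i\rangle)^2\rangle$ to $\mathrm{Var}(\hat\rho_i)=O(1/p)$, no Ghirlanda--Guerra needed).

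The place where your argument is not actually closed is the "routine but fiddly" transfer of overlap concentration to the original posterior. Lemma~\ref{lemma:overlap_concentr} is stated for the interpolating Gibbs measure $\langle\cdot\rangle_{t,\epsilon}$, \emph{integrated} over $t\in[0,1]$ and $\boldsymbol{\epsilon}\in[s_p,2s_p]^{2k}$. Extracting the pointwise-in-$t$ statement at $t=0$ and then sending $\epsilon\to 0$ is exactly the delicate step that can fail at first-order phase transitions: the disorder part \eqref{ell2_bound}, via Ghirlanda--Guerra, is genuinely an $\epsilon$-averaged bound, and the limit $\langle Q_i\rangle_{0,\epsilon}\to\langle Q_i\rangle$ as $\epsilon\to 0$ is continuous only away from the set of $\sigma$ where the MMSE jumps. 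The paper sidesteps this by not concentrating the overlap at all: the $\sigma^{-1}$-integrated bounds \eqref{m_1}--\eqref{q_1} give concentration of $\q$ for Lebesgue-a.e. noise level, which is enough, and the argument never needs $\ex\langle(Q_i-\ex\langle Q_i\rangle)^2\rangle\to 0$ for the original model. If you want to keep your decomposition, you would need to make the transfer argument precise (e.g., show the same $\epsilon$- or $\sigma$-averaged statement and accept the implicit a.e.-$\sigma$ caveat), or switch to concentrating the quadratic form as the paper does.
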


 Combining these Lemmas and using again Nishimori (similar to \eqref{eq:examp_Nish}) we get
\begin{align}
    \mathcal{Y}_2=\frac{1}{N}\ex\Big[(p^{-1/2}\boldsymbol{\Phi}(\bbeta_0-\langle\bbeta\rangle))^\intercal\mathbf{E}^2\Big(\sigma^{-2}\sum_{i=1}^kl_i\LLambda_{i,N}\mmse^{(i)}\Big)(p^{-1/2}\boldsymbol{\Phi}(\bbeta_0-\langle\bbeta\rangle))\Big]+o(1).
\end{align}
Finally, bringing together  expressions for $\mathcal{Y}_1$ and $\mathcal{Y}_2$ with (\ref{eq:Y_1-Y-2}) we have
\begin{align*}
    &\ex\Big[(p^{-1/2}\boldsymbol{\Phi}(\bbeta_0-\langle\bbeta\rangle))^\intercal\mathbf{E}^2\Big(\underbrace{\mathbf{I}_N+\sigma^{-2}\sum_{i=1}^kl_i\LLambda_{i,N}\mmse^{(i)}}_{\mathbf{E^{-2}}}\Big)(p^{-1/2}\boldsymbol{\Phi}(\bbeta_0-\langle\bbeta\rangle))\Big]\\
    =&\frac{1}{N}\mathrm{Tr}\Big((\sum_{i=1}^kl_i\LLambda_{i,N}\mmse^{(i)})(\mathbf{I}_N+\sum_{i=1}^kl_i\LLambda_{i,N}\mmse^{(i)})^{-1}\Big)+o(1),
\end{align*}
which brings us immediately
\begin{align}
    \ymmse=\frac{1}{N}\mathrm{Tr}\Big((\sum_{i=1}^kl_i\LLambda_{i,N}\mmse^{(i)})(\mathbf{I}_N+\sigma^{-2}\sum_{i=1}^kl_i\LLambda_{i,N}\mmse^{(i)})^{-1}\Big)+o(1)
\end{align}
After taking the limit we obtain the desired expression.
\end{proof}

\subsection{Proof of Theorem~\ref{th:gen_A}: general diagonal \texorpdfstring{$\mathbf{A}_p$}{A}}\label{sec:k_infty}

The main idea is to approximate the general model using a sequence of models  satisfying the assumptions of Theorem~\ref{main_thm} and control the error.
For each $k\le p$ we divide the support $[a,b]$ into $k$ disjoint parts $[a,b]=\bigcup_{i=1}^k[\alpha_{i-1},\alpha_i]$, with $\alpha_i=a+i(b-a)/k$, and define a sequence of  $p\times p$ diagonal matrices $\bar{\mathbf{A}}_{p}(k)={\rm diag}(\bar{\lambda}_{1,p}(k),\ldots,\bar{\lambda}_{p,p}(k))$, where each matrix has exactly $k$ different eigenvalues: $\{\alpha_0,\ldots,\alpha_{k-1}\}$. We remind that the initial sequence of $\{\bA_p\}_{p=1}^\infty$ is of the form $\bA_p={\rm diag}(\lambda_{1,p},\ldots,\lambda_{p,p})$. We desire that $\bar{\bA}_p(k)$ becomes a better approximation to $\bA_p$ as $p,k$ grow. We need first to establish the correct multiplicity of each $\alpha_i$.  Define the set of indices $I_i:=\{j:\lambda_{j,p}\in[\alpha_{i-1},\alpha_i)\}$. It is obvious that $\bigcup_{i=1}^k I_i=\{1,\ldots,p\}$ and $l_i=\lim_{p\rightarrow\infty}|I_i|/p=\eta[\alpha_{i-1},\alpha_i)$, where $\eta$ is the limiting normalised counting measure of $\bA_p$. Then, for $j=1,\ldots,p$ we set $\bar{\lambda}_{j,p}(k)=\alpha_{i-1}$ if $j\in I_i$.  With such construction, we have $|\lambda_{j,p}-\bar{\lambda}_{j,p}(k)|<1/k$ for any $j$ and the limiting normalised counting measure of $\bar{\mathbf{A}}_{p}(k)$ as $p\to\infty$, denoted by  $\bar{\eta}_k$, reads $\bar{\eta}_k=\sum_{i=1}^k l_i\delta_{\alpha_{i-1}}$.
It is straightforward that the Wasserstein-2 distance between the measures $\eta$ and $\bar{\eta}_k$ is of order $1/k$. This implies that $\bar{\eta}_k\Rightarrow^{W_2}\eta$ as $k\rightarrow+\infty$.
 
After fixing $k$, with  sequence $\{\bar{\mathbf{A}}_{p}(k)\}_{p=1}^{\infty}$ we construct $\bar{\Y}_N(k)$ and  $\bar{\boldsymbol{\Phi}}_N(k)$ as in \eqref{eq:def_x_t}-\eqref{eq:mod_vec}. Also, we define the analogues of \eqref{eq:post_inter}, \eqref{eq:mut_inf}, with the mutual information associated with $\bar{\mathbf{A}}_{p}(k)$ denoted by $\bar{i}_p(k)$. For this model, the assumptions of the Theorem~\ref{main_thm} are satisfied. Therefore, we define the associated replica symmetry potential as follows:
\begin{align*}
    &\bar{i}_{\rm RS}(k)(\bar{\mathbf{r}}_{1}(k),\bar{\mathbf{r}}_{2}(k)):=\\
    &\frac{c}{2\pi}\int_{0}^{\pi}\ln \Big(\frac1{\sigma^2 }\sum_{i=1}^kl_i\delta(\theta,\alpha_{i-1})\bar{r}_{2,i}(k)+1\Big)d\theta
   -\frac{1}{2}\sum_{i=1}^kl_i\bar{r}_{1,i}(k)\bar{r}_{2,i}(k)
   +\sum_{i=1}^kl_iI(\beta;\sqrt{\bar{r}_{1,i}(k)}\beta+Z).   
\end{align*}
Now recall the definition of $\delta(\theta,\lambda)=(1-2\lambda\cos\theta+\lambda^2)^{-1}$.
By Theorem \ref{main_thm}, the limit of corresponding mutual information is
\begin{align}\label{eq:i_p_k}
    \lim_{p\rightarrow\infty}\bar{i}_p(k)=\inf_{\bar{\mathbf{r}}_{1}(k)}\sup_{\bar{\mathbf{r}}_2(k)}\bar{i}_{\rm RS}(k)(\bar{\mathbf{r}}_{1}(k),\bar{\mathbf{r}}_{2}(k)),
\end{align}
where we optimize over  $\bar{\mathbf{r}}_{1}(k)\in[0,+\infty)^k$ and $\bar{\mathbf{r}}_2(k)\in [0,\rho]^k$. Equivalently,  according to Remark~\ref{rem:equiv_gamma}, we can write
\begin{align}
    \lim_{p\rightarrow\infty}\bar{i}_p(k)=\inf_{(\bar{\mathbf{r}}_{1}(k),\bar{\mathbf{r}}_2(k))\in\bar{\Gamma}(k)}\bar{i}_{\rm RS}(k)(\bar{\mathbf{r}}_{1}(k),\bar{\mathbf{r}}_{2}(k)).
\end{align}
Here $\bar{\Gamma}(k)$ is defined as the set of solutions of the system of equations
\begin{align}
    \bar{r}_{2,i}&=\text{mmse}(\beta|\sqrt{\bar{r}_{1,i}}\beta+Z),\\
    \bar{r}_{1,i}&=\frac{c}{\pi}\int_{0}^\pi\frac{\delta(\theta,\alpha_{i-1})d\theta}{\sum_jl_j\bar{r}_{2,j}\delta(\theta,\alpha_{j-1})+\sigma^2}.
\end{align}
Since $l_i$, $\bar{r}_{2,j}$, and $\delta(\theta,\alpha_{j-1})$ are obviously positive we  can bound $\bar{r}_{1,i}$ by an universal constant $C$ that depends only on $b$, $c$ and $\sigma$:
\begin{align}
    \bar{r}_{1,i}\leq \frac{c}{\pi}\int_{0}^\pi\frac{\delta(\theta,\alpha_{i-1})d\theta}{\sigma^2}\leq \frac{c}{\pi}\int_{0}^\pi\frac{d\theta}{(1-\alpha_{i-1})^2\sigma^2}\leq \frac{c}{(1-b)^2\sigma^2}=:C<\infty.
\end{align}
This means that in $\eqref{eq:i_p_k}$ we can optimize over set $[0,C]^k\times[0,\rho]^k$.

The remaining goal is to prove that as $k\rightarrow\infty$, the LHS of (\ref{eq:i_p_k}) tends to $\lim_{p\rightarrow\infty}i_p$ and the RHS to $\inf_{\Gamma_\infty}i_{\rm RS}$.

We start with the LHS and show that $\lim_{p\rightarrow\infty}|i_p-\bar{i}_p(k)|=o(k)$. Let us denote 
\begin{align}
H:=\frac{1}{2\sigma^2}\|p^{-1/2}\boldsymbol{\Phi}_p(\bbeta_0-\bbeta)+\Z\|^2   , \qquad \bar{H}:=\frac{1}{2\sigma^2}\|p^{-1/2}\bar{\boldsymbol{\Phi}}_p(k)(\bbeta_0-\bbeta)+\Z\|^2.
\end{align}
Then we have 
\begin{align}
    \bar{i}_p(k)-i_p=\frac{1}{p}\ex\ln\frac{\int dP_0(\bbeta)e^{-H}}{\int dP_0(\bbeta)e^{-\bar{H}}}
    =\frac{1}{p}\ex\ln\Big\langle e^{-H+\bar{H}}\Big\rangle_{\bar{H}}
    \geq \frac{1}{p}\ex\langle \bar{H}-H\rangle_{\bar{H}}
\end{align}
by Jensen's inequality, and the expectation operator $\langle\cdot\rangle_{\bar{H}}$ stands for
\begin{align}
    \langle\cdot\rangle_{\bar{H}}=\frac{\int dP_0(\bbeta)(\cdot) e^{-\bar{H}}}{\int dP_0(\bbeta)e^{-\bar{H}}}.
\end{align}
On the other hand we can prove similarly that $i_p-\bar{i}_p(k)\geq\frac{1}{p}\ex\langle H-\bar{H}\rangle_{H}$, which gives 
\begin{align}
    \bar{i}_p(k)-i_p\leq\frac{1}{p}\ex\langle \bar{H}-H\rangle_{H}.
\end{align}
Now what is left is to bound is $p^{-1}|\bar{H}-H|$:
\begin{align*}
\frac{1}{p}|\bar{H}-H|&=\frac{1}{2\sigma^2p}|p^{-1}(\bbeta_0-\bbeta)^\intercal(\bar{\boldsymbol{\Phi}}-\boldsymbol{\Phi})^\intercal\bar{\boldsymbol{\Phi}}(\bbeta_0-\bbeta)+p^{-1}(\bbeta_0-\bbeta)^\intercal\boldsymbol{\Phi}^\intercal(\bar{\boldsymbol{\Phi}}-\boldsymbol{\Phi})(\bbeta_0-\bbeta)\\
&+2p^{-1/2}(\bbeta_0-\bbeta)^\intercal(\bar{\boldsymbol{\Phi}}-\boldsymbol{\Phi})^\intercal\Z|
\leq\|\bar{\boldsymbol{\Phi}}-\boldsymbol{\Phi}\|\Big(\frac{1}{2\sigma^2p^2}\|\bbeta_0-\bbeta\|^2\|\bar{\boldsymbol{\Phi}}+\boldsymbol{\Phi}\|+\frac{1}{\sigma^2p^{3/2}}\|\bbeta_0-\bbeta\|\|\Z\|\Big)\\
&\leq\max_{i\le k}\|\LLambda_{i,N}-\bar{\LLambda}_i(k)\|\times\|\mathbf{G}\|\Big(\frac{1}{2\sigma^2p^2}\|\bbeta_0-\bbeta\|^2\|\bar{\boldsymbol{\Phi}}+\boldsymbol{\Phi}\|+\frac{1}{\sigma^2p^{3/2}}\|\bbeta_0-\bbeta\|\|\Z\|\Big),
\end{align*}
where $\mathbf{G}=\mathbf{G}_N$ is the $N\times p$ matrix composed of the blocks $\mathbf{G}_N^{(i)}$ with i.i.d.  Gaussian entries in $\boldsymbol{\Phi}$, see \eqref{diagonal_lambda_model}. The last inequality can be proved by verifying that for any vectors $\X$ and $\mathbf{y}$ or respective sizes, we have $\X^\intercal(\bar{\boldsymbol{\Phi}}-\boldsymbol{\Phi})\mathbf{y}\leq\X^\intercal\max_{i\le k}\|\LLambda_{i,N}-\bar{\LLambda}_i(k)\|\mathbf{G}\mathbf{y}$. Applying twice the Cauchy-Schwarz inequality we obtain that the expectation w.r.t. $\mathbb{E}\langle \cdot\rangle_{H\, {\rm or}\,\bar H}$ of the random part is of order one. To bound  $\lim_{p\rightarrow\infty}\max_{i\le k}\|\LLambda_{i,N}-\bar{\LLambda}_i(k)\|$ we recall that asymptotically the eigenspaces of $\boldsymbol{\Lambda}_i$ and $\bar{\boldsymbol{\Lambda}}_i(k)$ are the same (see Appendix~\ref{apx:KMS}) so
 \begin{align}
     \lim_{p\rightarrow\infty}\|\LLambda_{i,N}-\bar{\LLambda}_i(k)\|\le\Big|\frac{1}{(1-\lambda_i)^2}-\frac{1}{(1-\bar{\lambda}_i(k))^2}\Big|\le\frac{|\lambda_i-\bar{\lambda}_i(k)|(2-\lambda_i-\bar{\lambda}_i(k))}{(1-\bar{\lambda}_i(k))^2(1-\lambda_i)^2}\le\frac{C}{k}.
 \end{align} 
 
 This gives us $\lim_{p\rightarrow\infty}|i_p-\bar{i}_p(k)|=O(1/k)$.

Now, to deal with the RHS of \eqref{eq:i_p_k} we notice that for each pair of vectors $(\bar{\mathbf{r}}_1, \bar{\mathbf{r}}_2)\in\bar{\Gamma}(k)$ we can write corresponding 
Lipschitz bounded functions defined on $[a,b]$ such that  $r_1(\alpha_{i-1})=\bar{r}_{1,i}(k)$ and $r_2(\alpha_{i-1})=\bar{r}_{2,i}(k)$. To see this, we return to the definition of $\bar{\Gamma}(k)$ and define the corresponding functions $r_1(\lambda)$ and $r_2(\lambda)$ as
\begin{align}
        r_2(\lambda)&=\text{mmse}(\beta|\sqrt{r_1(\lambda)}\beta+Z),\label{461}\\
    r_1(\lambda)&=\frac{c}{\pi}\int_{0}^\pi\frac{\delta(\theta,\lambda)d\theta}{\int r_2(s)\delta(\theta,s)d\bar{\eta}_k(s)+\sigma^2}.\label{462}
\end{align}
These automatically verify the aforementioned conditions $(r_1(\alpha_{i-1}),r_2(\alpha_{i-1}))=(\bar{r}_{1,i}(k),\bar{r}_{2,i}(k))$. Then we can rewrite the replica potential equivalently as
\begin{align}
    i_{\rm RS}(\bar{\mathbf{r}}_1(k),\bar{\mathbf{r}}_2(k))&=\frac{c}{2\pi}\int_0^{\pi}\ln\Big(\frac{1}{\sigma^2}\int_a^b\delta(\theta,\lambda)r_2(\lambda)d\bar{\eta}_k(\lambda)\Big)d\theta-\frac{1}{2}\int_a^br_1(\lambda)r_2(\lambda)d\bar{\eta}_k(\lambda)\nonumber\\    +&\int_a^bI(\beta;\sqrt{r_1(\lambda)}\beta+Z)d\bar{\eta}_k(\lambda).
\end{align}
Note that the optimal value of the potential written in this form w.r.t. functions $(r_1,r_2)$ instead of simply vectors $(\bar{\mathbf{r}}_1(k),\bar{\mathbf{r}}_2(k))$ enforces the solution(s) to verify \eqref{461}, \eqref{462} at $\lambda\in \{\alpha_{0},\ldots,\alpha_{k-1}\}$, the only points on which the potential depends on. It implies that we can write
\begin{align}\label{eq:vect-func}
      \inf_{[0,C]^k}\sup_{[0,\rho]^k}i_{\rm RS}(\bar{\mathbf{r}}_1(k),\bar{\mathbf{r}}_2(k))=&\inf_{\mathcal{C}_1}\sup_{\mathcal{C}_2}\Big\{\frac{c}{2\pi}\int_0^{\pi}\ln\Big(\frac{1}{\sigma^2}\int_a^b\delta(\theta,\lambda)r_2(\lambda)d\bar{\eta}_k(\lambda)\Big)d\theta\nonumber\\
      &-\frac{1}{2}\int_a^br_1(\lambda)r_2(\lambda)d\bar{\eta}_k(\lambda)    +\int_a^bI(\beta;\sqrt{r_1(\lambda)}\beta+Z)d\bar{\eta}_k(\lambda)\Big\},
\end{align}
where $\mathcal{C}_1$ is the set of Lipschitz functions $f:[a,b]\rightarrow[0,C]$ and $\mathcal{C}_2$ is the set of Lipschitz functions $f:[a,b]\rightarrow[0,\rho]$. We need to find the limit of the RHS as $k\rightarrow\infty$. In order to exchange the limit with the $\inf\sup$, it is sufficient to show that the potential on the RHS of \eqref{eq:vect-func} converges uniformly on the set $\mathcal{C}_1\times\mathcal{C}_2$. But showing uniformity is straightforward: since $\bar{\eta}_k\rightarrow\eta$ in Wasserstein-2 distance, it implies that $\int f(\lambda)d\bar \eta_k\rightarrow\int f(\lambda)d\eta$ uniformly on the set of Lipschitz functions bounded by $\max(C,\rho)$.

\section*{Acknowledgments} The authors warmly thank Pragya Sur for initial discussions leading to this project. J.B. and D.T. were funded by the European Union (ERC, CHORAL, project number 101039794). Views and opinions expressed are however those of the authors only and do not necessarily reflect those of the European Union or the European Research Council. Neither the European Union nor the granting authority can be held responsible for them. 

\bibliographystyle{abbrv}
\bibliography{replica_time}

\appendices

\section{Replica calculation\label{rep_cal}}


In this appendix we derive 
the replica symmetric  potential (\ref{eq:replica_potential}) of the free energy using the so called replica method \cite{Mezard_Parisi_Virasoro} which was first introduced in the context of the spin glasses and became a powerful method  of statistical physics. 
 It is based on the simple identities that is also known as replica trick 
\begin{equation}
    \lim_{N\rightarrow+\infty}\ex f_N=-\lim_{N\rightarrow+\infty}\frac{1}{p}\ex\ln \mathcal{Z}(\Y)= -\lim_{N\rightarrow+\infty}\lim_{u\rightarrow0}\dfrac{1}{pu}\ln \ex \mathcal{Z}(\Y)^u,
\end{equation}
where the replicated partition function is
\begin{align}\label{int:gaussian_v}
\mathcal{Z}(\Y)^u= \int \prod_{a=1}^udP_0(\bbeta_a)e^{-\frac{1}{2\sigma^{2}}\sum_{a=1}^{u}\|p^{-1/2}\boldsymbol{\Phi} (\bbeta_0-\bbeta_a)+\Z\|^2}.
\end{align}
 Here and in what follows for convenience  we omit dimensional indexes $N$ and $p$ in $\Y_N$, $\Z_N$, $\boldsymbol{\Phi}_N$, etc. 
We compute
\begin{align}
    \ex \mathcal{Z}(\Y)^u= \int \prod_{a=0}^udP_0(\bbeta_a)dP_XdP_{Z}e^{-\frac{1}{2\sigma^{2}}\sum_{a=1}^{u}\sum_{\mu=1}^{N}(p^{-1/2}\sum_{j=1}^{p}x_{\mu,j}(\bbeta_0-\bbeta_a)_j+Z_{\mu})^2}.
\end{align}
Our goal is to calculate above expression as a Gaussian integral, and for this we introduce a new variable $v_{\mu.a}=p^{-1/2}\sum_{j=1}^{p}x_{\mu,j}(\bbeta_0-\bbeta_a)_j+Z_{\mu}$. One can easily see, that conditionally on $\bbeta$, vector $\mathbf{v}=(v_{\mu,a})_{\mu=1,a=1}^{N,u}$ is Gaussian, with zero mean and covariance matrix

\begin{multline}\label{eq:covar_elem}
        (\mathbf{R}_{\v})_{\mu\nu,ab}
        =\ex v_{\mu,a}v_{\nu.b}
        =\ex\Big(p^{-1/2}\sum_{j=1}^{p}x_{\mu,j}(\bbeta_0-\bbeta_a)_j+Z_{\mu}\Big)\Big(p^{-1/2}\sum_{l=1}^{p}x_{\nu,l}(\bbeta_0-\bbeta_b)_l+Z_{\nu}\Big)\\
        =\frac{1}{p}\sum_{j=1}^p(\bbeta_0-\bbeta_a)_j\lambda_j^{|\mu-\nu|}(1-\lambda_j^2)^{-1}(\bbeta_0-\bbeta_b)_j+\delta_{\mu,\nu}\sigma^{2},
        \end{multline}
        last equality being true due to (\ref{eq:cor_x_t^i}).
We consider two different cases: when $\bA$ is multiple of identity and when $\bA$ has finite number of different eigenvalues. While, obviously, first case is a part of the second, for the sake of clarity, we prefer to study the first case separately in more details and then transfer some results to \textbf{Case 2}. 
\subsection{Case 1: \texorpdfstring{$\bA=\lambda \mathbf{I}$}{A is multiply of identity}}

\textit{Assumption: Matrix $\bA$ is multiple of identity, i.e. $\lambda_i=\lambda$ for all $i$.}

We introduce $u\times u$ overlap matrix $\mathbf{Q}$ defined as $Q_{ab}=\frac{1}{p}\sum_{j=1}^p(\bbeta_0-\bbeta_a)_j(\bbeta_0-\bbeta_b)_j$. Then the covariance matrix becomes
\begin{align}
    (\mathbf{R}_{\v})_{\mu\nu,ab}=\ex v_{\mu,a}v_{\nu,b}=\lambda^{|\mu-\nu|}(1-\lambda^2)^{-1}Q_{ab}+\delta_{\mu,\nu}\sigma^{2}.
\end{align}
We can see covariance matrix $\mathbf{R}_{\v}$ of vector ${\v}$
as $Nu\times Nu$ block matrix with $N^2$ blocks of size $u\times u$. This means that each block is parametrised with $a$ and $b$ while $\mu$ and $\nu$ show the position of the block. Then  covariance matrix can be written as:
\begin{align}
    \mathbf{R}_{\v}=\begin{pmatrix}
       (1-\lambda^2)^{-1}\mathbf{Q}+\mathbbm{1}_u\sigma^{2} & (1-\lambda^2)^{-1}\lambda \mathbf{Q} & (1-\lambda^2)^{-1}\lambda^2\mathbf{Q} &\ldots & (1-\lambda^2)^{-1}\lambda^{N-1}\mathbf{Q}\\
        (1-\lambda^2)^{-1}\lambda \mathbf{Q}&  (1-\lambda^2)^{-1}\mathbf{Q}+\mathbbm{1}_u\sigma^{2} & (1-\lambda^2)^{-1}\lambda \mathbf{Q} &\ldots & (1-\lambda^2)^{-1}\lambda^{N-2}\mathbf{Q}\\
        \vdots&\ddots & \ddots & \ddots & \vdots\\
        (1-\lambda^2)^{-1}\lambda^{N-1}\mathbf{Q} & (1-\lambda^2)^{-1}\lambda^{N-2}\mathbf{Q} & (1-\lambda^2)^{-1}\lambda^{N-3}\mathbf{Q} &\ldots & (1-\lambda^2)^{-1}\mathbf{Q}+\mathbbm{1}_u\sigma^{2}
    \end{pmatrix}
\end{align}
where $\mathbbm{1}_u$ stands for $u\times u$ matrix with each element equal to $1$.

Then  after calculating the Gaussian integral (\ref{int:gaussian_v}) for fixed $\{\bbeta_a\}_{a=0}^u$ we get:
\begin{align}\label{eq:exp_Z^N_det}
    \ex \mathcal{Z}(\Y)^u=\int \prod_{a=0}^udP_0(\bbeta_a)\Big(\det(\mathbf{I}_{Nu}+\mathbf{R}_{\v}/\sigma^2)\Big)^{-1/2}.
\end{align}

In order to fix overlap $Q_{ab}$ we introduce the  Dirac delta function and rewrite it as the Fourier integral (i.e. $\delta(pQ_{ab}-(\bbeta_0-\bbeta_a)^\intercal(\bbeta_0-\bbeta_b))=\int d\hat{Q}_{ab}\exp\{i\hat{Q}_{ab}(pQ_{ab}-(\bbeta_0-\bbeta_a)^\intercal(\bbeta_0-\bbeta_b))\}$):
\begin{multline}\label{eq:int_Q_hat}
    \ex \mathcal{Z}(\Y)^u
    =\int \prod_{a=0}^udP_0(\bbeta_a)\prod_{a\leq b}dQ_{ab}\Big(\det(\mathbf{I}_{Nu}+\mathbf{R}_{\v}/\sigma^2)\Big)^{-1/2}\delta(pQ_{ab}-(\bbeta_0-\bbeta_a)^\intercal(\bbeta_0-\bbeta_b))\\
    =\int \prod_{a\leq b}dQ_{ab}d\hat{Q}_{ab}\Big(\det(\mathbf{I}_{Nu}+\mathbf{R}_{\v}/\sigma^2)\Big)^{-1/2}\prod_{a=0}^udP_0(\bbeta^a)
    \prod_{a\leq b}\exp\{\hat{Q}_{ab}(\bbeta_0-\bbeta_a)^\intercal(\bbeta_0-\bbeta_b)-p\hat{Q}_{ab}Q_{ab}\}\\
    = \int\prod_{a\leq b}dQ_{ab}d\hat{Q}_{ab}e^{-1/2\ln\det(\mathbf{I}_{Nu}+\mathbf{R}_{\v}/\sigma^2)-p\sum_{a\leq b}\hat{Q}_{ab}Q_{ab}}
    \prod_{a=0}^udP_0(\bbeta^a)e^{\sum_{a\leq b}\hat{Q}_{ab}(\bbeta_0-\bbeta_a)^\intercal(\bbeta_0-\bbeta_b)}.
\end{multline}

Before continuing with the transformation of the last expression we analyze $\det(\mathbf{I}_{uN}+\mathbf{R}_{\v}/\sigma^2)$. Under the RS hypothesis we  assume that all off diagonal and diagonal elements of matrix $Q$ are equal to $Q_{12}$ and $Q_{11}$, respectively. In this case $\sum_{a\leq b}\hat{Q}_{ab}Q_{ab}$ becomes just $u\hat{Q}_{11}Q_{11}+ u(u-1)\hat{Q}_{12}Q_{12}/2$, and $\mathbf{R}_{\v}$ can be rewritten as 
\begin{multline}
    \mathbf{R}_{\v}=\begin{pmatrix}
        (1-\lambda^2)^{-1}+\frac{\sigma^2}{Q_{12}}&(1-\lambda^2)^{-1}\lambda&\ldots&(1-\lambda^2)^{-1}\lambda^{N-1}\\
        (1-\lambda^2)^{-1}\lambda&(1-\lambda^2)^{-1}+\frac{\sigma^2}{Q_{12}}&\ldots&(1-\lambda^2)^{-1}\lambda^{N-2}\\
        \vdots&\ddots&\ddots&\vdots\\
        (1-\lambda^2)^{-1}\lambda^{N-1}&(1-\lambda^2)^{-1}\lambda^{N-2}&\ldots&(1-\lambda^2)^{-1}+\frac{\sigma^2}{Q_{12}}
    \end{pmatrix}\otimes \mathbf{Q}+\sigma^2(1-\frac{Q_{11}}{Q_{12}})\mathbf{I}_{Nu}\\
    =(\LLambda+\frac{\sigma^2}{Q_{12}}\mathbf{I}_N)\otimes\mathbf{Q}+\sigma^2(1-\frac{Q_{11}}{Q_{12}})\mathbf{I}_{Nu},
\end{multline} 
where $\LLambda$ is as in (\ref{eq:def_Lambda}). From this we have
\begin{align}\label{eq:det_v}
    \det(\mathbf{I}_{Nu}+\mathbf{R}_{\v}/\sigma^2)=
    \det\Big((\sigma^{-2}\LLambda+Q_{12}^{-1}\mathbf{I}_N)\otimes \mathbf{Q}
    +(2-\frac{Q_{11}}{Q_{12}})\mathbf{I}_{Nu}\Big).
\end{align}
To calculate this determinant it is sufficient to find the eigenvalues of matrices involved. The eigenvalues of matrix $\mathbf{Q}$ is well known: $(u-1)Q_{12}+Q_{11}$ with multiplicity 1 and $Q_{11}-Q_{12}$ with multiplicity $u-1$.
If we denote $\{\delta_i\}_{i=1,N}$ the eigenvalues of matrix 
\begin{align}\label{eq:def_Lambda1}
   \LLambda=(1-\lambda^2)^{-1}\begin{pmatrix}
    1&\lambda &\ldots&\lambda^{N-1}\\
    \lambda&1&\ldots&\lambda^{N-2}\\
    \vdots&\ddots&\ddots&\vdots\\
    \lambda^{N-1}&\lambda^{N-2}&\ldots&1
\end{pmatrix},
    \end{align}
then the eigenvalues of matrix $\mathbf{I}_{Nu}+\mathbf{R}_{\v}/\sigma^2$ are, for each $i=1,\ldots N$:
\begin{align*}
    (\delta_i/\sigma^2+1/Q_{12})((u-1)Q_{12}+Q_{11})+2-Q_{11}/Q_{12}&=\delta_i((u-1)Q_{12}+Q_{11})/\sigma^2+u+1 \text{ with multiplicity } 1\\
    (\delta_i/\sigma^2+1/Q_{12})(Q_{11}-Q_{12})+2-Q_{11}/Q_{12}&=\delta_i(Q_{11}-Q_{12})/\sigma^2+1 \text{ with multiplicity } u-1.
\end{align*} 
This allows us to find (\ref{eq:det_v}) explicitly:
\begin{align}
  \det(\mathbf{I}_{Nu}+\mathbf{R}_{\v}/\sigma^2)= \prod_{i=1}^N  (\delta_i((u-1)Q_{12}+Q_{11})/\sigma^2+u+1)(\delta_i(Q_{11}-Q_{12})/\sigma^2+1)^{u-1}.
\end{align}
Now we can jump back to the (\ref{eq:int_Q_hat}) and find the order of  $\ln\det$
\begin{multline}
   \lim_{N\rightarrow \infty,u\rightarrow0}\dfrac{1}{up}\ln(\det(\mathbf{I}_{Nu}+\mathbf{R}_{\v}/\sigma^2))^{-1/2}= \\
   =-\dfrac{1}{2}\lim_{N,u}\dfrac{1}{p}\ln\prod_{i=1}^N (\delta_i(Q_{11}-Q_{12})/\sigma^2+1)\Big(1+\dfrac{u\delta_iQ_{12}+u\sigma^2}{\delta_i(Q_{11}-Q_{12})+\sigma^2}\Big)^{1/u}.
\end{multline}
Regarding the limit with respect to $u$ we remark that  $(1+\frac{u\delta_iQ_{12}+u\sigma^2}{\delta_i(Q_{11}-Q_{12})+\sigma^2})^{1/u}\rightarrow e^{\frac{\delta_iQ_{12}+\sigma^2}{\delta_i(Q_{11}-Q_{12})+\sigma^2}}$ when $u\rightarrow0$, then
\begin{multline}
   \lim_{N,u}\dfrac{1}{up}\ln(\det(\mathbf{I}_{Nu}+\mathbf{R}_{\v}))^{-1/2}
   =-\dfrac{c}{2}\lim_{N}\Big[\dfrac{1}{N}\sum_{i=1}^N\ln (\delta_i(Q_{11}-Q_{12})/\sigma^2+1)+\dfrac{1}{N}\sum_{i=1}^N\dfrac{ \delta_iQ_{12}+\sigma^2}{\delta_i(Q_{11}-Q_{12})+\sigma^2}\Big].
\end{multline}
The resulting sums are, in fact, normalized eigenvalue statics of matrix $\LLambda$. In order to find their limits we notice that matrix of such form is well-studied Kac-Murdock-Szeg{\"o} matrix (see  for example \cite{grenandertoeplitz}, \cite{TRENCH2002251}), and the limit of its eigenvalue distribution, denoted by $\mu(\delta)$, is known. Thus we can write
\begin{multline}
   \lim_{N,u}\dfrac{1}{up}\ln(\det(\mathbf{I}_{Nu}+\mathbf{R}_{\v}/\sigma^2))^{-1/2}
   =\\
   -\dfrac{c}{2}\Big(\int_{\mathbb{R}}d\mu(\delta)\ln\left(\delta(Q_{11}-Q_{12})/\sigma^2+1\right)+\int_{\mathbb{R}}\dfrac{(\delta Q_{12}+\sigma^2)d\mu(\delta)}{\delta(Q_{11}-Q_{12})+\sigma^2}\Big).
\end{multline}

Returning to the expression (\ref{eq:int_Q_hat}), we note that we have dealt with the first exponent. To address the second one we first remark that $dP(\bbeta_a)=(dP(\beta_{i,a}))^p$ and conclude
\begin{align}
    \prod_{a=0}^udP_0(\bbeta_a)e^{\sum_{a\leq b}\hat{Q}_{ab}(\bbeta_0-\bbeta_a)^\intercal(\bbeta_0-\bbeta_b)}
    =\Big(    \prod_{a=0}^udP_0(\bbeta_{a})_i e^{\sum_{a\leq b}\hat{Q}_{ab}(\bbeta_{0}-\bbeta_{a})_i(\bbeta_{0}-\bbeta_{b})_i}
\Big)^p:=W^p.
\end{align}
In what follows we omit index $i$  for convenience, and consider $\beta$ and $\beta_0$ as scalars. Using a well known Hubbard-Stratanovich trick, we express the exponent as the Gaussian integral in order to decouple $\beta_a$ and $\beta_b$:
\begin{align}
    e^{1/2\hat{Q}_{12}\sum_{a\neq b}(\beta_0-\beta_a)^\intercal(\beta_0-\beta_b)}
    =\int Dze^{z\sqrt{\hat{Q}_{12}}\sum_{a=1}^u(\beta_0-\beta_a)-1/2\hat{Q}_{12}\sum_{a=1}^u(\beta_0-\beta_a)^2},
\end{align}
here $Dz$ is integration with respect to Gaussian standard measure. Then 
\begin{align}
    W=\int DzdP_0(\beta_0)\Big(dP_0(\beta)e^{z\sqrt{\hat{Q}_{12}}(\beta_0-\beta)-1/2(\hat{Q}_{12}-2\hat{Q}_{11})(\beta_0-\beta)^2}\Big)^u.
\end{align}
Next, we use the fact that $\int Dzf^u(z)\approx 1+u\int Dz\ln f(z)\approx e^{u\int Dz\ln f(z)}$ when $u\rightarrow 0$ and after
plugging this back to (\ref{eq:int_Q_hat}), we have
\begin{align}\label{eq:exp_Phi}
    \ex \mathcal{Z}(\Y)^u=\int dQ_{11}dQ_{12}d\hat{Q}_{11}\hat{Q}_{12}\exp\{-\frac{1}{2}up\mathcal{F}(Q_{11},Q_{12},\hat{Q}_{11},\hat{Q}_{12})+o(up)\},
\end{align}
where
\begin{multline}
   \mathcal{F}(Q_{11},Q_{12},\hat{Q}_{11},\hat{Q}_{12}) =c\int_{\mathbb{R}}d\mu(\delta)\ln\Big(\delta(Q_{11}-Q_{12})/\sigma^2+1\Big)+
   c\int_{\mathbb{R}}\dfrac{(\delta Q_{12}+\sigma^2)d\mu(\delta)}{\delta(Q_{11}-Q_{12})+\sigma^2}\\
   -Q_{12}\hat{Q}_{12}+2Q_{11}\hat{Q}_{11}
   -2\int DzdP_0(\beta_0)\ln \int dP_0(\beta)e^{z\sqrt{\hat{Q}_{12}}(\beta_0-\beta)-1/2(\hat{Q}_{12}-2\hat{Q}_{11})(\beta_0-\beta)^2}.
   \end{multline}
Taking derivatives with respect to parameters $Q_{11}$, $Q_{12}$, $\hat{Q}_{11}$, and $\hat{Q}_{12}$ we obtain the self-consistent equations for stationary points of $\mathcal{F}$
\begin{align}\label{eq:part_eq_lam}
    &\dfrac{\partial \mathcal{F}}{\partial Q_{11}}
    =\int_{\mathbb{R}} d\mu(\delta)\frac{c\delta^2(Q_{11}-2Q_{12})}{((Q_{11}-Q_{12})\delta+\sigma^2)^2}+2\hat{Q}_{11}\\
    &\dfrac{\partial \mathcal{F}}{\partial Q_{12}}=\int_{\mathbb{R}}d\mu(\delta)\frac{c\delta(\delta Q_{12}+\sigma^2)}{((Q_{11}-Q_{12})\delta+\sigma^2)^2}-\hat{Q}_{12}\\\label{eq:part_eq_hat1}
    &\dfrac{\partial\mathcal{F}}{\partial \hat{Q}_{11}}=2Q_{11}-2\int DzdP(\beta_0)\dfrac{\int 
            dP_0(\beta)e^{z\sqrt{\hat{Q}_{12}}(\beta_0-\beta)-1/2(\hat{Q}_{12}-2\hat{Q}_{11})(\beta_0-\beta)^2}(\beta_0-\beta)^2}{\int dP_0(\beta)e^{z\sqrt{\hat{Q}_{12}}(\beta_0-\beta)-1/2(\hat{Q}_{12}-2\hat{Q}_{11})(\beta_0-\beta)^2}} \\
    &\dfrac{\partial \mathcal{F}}{\partial \hat{Q}_{12}}=  -Q_{12}-\\\label{eq:part_eq_hat2}
    &2\int DzdP_0(\beta_0)\dfrac{\int 
            dP_0(\beta)1/2e^{z\sqrt{\hat{Q}_{12}}(\beta_0-\beta)-1/2(\hat{Q}_{12}-2\hat{Q}_{11})(\beta_0-\beta)^2}(z(\beta_0-\beta)(\hat{Q}_{12})^{-1/2}-(\beta_0-\beta)^2)}{\int dP_0(\beta)e^{z\sqrt{\hat{Q}_{12}}(\beta_0-\beta)-1/2(\hat{Q}_{12}-2\hat{Q}_{11})(\beta_0-\beta)^2}}\nonumber.
\end{align}      
We recall that the physical meaning of the order parameters is
   \begin{align}
       Q_{12}&=\frac{1}{p}\ex\langle(\bbeta^0-\bbeta^1)^\intercal(\bbeta^0-\bbeta^2)\rangle\overset{N}{=}\frac{1}{p}\ex(\|\bbeta_0\|^2-\bbeta_0^\intercal\langle\bbeta_1\rangle)\\
       Q_{11}&=\frac{1}{p}\ex\langle(\bbeta^0-\bbeta^1)^\intercal(\bbeta^0-\bbeta^1)\rangle\overset{N}{=}\frac{2}{p}\ex(\|\bbeta_0\|^2-\bbeta_0^\intercal\langle\bbeta_1\rangle).
   \end{align}
The right hand sides were obtained using Nishimori identity and  the assumption that the overlap concentrates, from here  one can easily see that $Q_{11}=2Q_{12}$. It means that in order to calculate stationary points of (\ref{eq:exp_Phi}) we can limit ourselves to the set of points such that $Q_{11}=2Q_{12}$, which will simplify the equations (\ref{eq:part_eq_lam}-\ref{eq:part_eq_hat2}).
One can immediately see that  $\hat{Q}_{11}=0$ and the third equation is not needed. Also after integration by part with respect to $z$ the last equation can be rewritten as
\begin{align}
    Q_{12}=\int DzdP_0(\beta_0)\Big(\dfrac{\int 
            dP_0(\beta)e^{z\sqrt{\hat{Q}_{12}}(\beta_0-\beta)-1/2(\hat{Q}_{12}-2\hat{Q}_{11})(\beta_0-\beta)^2}(\beta_0-\beta)}{\int dP_0(\beta)e^{z\sqrt{\hat{Q}_{12}}(\beta_0-\beta)-1/2(\hat{Q}_{12}-2\hat{Q}_{11})(\beta_0-\beta)^2}}\Big)^2.
\end{align}
It is worth to mention that right hand side of the last equation can be rewritten as $Q_{12}=\ex (\beta_0-\langle\beta\rangle)^2$, where Gibbs brackets correspond to the posterior distribution associated to a scalar channel $Y=\sqrt{\hat{Q}_{12}}\beta_0+Z$, $\beta_0\sim P_0(\cdot)$, $Z\sim\mathcal{N}(0,1)$. In other words, $Q_{12}=\text{mmse}(\beta|\sqrt{\hat{Q}_{12}}\beta+Z)$.

This leaves us with only two simple equation
\begin{align}
  \hat{Q}_{12} &=\int_{\mathbb{R}}d\mu(\delta)\dfrac{c\delta}{Q_{12}\delta+\sigma^2}\label{eq:Qhat_12},\\
Q_{12}&=\text{mmse}(\beta|\sqrt{\hat{Q}_{12}}\beta+Z).\nonumber
\end{align}
To simplify (\ref{eq:Qhat_12}) we use properties of Kac-Murdock-Szeg{\"o} (\ref{eq:def_Lambda1}) matrix which are  described in details in the literature,  we need only to know that for any nice function $F$ we have
\begin{align}
    \int d\mu(\delta)F(\delta)=\frac{1}{\pi}\int_0^{\pi}F\Big(\frac{1}{1-2\lambda\cos \theta+\lambda^2}\Big)d\theta.
\end{align}
If we plug it to (\ref{eq:Qhat_12}) and calculate the integral explicitly, we obtain
\begin{align}
\hat{Q}_{12}=\dfrac{c}{\sqrt{(Q_{12}+(1-\lambda)^2\sigma^2)(Q_{12}+(1+\lambda)^2\sigma^2)}}.
\end{align}
After some simplifications  one can see that if $\lambda=0$ we recover the known formulas for model $\Y_N=\frac{1}{\sqrt{p}}\G_N\bbeta+\boldsymbol{Z}_N$ with $\G_N$ consisting of i.\,i.\,d. standard Gaussian elements (see e.\,g. \cite{tanaka2002statistical}, \cite{Mezard_Parisi_Virasoro}).




\subsection{Case 2: General diagonal \texorpdfstring{$\bA_p$}{A} with finite number of different eigenvalues}
Now we consider the more challenging part, i.e. when $\bA_p$ satisfies Assumption \ref{Assum:A}. As was mentioned before, we  consider matrices  $\bA_p$ with  finite number of different eigenvalues, denoted by $k$.

It is easy to see that nothing changes till (\ref{eq:covar_elem}). Since now we have multiple $\lambda_j$,  we are forced to have more  overlap parameters (and more equations in the future). That is, we introduce $k$ matrices $\{\mathbf{Q}^{(i)}\}_{i=1}^{k}$ of size $u\times u$ which elements are defined as $Q^{(i)}_{ab}=\frac{1}{|I_i|}\sum_{j\in I_i}(\bbeta_0-\bbeta_a)_j(\bbeta_0-\bbeta_b)_j$. Here $\{I_i\}_{i=1}^k$ are the sets of indexes $I_i=\{j:A_{jj}=\lambda_i\}$. Then, covariance matrix $\mathbf{R}_{\v}$ becomes
\begin{align}
    (\mathbf{R}_{\v})_{\mu\nu}^{ab}=\sum_{i=1}^kl_iQ_{ab}^{(i)}\sigma^{-2}\lambda_i^{|\mu-\nu|}(1-\lambda_i^2)^{-1}+\delta_{\mu,\nu}\sigma^2.
\end{align} 
Similarly to Case 1, we can write $\mathbf{R}_{\v}$ as a block matrix
\begin{align}
    \mathbf{R}_{\v}=\sum_{i=1}^k l_i\LLambda_{i,N}\otimes \mathbf{Q}^{(i)}+\sigma^2\mathbf{I}_N\otimes\mathbbm{1}_u,
\end{align} 
where $\LLambda_{i,N}$ are defined as in (\ref{eq:def_Lambda}).
Following the same steps as in the previous section, one can easily find that
\begin{multline}
        \ex \mathcal{Z}(\Y)^u
    = \int\prod_{i,a\leq b}dQ_{ab}^{(i)}d\hat{Q}^{(i)}_{ab}e^{-1/2\ln\det(\mathbf{I}_{Nu}+\mathbf{R}_{\v}/\sigma^2)-\sum_i\sum_{a\leq b}|I_i|\hat{Q}^{(i)}_{ab}Q^{(i)}_{ab}}\times\\
    \prod_{a=0}^udP_0(\bbeta^a)\prod_ie^{\sum_{a\leq b}\hat{Q}^{(i)}_{ab}\sum_{j\in I_i}(\bbeta_0-\bbeta_a)_j(\bbeta_0-\bbeta_b)_j}.
\end{multline}
The second exponent can be handled as in Case 1, so here we focus on finding the limit of $\frac{1}{pu}\ln\det(\mathbf{I}_{Nu}+\mathbf{R}_{\v}\sigma^2)$. As before, we assume replica symmetric ansatz, i.\,e. $Q_{ab}^{(i)}=Q_{12}^{(i)}$ for off diagonal and $Q_{aa}^{(i)}=Q_{11}^{(i)}$ for diagonal. Then all matrices $\mathbf{Q}^{(i)}$ share the same eigenvector space $(\mathbf{U})$ and $\mathbf{Q}^{(i)}=\mathbf{U}\mathbf{D}^{(i)}\mathbf{U}^\intercal$, where $\mathbf{D}^{(i)}=diag\{uQ^{(i)}_{12}+Q^{(i)}_{11}-Q^{(i)}_{12},Q^{(i)}_{11}-Q^{(i)}_{12},\ldots,Q^{(i)}_{11}-Q^{(i)}_{12}\}$. Also, since  matrix $\mathbbm{1}_u$ has the same structure, it can be decomposed in the same fashion $\mathbbm{1}_u=\mathbf{U}u\mathbf{e_1e_1}^\intercal \mathbf{U}^\intercal$, where $\mathbf{e_1}=(1,0,\ldots,0)^\intercal$. Using the basic properties of the Kronecker product we decompose each term as $\LLambda_{i,N}\otimes \mathbf{U}\mathbf{D}^{(i)}\mathbf{U}^\intercal=(\mathbf{I}_N\otimes \mathbf{U})(\LLambda_{i,N}\otimes \mathbf{D}^{(i)})(\mathbf{I}_N\otimes \mathbf{U}^\intercal)$. This gives us the opportunity to rewrite
\begin{multline}
    \mathbf{I}_{Nu}+\mathbf{R}_{\v}/\sigma^2=\sum_{i=1}^k\frac{l_i}{\sigma^2}\LLambda_{i,N}\otimes \mathbf{U}\mathbf{D}^{(i)}\mathbf{U}^\intercal+\mathbf{I}_N\otimes \mathbf{U} (u\mathbf{e_1e_1}^\intercal+\mathbf{I}_u) \mathbf{U}^\intercal=\\
    (\mathbf{I}_N\otimes \mathbf{U})(\sum_{i=1}^k\frac{l_i}{\sigma^2}\LLambda_{i,N}\otimes \mathbf{D}^{(i)} +\mathbf{I}_N\otimes (u\mathbf{e_1e_1}^\intercal+\mathbf{I}_u))(\mathbf{I}_N\otimes \mathbf{U}^\intercal)
\end{multline}
Since we are interested only in determinant of this expression, we can disregard factors $(\mathbf{I}_N\otimes \mathbf{U})$ and $(\mathbf{I}_N\otimes \mathbf{U}^\intercal)$, as their product gives identity matrix. Finally, it is easy to see that  Kronecker product is commutative up to a permutation of rows and columns, which gives us

\begin{multline}
    \det(\mathbf{I}_{Nu}+\mathbf{R}_{\v}/\sigma^2)=\det(\sum_{i=1}^k\frac{l_i}{\sigma^2}\mathbf{D}^{(i)}\otimes \LLambda_{i,N} + (u\mathbf{e_1e_1}^\intercal+\mathbf{I}_u)\otimes \mathbf{I}_N)\\
    =\det(\sum_{i=1}^k\frac{l_i}{\sigma^2}(Q^{(i)}_{11}-Q^{(i)}_{12})\LLambda_{i,N} + \mathbf{I}_N)^u\det(\mathbf{I}_N+(\sum_{i=1}^k\frac{ul_i}{\sigma^2}Q^{(i)}_{12}\LLambda_{i,N} + u\mathbf{I}_N)(\sum_{i=1}^k\frac{l_i}{\sigma^2}(Q^{(i)}_{11}-Q^{(i)}_{12})\LLambda_{i,N} + \mathbf{I}_N)^{-1}),
\end{multline}
the last equality being true because all terms are block-diagonal matrices. Now we can easily find the limit
\begin{multline}
    \lim_{p,u}\frac{1}{pu}\ln(\det(\mathbf{I}_{Nu}+\mathbf{R}_{\v}/\sigma^2))^{-1/2}
   =-\frac{c}{2}\lim_{N}\Big[\frac{1}{N}Tr\ln (\sum_{i=1}^k\frac{l_i}{\sigma^2}\LLambda_{i,N}(Q^{(i)}_{11}-Q^{(i)}_{12})+\mathbf{I}_N)\\
   +\frac{1}{N}Tr(\sum_{i=1}^k\frac{l_i}{\sigma^2}\LLambda_{i,N}Q^{(i)}_{12}+\mathbf{I}_N)( \sum_{i=1}^k\frac{l_i}{\sigma^2}\LLambda_{i,N}(Q^{(i)}_{11}-Q^{(i)}_{12})+\mathbf{I}_N)^{-1}\Big].\label{eigen_KMS_mid}
\end{multline}
Here we have limiting eigenvalue statistics of two matrices, each of them consists of the sums of KMS matrices, we mention that KMS matrices asymptotically share the same eigenvector space (for more details we again refer to \cite{grenandertoeplitz}), so in the limit we have
\begin{multline}
        \lim_{p,u}\frac{1}{pu}\ln(\det(\mathbf{I}_{Nu}+\mathbf{R}_{\v}/\sigma^2))^{-1/2}\\
   =-\frac{c}{2\pi}\Big[\int_{0}^{\pi}\ln (\sum_{i=1}^kl_i\delta_i(\theta)(Q^{(i)}_{11}-Q^{(i)}_{12})/\sigma^2+1)d\theta+\int_{0}^{\pi}\frac{ \sum_{i=1}^kl_i\delta_i(\theta)Q^{(i)}_{12}/\sigma^2+1}{ \sum_{i=1}^kl_i\delta_i(\theta)(Q^{(i)}_{11}-Q^{(i)}_{12})/\sigma^2+1}d\theta\Big],\label{eigen_KMS}
\end{multline}
where we denote
    \begin{align}
        \delta_i(\theta)=\dfrac{1}{1-2\lambda_i\cos \theta+\lambda_i^2}.
    \end{align}
Returning to $\ex \mathcal{Z}(\Y)^u$, we obtain (using the same steps as Case 1)
 \begin{align}
    \ex \mathcal{Z}(\Y)^u=\int \prod_i dQ^{(i)}_{11}dQ^{(i)}_{12}d\hat{Q}^{(i)}_{11}\hat{Q}^{(i)}_{12}\exp\{-\frac{1}{2}up\mathcal{F}(Q^{(i)}_{11},Q^{(i)}_{12},\hat{Q}^{(i)}_{11},\hat{Q}^{(i)}_{12})+o(up)\},
\end{align}
where
\begin{multline}
   \mathcal{F}(Q^{(i)}_{11},Q^{(i)}_{12},\hat{Q}^{(i)}_{11},\hat{Q}^{(i)}_{12}) =\frac{c}{\pi}\int_{0}^{\pi}\ln (\sum_{i=1}^kl_i\delta_i(\theta)(Q^{(i)}_{11}-Q^{(i)}_{12})/\sigma^2+1)d\theta\\
   +\frac{c}{\pi}\int_{0}^{\pi}\frac{ \sum_{i=1}^kl_i\delta_i(\theta)Q^{(i)}_{12}/\sigma^2+1}{ \sum_{i=1}^kl_i\delta_i(\theta)(Q^{(i)}_{11}-Q^{(i)}_{12})/\sigma^2+1}d\theta
   -\sum_{i=1}^kl_iQ^{(i)}_{12}\hat{Q}^{(i)}_{12}+2\sum_{i=1}^kl_iQ^{(i)}_{11}\hat{Q}^{(i)}_{11}\\
   -2\sum_{i=1}^kl_i\int DzdP(\beta_0)\ln \int dP_0(\beta)e^{z\sqrt{\hat{Q}^{(i)}_{12}}(\beta_0-\beta)-1/2(\hat{Q}^{(i)}_{12}-2\hat{Q}^{(i)}_{11})(\beta_0-\beta)^2}.
   \end{multline}
   Similarly to Case 1 we use saddle point method and take derivatives with respect to all parameters. One can easily verify that derivatives with respect to $\hat{Q}^{(i)}_{12}$ and $\hat{Q}^{(i)}_{12}$ stay identical to (\ref{eq:part_eq_hat1}-\ref{eq:part_eq_hat2}), for $Q^{(i)}_{12}$ and $Q^{(i)}_{11}$ we have
   \begin{align}\label{eq:part_der_11_diag}
        &\dfrac{\partial \mathcal{F}}{\partial Q^{(i)}_{11}}
    =\frac{c}{\pi}\int_{0}^{\pi} \frac{l_i\delta_i(\theta)(\sum_{j=1}^kl_j(Q^{(j)}_{11}-2Q^{(j)}_{12})\delta_j(\theta)/\sigma^4)}{(\sum_{j=1}^k(Q^{(j)}_{11}-Q^{(j)}_{12})l_j\delta_j(\theta)/\sigma^2+1)^2}d\theta+2l_i\hat{Q}^{(i)}_{11}\\
    &\dfrac{\partial \mathcal{F}}{\partial Q^{(i)}_{12}}=\frac{c}{\pi}\int_{0}^{\pi}\frac{l_i\delta_i(\theta)/\sigma^2(\sum_{j=1}^kl_j\delta_j(\theta) Q^{(j)}_{12}/\sigma^2+1)}{(\sum_{j=1}^k(Q^{(j)}_{11}-Q^{(j)}_{12})l_j\delta_j(\theta)/\sigma^2+1)^2}d\theta-l_i\hat{Q}^{(i)}_{12}.
   \end{align}
   
  Using again Nishimori identity one can show that sought-after   parameters should satisfy $Q^{(i)}_{11}=2Q^{(i)}_{12}$ for all $i$. As before this allows us to simplify the system for stationary points, from (\ref{eq:part_der_11_diag}) we obtain immediately that $\hat{Q}^{(i)}_{11}=0$ and we can disregard equation for $Q_{11}^{(i)}$. This leaves us with the
self-consistent system of $2k$ equations:
\begin{align}
    \hat{Q}^{(i)}_{12}&
    =\frac{c}{\pi}\int_{0}^\pi\frac{\delta_i(\theta)d\theta}{\sum_jQ^{(j)}_{12}l_j\delta_j(\theta)+\sigma^2}\\\label{eq:fixed_point_k_lam}
    Q^{(i)}_{12}&=  \int DzdP_0(\beta_0)\Big(\dfrac{\int 
            dP_0(\beta)e^{z\sqrt{\hat{Q}^{(i)}_{12}}(\beta_0-\beta)-1/2\hat{Q}^{(i)}_{12}(\beta_0-\beta)^2}(\beta_0-\beta)}{\int dP_0(\beta)e^{z\sqrt{\hat{Q}^{(i)}_{12}}(\beta_0-\beta)-1/2\hat{Q}^{(i)}_{12}(\beta_0-\beta)^2}}\Big)^2,\quad i=1,\ldots,k.\nonumber
\end{align}
We again remark that the right hand side of the last equation is $\text{mmse}(\beta|\sqrt{\hat{Q}^{(i)}_{12}}\beta+Z)$ with $\beta\sim P_0$ and $Z\sim\mathcal{N}(0,1)$.
Wrapping up the calculation we are ready to introduce the final replica formula for the limiting free energy
\begin{multline}
    \bar{f}:=\lim_{p\rightarrow+\infty}\ex f_p=\inf_{\Gamma}\Big[\frac{c}{2\pi}\int_{0}^{\pi}\ln (\sum_{i=1}^kl_i\delta_i(\theta)r_{2,i}/\sigma^2+1)d\theta
   -\frac{1}{2}\sum_{i=1}^kl_ir_{2,i}r_{1,i}\\
   +\sum_{i=1}^kl_iI(\beta;\sqrt{r_{1,i}}\beta+Z)+\frac{c+1}{2}\Big],
\end{multline}
where $\Gamma$ is the set of critical points of the potential: 
\begin{align}
    \Gamma=\{(\mathbf{r}_1, \mathbf{r}_2)\in\mathbb{R}_+^k\times[0,\rho]^k|\forall\, i=1,\ldots,k:\, r_{2,i}=\text{mmse}(\beta|\sqrt{r_{1,i}}\beta+Z),\, \mathbf{r}_1=\mathcal{A}(\mathbf{r}_2)\}.
\end{align}

Finally we state the replica conjecture, for the limit of the mutual information.The  replica symmetric potential is defined as
\begin{align}\label{eq:replica_potential}
    i_{\rm RS}(\mathbf{r}_1,\mathbf{r}_2)=\frac{c}{2\pi}\int_{0}^{\pi}\ln (\sum_{i=1}^kl_i\delta_i(\theta)r_{2,i}/\sigma^2+1)d\theta
   -\frac{1}{2}\sum_{i=1}^kl_ir_{2,i}r_{1,i}
   +\sum_{i=1}^kl_iI(\beta;\sqrt{r_{1,i}}\beta+Z),
\end{align}
 where $\mathbf{r}_1=(r_{1,1},\ldots,r_{1,k})$ and $\mathbf{r}_2=(r_{2,1},\ldots,r_{2,k})$ are $k$~--dimensional vectors of parameters.
We have 
\begin{align}
    \lim_{p\rightarrow+\infty}i_p=\inf_{\mathbf{r}_1\in\mathbb{R}_+^k}\sup_{\mathbf{r}_2\in[0,\rho]^k}i_{\rm RS}(\mathbf{r}_1,\mathbf{r}_2).
    \end{align}

\section{Kac-Murdock-Szeg\"{o} matrices}\label{apx:KMS}
The Kac-Murdock-Szeg\"{o} (KMS) matrix is the special class of  symmetric Toeplitz matrices of the general form 
\begin{align}
\mathbf{ A}_{N,\rho} = \begin{bmatrix}    1          & \rho       & \rho^2 & \dots  & \rho^{N-1} \\    \rho       & 1          & \rho   & \dots  & \rho^{N-2} \\    \rho^2     & \rho       & 1      & \ddots & \vdots     \\    \vdots     & \vdots     & \ddots & \ddots & \rho       \\    \rho^{N-1} & \rho^{N-2} & \dots  & \rho   & 1 \end{bmatrix} \in\mathbb{R}^{N\times N} \qquad(1) 
\end{align}
the term originating from the 1953
paper \cite{Kac1953OnTE} by Kac, Murdock, and Szeg\"{o}. In \cite{Kac1953OnTE} (see also the book \cite{grenandertoeplitz} by Grenander and Szeg\"{o}), a detailed study of
such matrices was an important first step for the development of general theorems on the
asymptotic (large- $n$ ) behaviour of eigenvalues and determinants of Toeplitz matrices. The entries of such matrix come from a Fourier coefficients of the  function
\begin{align}
    F_\rho(\theta)=\frac{1-\rho^2}{1-2\rho\cos\theta+\rho^2}=\sum_{n=-\infty}^{+\infty}\rho^{|n|}e^{in\theta},
\end{align}
and because of this such function also sometimes called  generating function of $\bA_{N,\rho}$. From \cite{grenandertoeplitz}, the eigenvalues of $\lambda_{1}(\bA_{N,\rho})< \lambda_{2}(\bA_{N,\rho})<\ldots<\lambda_{N}(\bA_{N,\rho})$ are
\begin{align}
    \lambda_{i}(\bA_{N,\rho})=F_\rho(\theta_{i,N}),
\end{align}
where $\frac{(i-1)\pi}{N+1}<\theta_{i,N}<\frac{i\pi}{N+1}$, for $i=1,\ldots,N$. As $N\rightarrow\infty$, for any continuous function $f$ we have
\begin{align}
    \lim_{N\rightarrow\infty}\frac{1}{N}\sum_{i=1}^Nf(\lambda_{i}(\bA_{N,\rho}))=\frac{1}{\pi}\int_{0}^\pi f(F_\rho(\theta))d\theta.
\end{align}
And as consequence, the sequence of eigenvalues $\{\lambda_k(\bA_{N,\rho}\}$ is asymptotically equally distributed to  the sequence $\{F_{\rho}(2\pi k/N)\}$.
The same is true for the functions of several KMS matrices. An important property, that we mentioning throughout the proofs is that even thought for fixed $N$ eigenvectors of $\bA_N(\rho)$ are different for different values of $\rho$, in the limit  it does not depend on $\rho$. More precisely, if we take the set of KMS matrix $\{ \bA_N(\rho_1),\bA_N(\rho_1),\ldots,\bA_N(\rho_m)$, then for a continuous function $f$, the distribution of eigenvalues of $f(\bA_{N,\rho_1},\ldots,\bA_{N,\rho_m})$ is asymptotically equivalent to distribution of $\{f(F_{\rho_1}(2\pi k/N),\ldots,F_{\rho_m}(2\pi k/N))\}_{k=1}^N$.

\section{Nishimori identity}\label{apx:Nish}
In this paper we  place ourselves  in Bayesian-optimal setting which allows us to use so called Nishimori identities to simplify significantly some parts of calculations.   They were initially discovered in the context of the gauge theory of spin glasses \cite{nishimori01}.
To introduce them, we consider a generic inference problem where a Bayes-optimal statistician observes $\Y$ that is a random function of some ground truth signal $\bbeta^*$: $\Y\sim P_{y|\beta}(\bbeta^*)$. Then the following holds:
\begin{proposition}[Nishimori identity]
For any bounded function $f$ of the signal $\bbeta^*$, the data $\Y$ and of conditionally i.i.d. samples from the posterior $\bbeta^j\sim P_{\beta| y}(\,\cdot | \Y)$, $j=1,2,\ldots,n$, we have that
\begin{align}
    \ex\langle f(\Y,\bbeta^*,\bbeta^2,\ldots,\bbeta^{n})\rangle=\ex\langle f(\Y,\bbeta^1,\bbeta^2,\ldots,\bbeta^{n})\rangle
\end{align}
where the bracket notation $\langle \,\cdot\,\rangle$ is used for the joint expectation over the posterior samples $(\bbeta^j)_{j\le n}$, $\ex$ is over the signal $\bbeta^*$ and data $\Y$.
\end{proposition}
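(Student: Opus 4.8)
The plan is to derive the identity from the single structural fact that, in the Bayes-optimal setting, the planted signal $\bbeta^*$ conditioned on the data $\Y$ has exactly the posterior law $P_{\beta|y}(\cdot\mid\Y)$. Let $\PP$ denote the joint law under which $\bbeta^*\sim P_0$ and $\Y\mid\bbeta^*\sim P_{y|\beta}(\bbeta^*)$, and enlarge the probability space by drawing $\bbeta^1,\dots,\bbeta^n$ which, given $\Y$, are i.i.d.\ from $P_{\beta|y}(\cdot\mid\Y)$ and, given $\Y$, independent of $\bbeta^*$. The first step is to record Bayes' rule in the form: for every bounded measurable $g$, $\ex[g(\bbeta^*,\Y)]=\ex\,\ex[g(\bbeta^1,\Y)\mid\Y]$; this is precisely the statement that $P_{\beta|y}$ is a regular conditional distribution of $\bbeta^*$ given $\Y$, and it is the only place where knowledge of $P_0$ and of the channel is used.

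The key step is then the observation that, conditionally on $\Y$, the $(n+1)$-tuple $(\bbeta^*,\bbeta^1,\bbeta^2,\dots,\bbeta^n)$ is a family of i.i.d.\ draws from the single law $P_{\beta|y}(\cdot\mid\Y)$: the coordinates $\bbeta^1,\dots,\bbeta^n$ are i.i.d.\ from it by construction, $\bbeta^*$ shares that conditional law by the previous step, and $\bbeta^*$ is conditionally independent of $(\bbeta^1,\dots,\bbeta^n)$ given $\Y$, again by construction. Hence this tuple is exchangeable given $\Y$, so transposing $\bbeta^*$ with $\bbeta^1$ while leaving $\bbeta^2,\dots,\bbeta^n$ untouched preserves the conditional law, which yields
\[
\ex\big[\,f(\Y,\bbeta^*,\bbeta^2,\dots,\bbeta^n)\,\big|\,\Y\,\big]=\ex\big[\,f(\Y,\bbeta^1,\bbeta^2,\dots,\bbeta^n)\,\big|\,\Y\,\big]
\]
for every bounded measurable $f$ (boundedness guarantees all conditional expectations are well defined and that the outer integration below is legitimate). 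Taking $\ex$ over $\Y$, the left-hand side becomes $\ex\langle f(\Y,\bbeta^*,\bbeta^2,\dots,\bbeta^n)\rangle$ — the outer expectation also carrying the integration over $\bbeta^*$, consistently with $\bbeta^*$ being part of the data-generating randomness — and the right-hand side becomes $\ex\langle f(\Y,\bbeta^1,\dots,\bbeta^n)\rangle$, which is exactly the asserted identity.

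I do not expect a genuine obstacle: the content is a clean consequence of Bayes' rule together with exchangeability. The only points needing minor care are measure-theoretic bookkeeping — existence of regular conditional distributions, an application of Fubini/Tonelli justified by boundedness of $f$, and being explicit that "replacing $\bbeta^*$ by $\bbeta^1$" is a single transposition inside a family that is fully exchangeable, so that the same argument in fact lets one replace any subset of the posterior samples by $\bbeta^*$ and independent copies of it.
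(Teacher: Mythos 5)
Your proof is correct and follows exactly the approach the paper indicates, namely that the identity "follows directly from Bayes' rule": you observe that conditionally on $\Y$ the tuple $(\bbeta^*,\bbeta^1,\ldots,\bbeta^n)$ is i.i.d.\ from the posterior, hence exchangeable, and then integrate over $\Y$. The paper itself defers the details to an external reference, so your write-up is essentially the elementary proof being cited.
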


\begin{proof}
    The proof follows directly from Bayes' rule. An elementary proof can be found in \cite{Lelarge2017FundamentalLO}.
\end{proof}

\section{Concentration of the free energy}\label{apx:concen_z}
\begin{proposition}[Concentration of the free energy]\label{prop:concent_free_energy}
    \begin{equation}\ex\Big[\Big|\frac{1}{p}\log \mathcal{Z}-\ex\Big[\frac{1}{p}\log \mathcal{Z}\Big]\Big|^2\Big]\leq Cp^{-1}\label{free_conc_1}\end{equation}
\end{proposition}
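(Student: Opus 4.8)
The plan is to prove that $\mathrm{Var}\big(\tfrac1p\log\mathcal Z\big)\le C/p$ (uniformly in $t\in[0,1]$ and $\boldsymbol\epsilon$) by splitting the quenched randomness into the Gaussian block $(\mathbf G,\V,\Z,\tilde\Z)$ and the i.i.d.\ prior vector $\bbeta_0$ (with compactly supported law), and using the law of total variance
\[
\mathrm{Var}\big(\tfrac1p\log\mathcal Z\big)\le \ex\big[\mathrm{Var}\big(\tfrac1p\log\mathcal Z\mid\bbeta_0\big)\big]+\mathrm{Var}\big(\ex[\tfrac1p\log\mathcal Z\mid\bbeta_0]\big).
\]
I would control the first term with the Gaussian--Poincar\'e inequality in $(\mathbf G,\V,\Z,\tilde\Z)$ and the second with the Efron--Stein inequality over the $p$ i.i.d.\ bounded coordinates of $\bbeta_0$. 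This follows the same scheme as the free-energy concentration arguments of \cite{barbier2019adaptive,barbier2019optimal}; the only model-specific point is that the correlated design enters the interpolating Hamiltonian \eqref{interpol_hamiltonian} only through the matrices $\LLambda^{(i)}$, whose operator norms $\|\LLambda^{(i)}\|^2=\|\LLambda_{i,N}\|$ are bounded uniformly in $N$ (as $|\lambda_i|<1$, cf.\ Appendix~\ref{apx:KMS}), so every estimate below is of the same order as for an i.i.d.\ Gaussian design.

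For the Gaussian part, conditionally on $\bbeta_0$ Gaussian--Poincar\'e bounds the conditional variance by $\ex\|\nabla_{(\mathbf G,\V,\Z,\tilde\Z)}(\tfrac1p\log\mathcal Z)\|^2$. Each partial derivative of $\tfrac1p\log\mathcal Z$ equals $-\tfrac1p$ times the interpolating Gibbs average (w.r.t.\ \eqref{eq:post_inter}) of the corresponding derivative of $\mathcal H$, which in terms of the residual vector $\mathbf u_t$ of \eqref{utnu} takes the schematic forms $-\tfrac1{\sigma^2 p}\langle u_{t,\mu}\rangle$, $-\tfrac1p\langle(\sqrt{R_{1,i}}(\bbeta_0^{(i)}-\bbeta^{(i)})+\tilde\Z^{(i)})_j\rangle$, $-\tfrac1{\sigma^2 p}\sqrt{R_{2,i}l_i}\langle(\LLambda^{(i)}\mathbf u_t)_j\rangle$ and $-\tfrac1{\sigma^2 p}\sqrt{\tfrac{1-t}p}\langle(\beta^{(i)}_{0,\ell}-\beta^{(i)}_\ell)(\LLambda^{(i)}\mathbf u_t)_j\rangle$. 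Summing the squares, using Jensen on the Gibbs bracket ($\langle\cdot\rangle^2\le\langle\cdot^2\rangle$), Cauchy--Schwarz, $\|\LLambda^{(i)}\|=O(1)$, compact support of $P_0$ (so $\|\bbeta_0^{(i)}-\bbeta^{(i)}\|^2=O(p)$ a.s.) and boundedness of $R_{1,i}(t),R_{2,i}(t)$, each block contributes $O(p^{-2})\big(\ex\langle\|\mathbf u_t\|^2\rangle+\ex\|\Z\|^2+\ex\|\tilde\Z\|^2\big)$. Since $\ex\langle\|\mathbf u_t\|^2\rangle=O(p)$ — a byproduct of the computation already carried out in the proof of Lemma~\ref{der_lemma}, which even gives $\ex\langle\|\mathbf u_t\|^4\rangle=O(p^2)$ — and $\ex\|\Z\|^2,\ex\|\tilde\Z\|^2=O(p)$, this yields $\ex\|\nabla(\tfrac1p\log\mathcal Z)\|^2=O(1/p)$, hence $\ex[\mathrm{Var}(\tfrac1p\log\mathcal Z\mid\bbeta_0)]=O(1/p)$.

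For the prior part, set $h(\bbeta_0):=\ex[\tfrac1p\log\mathcal Z\mid\bbeta_0]$, let $\bbeta_0^{[j]}$ be $\bbeta_0$ with its $j$-th coordinate resampled from $P_0$ independently, and note that by Efron--Stein and Jensen it suffices to prove $\ex\big[(\tfrac1p\log\mathcal Z(\bbeta_0)-\tfrac1p\log\mathcal Z(\bbeta_0^{[j]}))^2\big]=O(1/p^2)$ for each $j$, so the sum over $j$ is $O(1/p)$. Only $\Y_t$ (through the $j$-th column $\boldsymbol\Phi_{\cdot j}$, since $\Y_t$ contains $\sqrt{\tfrac{1-t}p}\boldsymbol\Phi\bbeta_0$) and one coordinate of $\tilde\Y_t^{(i)}$ depend on $\beta_{0,j}$; interpolating linearly in $\beta_{0,j}$ and differentiating $\tfrac1p\log\mathcal Z$ along the path, the increment equals $-\tfrac1p\int_0^1\big(\tfrac1{\sigma^2}\langle\mathbf u_t^{(s)}\rangle_s^{\intercal}\Delta\Y+O(1)\big)\,ds$, where $\langle\cdot\rangle_s$ is the Gibbs measure of the $s$-interpolated model, $\Delta\Y=\sqrt{\tfrac{1-t}p}\,\boldsymbol\Phi_{\cdot j}(\beta_{0,j}'-\beta_{0,j})$, and the $O(1)$ term is a.s.\ bounded because $P_0$ has compact support. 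After Cauchy--Schwarz in $s$, the whole matter reduces to proving $\ex\big[(\langle\mathbf u_t\rangle^{\intercal}\boldsymbol\Phi_{\cdot j})^2\big]=O(p)$ (uniformly in $t$ and along the interpolation).

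This last estimate is the step I expect to be the main obstacle: the crude bound $|\langle\mathbf u_t\rangle^{\intercal}\boldsymbol\Phi_{\cdot j}|\le\|\langle\mathbf u_t\rangle\|\,\|\boldsymbol\Phi_{\cdot j}\|$ only gives $O(p^2)$ in second moment (both norms are typically $\asymp\sqrt p$), which would leave $\mathrm{Var}(h)=O(1)$ and be useless. One must exploit the block structure: $\boldsymbol\Phi_{\cdot j}=\LLambda^{(i)}\mathbf G^{(i)}_{\cdot j'}$ depends only on the $j'$-th column of $\mathbf G^{(i)}$, which is independent of the part $\mathbf u_t^{(-j)}$ of $\mathbf u_t$ obtained by deleting the $j$-th summand of $\boldsymbol\Phi(\bbeta_0-\bbeta)$. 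Writing $\langle\mathbf u_t\rangle^{\intercal}\boldsymbol\Phi_{\cdot j}=\langle\mathbf u_t^{(-j)\intercal}\boldsymbol\Phi_{\cdot j}\rangle+\sqrt{\tfrac{1-t}p}\,\langle\beta_{0,j}-\beta_j\rangle\,\|\boldsymbol\Phi_{\cdot j}\|^2$, the first term's second moment is, after taking the conditional expectation over $\mathbf G^{(i)}_{\cdot j'}$ and using Jensen, at most $\ex\big[\|\LLambda^{(i)}\|^2\langle\|\mathbf u_t^{(-j)}\|^2\rangle\big]=O(p)$, while the self-term's second moment is $\tfrac{1-t}p$ times $\ex\big[\langle(\beta_{0,j}-\beta_j)^2\rangle\,\|\boldsymbol\Phi_{\cdot j}\|^4\big]=O(p)$ since $\|\boldsymbol\Phi_{\cdot j}\|^2=\mathbf G^{(i)\intercal}_{\cdot j'}\LLambda_{i,N}\mathbf G^{(i)}_{\cdot j'}$ concentrates around $\mathrm{Tr}\,\LLambda_{i,N}=O(N)$. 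Hence $\ex\big[(\langle\mathbf u_t\rangle^{\intercal}\boldsymbol\Phi_{\cdot j})^2\big]=O(p)$, so $\ex\big[(\langle\mathbf u_t\rangle^{\intercal}\Delta\Y)^2\big]=O(1)$ and $\ex\big[(\tfrac1p\log\mathcal Z(\bbeta_0)-\tfrac1p\log\mathcal Z(\bbeta_0^{[j]}))^2\big]=O(1/p^2)$, giving $\mathrm{Var}(h)=O(1/p)$. Adding the two contributions proves \eqref{free_conc_1}. (Throughout, $\mathcal Z$ denotes the interpolating partition function \eqref{eq:part_inter}; the same argument, with the column-independence refinement no longer needed, covers the original free energy \eqref{free_energy_def}.)
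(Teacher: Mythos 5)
Your overall architecture matches the paper's: a two-layer decomposition of the variance, with the Gaussian part controlled by Gaussian--Poincar\'e and the prior part controlled by a coordinate-replacement (bounded difference/Efron--Stein) inequality exploiting the compact support of $P_0$. The paper splits the Gaussian block into two sub-steps (Lemma~\ref{free_conc_2} conditioning on $(\Z,\tilde\Z)$, then Lemma~\ref{free_conc_3} conditioning on $(\G,\V)$) while you do both at once; that is a harmless cosmetic difference and the gradient estimates you sketch are essentially those of Lemmas~\ref{free_conc_2}--\ref{free_conc_3}. The genuine point of divergence, and the place your argument breaks, is the prior part.

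The paper takes the function $h(\bbeta_0)=\ex_{\mathbf W_2}\bigl[\tfrac1p\log\mathcal Z\bigr]$, already averaged over \emph{all} the Gaussians, and applies a \emph{bounded} difference inequality, i.e.\ it needs $|h(\bbeta_0)-h(\bbeta_0^{\to j})|\le Cp^{-1}$ for every fixed pair of coordinate values. This is provable precisely because the outer $\ex_{\mathbf G}$ is available: the dangerous term $\ex_{\mathbf W_2}\bigl[\langle\mathbf u_t\rangle^\intercal\boldsymbol\Phi_{\cdot j}\bigr]$ can be integrated by parts with respect to $\mathbf G^{(i)}_{\cdot j'}$, which converts it into $\sqrt{(1-t)/p}\,\ex_{\mathbf W_2}\bigl[\mathrm{Tr}(\LLambda_{i,N})\langle\beta_{0,j}-\beta_j\rangle\bigr]$ plus a Gibbs-covariance term, both of order $O(\sqrt p)$; together with the prefactor $p^{-3/2}$ this gives $\partial_{\beta_{0,j}}h=O(1/p)$ uniformly, hence $\sum_j c_j^2=O(1/p)$.

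You instead use Efron--Stein on $\tfrac1p\log\mathcal Z$ directly, which requires the strictly stronger statement $\ex\bigl[(\langle\mathbf u_t\rangle^\intercal\boldsymbol\Phi_{\cdot j})^2\bigr]=O(p)$ \emph{without} any residual Gaussian averaging to integrate by parts against. You correctly observe that the naive Cauchy--Schwarz bound gives $O(p^2)$ and is useless, and then attempt a decoupling. That step is where the gap is: you write $\langle\mathbf u_t\rangle^\intercal\boldsymbol\Phi_{\cdot j}=\langle\mathbf u_t^{(-j)\intercal}\boldsymbol\Phi_{\cdot j}\rangle+\ldots$ and assert that the first term's second moment can be bounded by taking the conditional expectation over $\mathbf G^{(i)}_{\cdot j'}$ "inside" the Gibbs bracket. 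But the posterior $\langle\cdot\rangle$ itself depends on $\mathbf G^{(i)}_{\cdot j'}$ through the data $\Y_t$ in the Hamiltonian, so $\mathbf G^{(i)}_{\cdot j'}$ and the Gibbs weight are \emph{not} independent, and the exchange $\ex_{\mathbf G^{(i)}_{\cdot j'}}\langle\cdot\rangle\ne\langle\ex_{\mathbf G^{(i)}_{\cdot j'}}\cdot\rangle$ is invalid. To make this approach rigorous you would need a genuine leave-one-out/cavity construction (replace $\langle\cdot\rangle$ by a posterior built from data not involving $\boldsymbol\Phi_{\cdot j}$ and bound the change of measure), which is substantially more work than the paper's bounded-difference route. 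I recommend reverting to the paper's scheme: first integrate out $\mathbf W_2=(\Z,\tilde\Z,\G,\V)$ via Gaussian--Poincar\'e (your Step~1 already does this, combining Lemmas~\ref{free_conc_2}--\ref{free_conc_3}), and \emph{then} apply bounded differences on the Gaussian-averaged function $h$, where the Gaussian integration by parts makes the $O(1/p)$ increment available deterministically.
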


The proof follows by hierarchical concentration results. In particular 
 in the first step we prove concentration inequality when expectation is taken conditional on the variables $V^{(i)}_j$s, $G^{(i)}_{jk}$s, and $\beta^{(i)}_{0,j}$s. In the next step we prove the concentration inequality when the expectation is taken 
 conditional on $\beta^{(i)}_{0.j}$, while in the final step we  prove concentration around the unconditional expectation. Thus Lemma \ref{free_conc_1} follows from  the Lemmas \ref{free_conc_2}, \ref{free_conc_3}, and \ref{free_conc_4}. 
Before diving into the calculations, we recall two classical concentration results, which will be the core of the proofs. Their proofs can be found in \cite{Boucheron2004}, Chapter 3.

\begin{lemma}[Poincar\'e-Nash inequality]\label{lem:P-N}
    Let $g:\mathbb{R}^p\rightarrow \mathbb{R}$ be a continuously differentiable function and $\mathbf{X}=(X_1,\ldots,X_p)$ be a random vector with i.i.d. $\mathcal{N}(0,1)$ entries. Then the following holds:
    $$\mathrm{Var}(g(\mathbf{X}))\leq \ex[\|\nabla g\|^2].$$
\end{lemma}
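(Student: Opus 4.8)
The plan is to give the classical semigroup proof of the Gaussian Poincar\'e inequality, so that the only inputs are a couple of elementary facts about the Ornstein--Uhlenbeck (OU) semigroup. First I would introduce, for $h:\mathbb{R}^p\to\mathbb{R}$, the OU semigroup
\[
P_th(\mathbf{x}):=\ex\big[h\big(e^{-t}\mathbf{x}+\sqrt{1-e^{-2t}}\,\mathbf{Z}\big)\big],\qquad \mathbf{Z}\sim\mathcal{N}(0,\mathbf{I}_p),
\]
and record four properties: (a) $P_0h=h$ and $P_th\to\ex[h]$ as $t\to\infty$ in $L^2$ of the standard Gaussian; (b) $P_t$ is an $L^2$-contraction, $\ex[(P_th)^2]\le\ex[h^2]$, which is immediate from Jensen applied to the average defining $P_t$ together with invariance of the standard Gaussian under the OU flow; (c) the commutation identity $\partial_i(P_tg)=e^{-t}P_t(\partial_ig)$, obtained by differentiating under the expectation; and (d) the generator $L:=\Delta-\mathbf{x}\cdot\nabla$ satisfies $\frac{d}{dt}P_tg=LP_tg$ and, by Gaussian integration by parts, $\ex[f\,Lh]=-\ex[\nabla f\cdot\nabla h]$.

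Granting these, the computation is short. I would write
\[
\mathrm{Var}(g)=\ex[g^2]-(\ex g)^2=\ex[(P_0g)^2]-\lim_{t\to\infty}\ex[(P_tg)^2]=-\int_0^\infty\frac{d}{dt}\ex[(P_tg)^2]\,dt,
\]
then use (d) to obtain $\frac{d}{dt}\ex[(P_tg)^2]=2\ex[(P_tg)\,L(P_tg)]=-2\ex[\|\nabla P_tg\|^2]$, so that $\mathrm{Var}(g)=\int_0^\infty 2\,\ex[\|\nabla P_tg\|^2]\,dt$, and finally bound the integrand with (c) and (b),
\[
\ex[\|\nabla P_tg\|^2]=\sum_{i=1}^p e^{-2t}\,\ex[(P_t\partial_ig)^2]\le e^{-2t}\sum_{i=1}^p\ex[(\partial_ig)^2]=e^{-2t}\,\ex[\|\nabla g\|^2].
\]
Since $\int_0^\infty 2e^{-2t}\,dt=1$, this gives $\mathrm{Var}(g)\le\ex[\|\nabla g\|^2]$. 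An equally short alternative would be to expand $g$ in the Hermite basis $\{H_\alpha\}_{\alpha\in\mathbb{N}^p}$, use $\partial_iH_\alpha=\sqrt{\alpha_i}\,H_{\alpha-e_i}$ to get $\mathrm{Var}(g)=\sum_{\alpha\neq 0}c_\alpha^2$ and $\ex[\|\nabla g\|^2]=\sum_{\alpha}|\alpha|\,c_\alpha^2$, and conclude from $|\alpha|\ge 1$ on the support of the first sum; or to establish the one-dimensional case directly and tensorize via subadditivity of the variance over independent coordinates.

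The only genuine subtlety — and the step I would treat as the main obstacle — is integrability, namely making the manipulations above rigorous for a function that is merely $\mathcal{C}^1$. The statement is vacuous when $\ex[\|\nabla g\|^2]=+\infty$, so one assumes it finite; one must then still justify that $g\in L^2$ (so that $\mathrm{Var}(g)$ is defined) and that differentiation under the expectation, the integration by parts hidden in (d), and the exchange of $\int dt$ with $\ex$ are all legitimate when $g$ may grow superpolynomially and $\nabla g$ is unbounded. The standard remedy is to prove the inequality first for smooth $g$ with all derivatives of at most polynomial growth — where each step is transparent — and to remove the restriction afterwards by a routine truncation and mollification argument, applying the established bound to bounded smooth approximants whose gradients are dominated by $\|\nabla g\|$ and passing to the limit with the help of $\ex[\|\nabla g\|^2]<\infty$. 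Since in the body of the paper Lemma~\ref{lem:P-N} is only applied to smooth, well-behaved integrands, I would in fact simply invoke \cite{Boucheron2004} for the fully general statement rather than carry out this approximation in detail.
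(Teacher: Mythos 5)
Your proposal is correct, and in the end aligns with what the paper actually does: the paper does not prove Lemma~\ref{lem:P-N} at all, it simply states it and refers the reader to \cite{Boucheron2004}, Chapter~3, which is precisely the fallback you adopt in your final sentence. Everything you add on top of that citation is sound. The Ornstein--Uhlenbeck semigroup argument is the standard modern proof: properties (a)--(d) are stated correctly, the commutation $\partial_i P_t g = e^{-t} P_t(\partial_i g)$ combined with the $L^2$-contraction gives $\ex\|\nabla P_t g\|^2\le e^{-2t}\ex\|\nabla g\|^2$, and integrating $\frac{d}{dt}\ex[(P_tg)^2]=-2\ex\|\nabla P_tg\|^2$ over $t\in(0,\infty)$ with $\int_0^\infty 2e^{-2t}\,dt=1$ closes the estimate. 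The Hermite-expansion and tensorization alternatives you sketch are also correct (in the Hermite route, $\partial_i H_\alpha=\sqrt{\alpha_i}H_{\alpha-e_i}$ gives $\ex\|\nabla g\|^2=\sum_\alpha|\alpha|c_\alpha^2\ge\sum_{\alpha\neq 0}c_\alpha^2=\mathrm{Var}(g)$). You are also right that the genuinely delicate point is regularity/integrability when $g$ is merely $\mathcal{C}^1$: the clean proof runs for smooth $g$ with polynomially bounded derivatives, and one then passes to the general case by truncation and mollification, using $\ex\|\nabla g\|^2<\infty$ to dominate; this also serves to show a priori that $g\in L^2$ of the standard Gaussian so that $\mathrm{Var}(g)$ is finite. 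Since the paper invokes the lemma only for explicit smooth functionals of Gaussians (the log-partition function), this level of generality is more than enough, and deferring the approximation details to \cite{Boucheron2004} is exactly what the authors do.
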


\begin{lemma}[Bounded difference]
    Let $g:\mathbb{R}^p\rightarrow \mathbb{R}$ be a function that satisfies the bounded difference property:
    $$\sup_{u_1,\ldots, u_p,u'_i}|g(u_1,\ldots,u_i,\ldots,u_p)-g(u_1,\ldots,u'_i,\ldots,u_p)|\leq c_i, \quad 1\leq i\leq p.$$
    If $\mathbf{X}=(X_1,\ldots,X_p)$ be a random vector with i.i.d. $\mathcal{N}(0,1)$ entries. Then the following holds:
    $$\mathrm{Var}(g(\mathbf{X}))\leq \frac{1}{4}\sum_{i=1}^p c^2_i.$$
\end{lemma}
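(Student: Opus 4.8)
The plan is to establish this classical variance bound by a Doob-martingale decomposition combined with Popoviciu's inequality; no Gaussianity is actually needed, only independence of the coordinates. First I would fix the filtration $\mathcal{F}_0\subset\mathcal{F}_1\subset\cdots\subset\mathcal{F}_p$ with $\mathcal{F}_i:=\sigma(X_1,\dots,X_i)$, note that the bounded-difference hypothesis forces the total oscillation of $g$ to be at most $\sum_i c_i$, hence $g(\mathbf{X})\in L^2$ so that all conditional expectations below are well-defined, and set $g_i:=\mathbb{E}[g(\mathbf{X})\mid\mathcal{F}_i]$. Then $g_0=\mathbb{E}[g(\mathbf{X})]$ is constant, $g_p=g(\mathbf{X})$ almost surely, and writing $\Delta_i:=g_i-g_{i-1}$ gives the telescoping identity $g(\mathbf{X})-\mathbb{E}[g(\mathbf{X})]=\sum_{i=1}^p\Delta_i$. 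Since $\mathbb{E}[\Delta_j\mid\mathcal{F}_i]=0$ whenever $j>i$, the martingale differences $\Delta_i$ are pairwise orthogonal in $L^2$, so that $\mathrm{Var}(g(\mathbf{X}))=\sum_{i=1}^p\mathbb{E}[\Delta_i^2]$, and everything reduces to the per-coordinate estimate $\mathbb{E}[\Delta_i^2]\le c_i^2/4$.

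For that estimate, fix $i$ and introduce the partially integrated function $h(x):=\mathbb{E}[g(X_1,\dots,X_{i-1},x,X_{i+1},\dots,X_p)\mid\mathcal{F}_{i-1}]$, where the conditional expectation averages only over $X_{i+1},\dots,X_p$ (independent of $\mathcal{F}_{i-1}$), so $h$ is a function of $X_1,\dots,X_{i-1}$ and of $x$. By Fubini and the bounded-difference hypothesis applied pointwise under the integral, $|h(x)-h(x')|\le c_i$ for all $x,x'$, i.e.\ the range of $x\mapsto h(x)$ sits in an interval of length at most $c_i$. Using independence of $X_i$ from $\mathcal{F}_{i-1}$ one checks $g_i=h(X_i)$ and $g_{i-1}=\mathbb{E}[h(X_i)\mid\mathcal{F}_{i-1}]$, so conditionally on $\mathcal{F}_{i-1}$ the variable $\Delta_i$ is the centering of the bounded variable $h(X_i)$. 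Popoviciu's inequality — a random variable supported in an interval of length $L$ has variance at most $L^2/4$ — then yields $\mathbb{E}[\Delta_i^2\mid\mathcal{F}_{i-1}]=\mathrm{Var}(h(X_i)\mid\mathcal{F}_{i-1})\le c_i^2/4$; taking expectations and summing over $i$ finishes the proof.

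The only step that is not pure bookkeeping is obtaining the constant $1/4$ rather than $1/2$: the naive Efron–Stein bound $\mathrm{Var}(g)\le\tfrac12\sum_i\mathbb{E}\big[(g(\mathbf{X})-g(\mathbf{X}^{(i)}))^2\big]$, with $\mathbf{X}^{(i)}$ an i.i.d.-resampled copy of $\mathbf{X}$ in coordinate $i$, directly gives $\tfrac12\sum_i c_i^2$, so the improvement genuinely requires passing through Popoviciu's inequality — equivalently, through the sharp form $\mathrm{Var}(g)\le\sum_i\mathbb{E}\big[\mathrm{Var}(g\mid (X_j)_{j\ne i})\big]$ followed by the same $L^2/4$ bound on each conditional variance. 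I would therefore present the martingale argument above, which is self-contained; the reference \cite{Boucheron2004}, Chapter 3, can be cited for both Popoviciu's inequality and the martingale/Efron–Stein machinery.
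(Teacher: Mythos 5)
Your proof is correct. Note, however, that the paper does not prove this lemma at all: it is stated as a classical fact and deferred to \cite{Boucheron2004}, Chapter 3, so there is no in-paper argument to compare against. Your self-contained route --- Doob martingale decomposition $g_i=\mathbb{E}[g(\mathbf{X})\mid X_1,\dots,X_i]$, orthogonality of the increments $\Delta_i$, the partially integrated function $h$ inheriting oscillation at most $c_i$, and Popoviciu's bound $\mathrm{Var}\le L^2/4$ on each conditional increment --- is a standard and complete derivation of the constant $1/4$, and all the small points are handled properly (boundedness of $g$ from the telescoping of the difference constants, conditional Jensen to transfer the bounded-difference property to $h$, independence of $X_i$ from $\mathcal{F}_{i-1}$ to identify $g_{i-1}$). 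It is essentially the same mechanism as the sharp Efron--Stein argument in the cited reference, namely $\mathrm{Var}(g)\le\sum_i\mathbb{E}\bigl[\mathrm{Var}\bigl(g\mid (X_j)_{j\neq i}\bigr)\bigr]$ followed by the interval-variance bound, a point you yourself flag when distinguishing it from the weaker resampled form that only yields $\tfrac12\sum_i c_i^2$. Your observation that Gaussianity plays no role and only independence is used is also accurate; the paper invokes the lemma for Gaussian inputs merely because that is the case it needs (in Lemma \ref{free_conc_4}). The only cosmetic remark is that you could state explicitly that $\sup_x h(x)-\inf_x h(x)\le c_i$ is what feeds Popoviciu conditionally on $\mathcal{F}_{i-1}$, but this is already implicit in what you wrote.
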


Let $\mathbf{W}_1=(\Z,\tilde{\Z})$. As a first step, we prove the following lemma
\begin{lemma}
    $$\ex\Big[\left|\frac{1}{p}\log \mathcal{Z}-\ex_{\mathbf{W}_1}\Big[\frac{1}{p}\log \mathcal{Z}\right]\Big|^2\Big]\leq Cp^{-1}.$$ \label{free_conc_2}
\end{lemma}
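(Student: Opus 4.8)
The plan is to apply the Gaussian Poincar\'e--Nash inequality (Lemma~\ref{lem:P-N}) in the noise variables only, the remaining quenched variables being frozen. First I would fix $\boldsymbol{\Phi}$, $\bbeta_0$ and $\V=(\V^{(1)},\ldots,\V^{(k)})$, write $\Z=\sigma\bar{\Z}$ with $\bar{\Z}\sim\mathcal{N}(0,\mathbf{I}_N)$, and regard $g:=\frac1p\log\mathcal{Z}_t(\Y_t,\tilde{\Y}_t)$ as a function of the standard Gaussian vector $\mathbf{W}_1=(\bar{\Z},\tilde{\Z})\in\mathbb{R}^{N+p}$. Since the channels $\Y_t,\tilde{\Y}_t$ are affine in $\mathbf{W}_1$ and the Hamiltonian \eqref{interpol_hamiltonian} is a degree-two polynomial in $(\Y_t,\tilde{\Y}_t)$ whose quadratic part in $\v$ is positive semidefinite, $g$ is smooth with integrable derivatives, so Lemma~\ref{lem:P-N} gives $\mathrm{Var}_{\mathbf{W}_1}(g)\le\ex_{\mathbf{W}_1}\|\nabla_{\mathbf{W}_1}g\|^2$. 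Taking then the expectation over $(\boldsymbol{\Phi},\bbeta_0,\V)$, it suffices to prove $\ex\|\nabla_{\mathbf{W}_1}g\|^2\le Cp^{-1}$, uniformly over $t\in[0,1]$ and $\boldsymbol{\epsilon}\in[s_p,2s_p]^{2k}$.

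Next I would compute the gradient explicitly. Differentiating $\log\mathcal{Z}_t$ and using $\partial_{Z_\mu}\Y_{t,\nu}=\delta_{\mu\nu}$ and $\partial_{\tilde{Z}^{(i)}_j}\tilde{\Y}^{(i)}_{t,\ell}=\delta_{j\ell}$, one obtains the posterior expressions
\begin{align*}
\partial_{\bar{Z}_\mu}g=-\frac{1}{p\sigma}\langle u_{t,\mu}\rangle_{t,\epsilon},\qquad
\partial_{\tilde{Z}^{(i)}_j}g=-\frac{1}{p}\big\langle\tilde{Z}^{(i)}_j+\sqrt{R_{1,i}(t)}\,(\beta^{(i)}_{0,j}-\beta^{(i)}_j)\big\rangle_{t,\epsilon},
\end{align*}
with $\mathbf{u}_t$ the residual vector of \eqref{utnu}. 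By Jensen's inequality $\langle X\rangle_{t,\epsilon}^2\le\langle X^2\rangle_{t,\epsilon}$ together with $(a+b)^2\le2a^2+2b^2$, this yields
\begin{align*}
\|\nabla_{\mathbf{W}_1}g\|^2\le\frac{1}{p^2\sigma^2}\langle\|\mathbf{u}_t\|^2\rangle_{t,\epsilon}+\frac{2}{p^2}\|\tilde{\Z}\|^2+\frac{2}{p^2}\sum_{i=1}^kR_{1,i}(t)\,\langle\|\bbeta_0^{(i)}-\bbeta^{(i)}\|^2\rangle_{t,\epsilon}.
\end{align*}

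Finally I would bound the expectation of each term using the moment estimates already established in the proof of Lemma~\ref{der_lemma}. Splitting $\mathbf{u}_t$ into its three summands and invoking the Nishimori identity to replace posterior samples by the ground truth, one has $\ex\|\Z\|^2=N\sigma^2=O(p)$, $\ex\langle\|\boldsymbol{\Phi}(\bbeta_0-\bbeta)\|^2\rangle\le4\,\ex\|\boldsymbol{\Phi}\bbeta_0\|^2=O(p^2)$, and $\ex\langle\|\sum_i\sqrt{R_{2,i}(t)l_i}\LLambda^{(i)}(\V^{(i)}-\v^{(i)})\|^2\rangle\le4\sum_iR_{2,i}(t)l_i\Tr\LLambda_{i,N}=O(p)$, hence $\ex\langle\|\mathbf{u}_t\|^2\rangle=O(p)$; likewise $\ex\|\tilde{\Z}\|^2=p$, $\ex\langle\|\bbeta_0^{(i)}-\bbeta^{(i)}\|^2\rangle\le C|I_i|=O(p)$ by compactness of $\mathrm{supp}\,P_0$, and $R_{1,i}(t),R_{2,i}(t)$ are bounded uniformly on $[0,1]$ since the interpolation functions are continuous. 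Combining, $\ex\|\nabla_{\mathbf{W}_1}g\|^2\le Cp^{-1}$ with a constant depending only on $c$, $\sigma$, $[a,b]$ and the moments of $P_0$, which is the claim. The only mildly delicate points are justifying the smoothness and integrability needed to apply Lemma~\ref{lem:P-N} and checking that the constant $C$ is uniform in $(t,\boldsymbol{\epsilon})$; I do not expect any genuine obstacle here, as the required moment bounds are exactly those used in Step~1.
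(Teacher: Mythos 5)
Your proposal is correct and follows essentially the same route as the paper: fix the non-noise variables, apply the Gaussian Poincar\'e--Nash inequality in $\mathbf{W}_1$, compute the gradient via the residual $\mathbf{u}_t$ and the scalar channels, and invoke the moment bounds and Nishimori identity from Step~1 to get $\ex\|\nabla g\|^2 = O(p^{-1})$. The only (minor) improvement over the paper's write-up is that you explicitly rescale $\Z=\sigma\bar{\Z}$ so Lemma~\ref{lem:P-N} applies verbatim to i.i.d.\ $\mathcal{N}(0,1)$ coordinates; this is a harmless constant the paper absorbs implicitly.
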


\begin{proof}
    Let $g(\mathbf{W}_1)=\log \mathcal{Z}$. The partial derivatives with respect to $Z_{\mu}$ and $\tilde{Z}^{(i)}_j$s are given by:
        \begin{align*}
        \frac{\partial g}{\partial Z_{\mu}}&= -\Big\langle Z_{\mu}+\sqrt{\frac{1-t}{p}}\Big(\boldsymbol{\Phi}\Big(\bbeta_0-\bbeta\Big)\Big)_{\mu}+\sum_{i=1}^k\sqrt{R_{2,i}(t)l_i}\Big(\LLambda^{(i)}\Big(\V^{(i)}-\v^{(i)}\Big)\Big)_{\mu}\Big\rangle,\\
        \frac{\partial g}{\partial \tilde{Z}^{(i)}_j}&= -\Big\langle \tilde{Z}^{(i)}_j+\sqrt{R_{1,i}(t)}(\bbeta_{0}^{(i)}-\bbeta^{(i)})_j\Big\rangle. \\
    \end{align*}
Using Jensen inequality we obtain
\begin{align*}
        \ex\Big[\Big(\frac{\partial g}{\partial Z_{\mu}}\Big)^2\Big]&\leq  C\ex\Big[\Big\langle Z^2_{\mu}+\frac{1-t}{p}\Big(\boldsymbol{\Phi}\Big(\bbeta_0-\bbeta\Big)\Big)^2_{\mu}+\sum_{i=1}^kR_{2,i}(t)l_i\Big(\LLambda^{(i)}\Big(\V^{(i)}-\v^{(i)}\Big)\Big)^2_{\mu}\Big\rangle\Big],\\
        \ex\Big[\Big(\frac{\partial g}{\partial \tilde{Z}^{(i)}_j}\Big)^2\Big]&\leq C\ex\Big[\langle (\tilde{Z}^{(i)}_j)^2+R_{1,i}(t)(\beta_{0,j}^{(i)}-\beta_j^{(i)})^2\rangle\Big]. \\
    \end{align*}
    Combining all above  and applying  Nishimori identity, we have 
\begin{align*}
    \ex\|\nabla g\|^2&=\ex\Big[\sum_{\mu=1}^{N}\Big(\frac{\partial g}{\partial Z_\mu}\Big)^2+\sum_{i=1}^{k}\sum_{j=1}^{|I_i|}\Big(\frac{\partial g}{\partial \tilde{Z}^{(i)}_j}\Big)^2\Big]\\
    & \leq C \ex \Big[\|\Z\|^2+\frac{1}{p}\|\boldsymbol{\Phi}\|^2\|\bbeta_0\|^2+\sum_{i=1}^kR_{2,i}(t)l_i\|\LLambda^{(i)}\|^2\|\V^{(i)}\|^2+\sum_{i=1}^k(\|\tilde{\Z}^{(i)}\|^2+R^{2}_{1,i}(t)\|\bbeta_0^{(i)}\|^2)\Big]\\
    &\leq Cp.
\end{align*}
The last inequality holds since all the quantities $\ex[\|\Z\|^2], \ex[\|\tilde{\Z}^{(i)}\|^2],\ex[\|\V^{(i)}\|^2]$, and $\ex[\|\bbeta^{(i)}_0\|^2]$ are $O(p)$. The operator norm of $\boldsymbol{\Phi}$ also can be bounded us
$$ \ex\|\boldsymbol{\Phi}\|^2=\ex\|\boldsymbol{\Phi}\boldsymbol{\Phi}^\intercal\|\leq \sum_{i=1}^k\|\LLambda_{i,N}\|\ex[\|\G_i\G_i^\intercal\|]=O(p).$$
From Gaussian Poincar\'e inequality (Lemma~\ref{lem:P-N}) it follows that 
$$\ex\Big[\Big|\frac{1}{p}\log \mathcal{Z}-\ex_{\mathbf{W_1}}\Big[\frac{1}{p}\log \mathcal{Z}\Big]\Big|^2\Big]=p^{-2}\mathrm{Var}(g)\leq p^{-2}\ex\|\nabla g\|^2=Cp^{-1}.$$
\end{proof}

Let $\mathbf{W_2}=(\Z,\tilde{\Z},\G,\V)$ where $\G$ is the ensemble of Gaussian random variables in $\boldsymbol{\Phi}$.  In the next step we prove that 
\begin{lemma}\label{free_conc_3}
    $$\ex\Big[\Big|\ex_{\mathbf{W_1}}\Big[\frac{1}{p}\log \mathcal{Z}\Big]-\ex_{\mathbf{W_2}}\Big[\frac{1}{p}\log \mathcal{Z}\Big]\Big|^2\Big]\leq Cp^{-1}.$$
\end{lemma}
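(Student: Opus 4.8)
The plan is to recognize the quantity inside the absolute value as the fluctuation of
\begin{align*}
F(\G,\V):=\ex_{\mathbf{W}_1}\Big[\tfrac1p\log\mathcal{Z}\Big],
\end{align*}
viewed as a function of the Gaussian block $(\G,\V)$, around its conditional mean given $\bbeta_0$. Since $\ex_{\mathbf{W}_2}$ is the expectation over $(\Z,\tilde{\Z},\G,\V)$ conditionally on $\bbeta_0$, one has $\ex_{\mathbf{W}_2}[\tfrac1p\log\mathcal{Z}]=\ex_{\G,\V}[F\mid\bbeta_0]$ and hence
\begin{align*}
\ex\Big[\Big|\ex_{\mathbf{W}_1}\big[\tfrac1p\log\mathcal{Z}\big]-\ex_{\mathbf{W}_2}\big[\tfrac1p\log\mathcal{Z}\big]\Big|^2\Big]=\ex_{\bbeta_0}\big[\mathrm{Var}_{\G,\V}(F)\big].
\end{align*}
The idea is then to apply the Gaussian Poincar\'e--Nash inequality (Lemma~\ref{lem:P-N}) conditionally on $\bbeta_0$ — which cannot be absorbed into the Gaussian vector, as $P_0$ is not Gaussian — giving $\mathrm{Var}_{\G,\V}(F)\le\ex_{\G,\V}[\|\nabla_{\G,\V}F\|^2\mid\bbeta_0]$, and to bound the resulting expectation of the gradient after taking $\ex_{\bbeta_0}$.

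The key computation is the derivative of $\log\mathcal{Z}$ with respect to $G^{(i)}_{\nu l}$ and $V^{(i)}_\mu$. Writing the Hamiltonian \eqref{interpol_hamiltonian} as $\mathcal{H}=\tfrac1{2\sigma^2}\|\mathbf{u}_t\|^2+\tfrac12\sum_i\|\tilde\Z^{(i)}+\sqrt{R_{1,i}(t)}(\bbeta_0^{(i)}-\bbeta^{(i)})\|^2$ with $\mathbf{u}_t$ as in \eqref{utnu}, and using $\boldsymbol{\Phi}=(\LLambda^{(1)}\mathbf{G}^{(1)},\ldots,\LLambda^{(k)}\mathbf{G}^{(k)})$ — so that $\G$ enters $\mathbf{u}_t$ both through the quenched data and through the replica $\bbeta$, the two contributions combining into the single difference $\bbeta_0-\bbeta$ — a direct computation yields
\begin{align*}
\partial_{G^{(i)}_{\nu l}}\log\mathcal{Z}=-\tfrac1{\sigma^2}\sqrt{\tfrac{1-t}{p}}\,\big\langle(\beta^{(i)}_{0,l}-\beta^{(i)}_l)(\LLambda^{(i)}\mathbf{u}_t)_\nu\big\rangle_{t,\epsilon},\qquad
\partial_{V^{(i)}_\mu}\log\mathcal{Z}=-\tfrac{\sqrt{R_{2,i}(t)l_i}}{\sigma^2}\,\big\langle(\LLambda^{(i)}\mathbf{u}_t)_\mu\big\rangle_{t,\epsilon}.
\end{align*}
Interchanging $\partial$ with $\ex_{\mathbf{W}_1}$ (licit by dominated convergence, thanks to the bounded support of $P_0$ and Gaussian integrability), applying Cauchy--Schwarz once for $\ex_{\mathbf{W}_1}$ and once for the Gibbs bracket ($\langle X\rangle^2\le\langle X^2\rangle$), summing over $l\in I_i$ via $\sum_{l\in I_i}(\beta^{(i)}_{0,l}-\beta^{(i)}_l)^2=\|\bbeta_0^{(i)}-\bbeta^{(i)}\|^2\le C|I_i|$, then over $\nu$ and $i$ using $\|\LLambda^{(i)}\|^2=O(1)$ and $R_{2,i}(t)\in[0,1+\rho]$, one arrives at
\begin{align*}
\ex\big[\|\nabla_{\G,\V}F\|^2\big]\le\frac{C}{p^{2}}\,\ex\big[\langle\|\mathbf{u}_t\|^2\rangle_{t,\epsilon}\big],
\end{align*}
since the prefactors $\tfrac1{p^2}$ (from $(\ex_{\mathbf{W}_1}[\tfrac1p\,\cdot\,])^2$) and the extra $\tfrac1p$ in $\partial_{G}\log\mathcal{Z}$ together beat the factor $\sum_i|I_i|=p$ coming from the index sums.

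It then remains to invoke the moment bound already established in the proof of Lemma~\ref{der_lemma}, namely $(\ex[\langle\|\mathbf{u}_t\|^4\rangle_{t,\epsilon}])^{1/2}=O(p)$, which by Cauchy--Schwarz (in the Gibbs measure and in $\ex$) gives $\ex[\langle\|\mathbf{u}_t\|^2\rangle_{t,\epsilon}]=O(p)$. Plugging this in yields $\ex[\|\nabla_{\G,\V}F\|^2]=O(p^{-1})$, and combining with the Poincar\'e bound and $\ex_{\bbeta_0}$ gives the claim. The main thing to be careful about is the bookkeeping of the dual ($\bbeta_0$-side and $\bbeta$-side) dependence of $\mathcal{Z}$ on $\G$, which is precisely what makes the difference $\bbeta_0-\bbeta$ appear and thus lets the bounded support of $P_0$ absorb the sum over $l$; everything else runs entirely parallel to Lemma~\ref{free_conc_2}.
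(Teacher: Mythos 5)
Your proof is correct and follows the same basic route as the paper: identify the quantity as the conditional variance of $\ex_{\mathbf{W}_1}[\tfrac1p\log\mathcal Z]$ as a function of the Gaussian block $(\G,\V)$, apply the Gaussian Poincar\'e--Nash inequality conditionally on $\bbeta_0$, express the gradient through Gibbs brackets of $\partial\mathcal H$, and bound $\ex\|\nabla F\|^2=O(p^{-1})$. The only difference is organizational: you keep the vector $\mathbf{u}_t$ intact and recycle the already-established estimate $(\ex[\langle\|\mathbf{u}_t\|^4\rangle_{t,\epsilon}])^{1/2}=O(p)$ from the proof of Lemma~\ref{der_lemma}, combined with bounded support of $P_0$ to absorb the $\sum_{l\in I_i}$, whereas the paper splits the $G$-gradient into three pieces $I^1_j,I^2_j,I^3_j$ (corresponding to the $\Z$, $\boldsymbol{\Phi}(\bbeta_0-\bbeta)$ and $\LLambda^{(i)}(\V^{(i)}-\v^{(i)})$ contributions to $\mathbf{u}_t$) and bounds each separately. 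Your shortcut is valid and a bit more economical, but it rests on the same moment estimates and is not a genuinely different method.
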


\begin{proof}
    Let $h(\G,\V)=\ex_{\mathbf{W_1}}\Big[\log \mathcal{Z}\Big ]$. We have
    \begin{align*}
        \frac{\partial h}{\partial V^{(j)}_m}= & -\ex_{\mathbf{W_1}}\Big[\Big\langle \sum_{\mu=1}^N\Big(Z_{\mu}+\sqrt{\frac{1-t}{p}}\Big(\boldsymbol{\Phi}(\bbeta_0-\bbeta)\Big)_{\mu}+\sum_{i=1}^k\sqrt{R_{2,i}(t)l_i}\Big(\LLambda^{(i)}(\V^{(i)}-\v^{(i)})\Big)_{\mu}\Big)\sqrt{R_{2,j}(t)l_j}\Lambda^{(j)}_{\mu m}\Big\rangle\Big]\\
        \frac{\partial h}{\partial G^{(j)}_{sm}}=\\
        -\ex_{\mathbf{W_1}}\Big[\Big\langle&  \sum_{\mu=1}^N\Big(Z_{\mu}+\sqrt{\frac{1-t}{p}}\Big(\boldsymbol{\Phi}(\bbeta_0-\bbeta)\Big)_{\mu}+\sum_{i=1}^k\sqrt{R_{2,i}(t)l_i}\Big(\LLambda^{(i)}(\V^{(i)}-\v^{(i)})\Big)_{\mu}\Big)\sqrt{\frac{1-t}{p}}\Lambda^{(j)}_{\mu s}(\bbeta_0^{(j)}-\bbeta^{(j)})_m\Big\rangle\Big]\\
        :=& I^1_{j,sm}+I^2_{j,sm}+I^3_{j,sm},\\
    \end{align*}
where
\begin{align*}
    I^1_{j,sm}&= \sqrt{\frac{1-t}{p}}\ex_{\mathbf{W_1}}\Big[\langle (\Z^\intercal \LLambda^{(j)})_{s}(\bbeta_0^{(j)}-\bbeta^{(j)})_m\rangle\Big],\\
    I^2_{j,sm}&= \frac{1-t}{p}\ex_{\mathbf{W_1}}\Big[\Big\langle \Big((\bbeta_0-\bbeta)^\intercal\boldsymbol{\Phi}^\intercal \LLambda^{(j)}\Big)_{s}(\bbeta_0^{(j)}-\bbeta^{(j)})_m\Big\rangle\Big],\\
    I^3_{j,sm}&= \ex_{\mathbf{W_1}}\Big[\sqrt{\frac{1-t}{p}}\sum_{i=1}^k\sqrt{R_{2,i}(t)l_i}\Big\langle \Big((\V^{(i)}-\v^{(i)})^\intercal\LLambda^{(i)}\LLambda^{(j)}\Big)_{s}(\bbeta_0^{(j)}-\bbeta^{(j)})_m\Big\rangle\Big].
\end{align*}
It follows that
\begin{align*}I^1_j&:=\sum_{s,m}\ex[(I^1_{j,sm})^2]\\
&\leq \sum_{s,m}\ex\Big[\frac{1-t}{p}\ex_{\mathbf{W_1}}\langle(\Z^\intercal\LLambda^{(j)})^2_s(\bbeta_0^{(j)}-\bbeta^{(j)})^2_m\rangle\Big]\\
&\leq p^{-1} \ex\Big[\ex_{\mathbf{W_1}}\langle(\|\LLambda^{(j)} \Z\|^2\|\bbeta_0^{(j)}-\bbeta^{(j)}\|^2\rangle\Big]\\
&\leq Cp^{-1}\sqrt{\ex\langle\|\Z\|^4\rangle\ex\langle\|\bbeta_0^{(j)}-\bbeta^{(j)}\|^4\rangle}\\
&\leq C\sqrt{\ex\|\bbeta_0^{(j)}\|^4+\ex\langle\|\bbeta^{(j)}\|^4\rangle}\\
&\leq C\sqrt{\ex\|\bbeta_0^{(j)}\|^4}=  O(p)\end{align*}
where the third inequality follows from the boundedness of the norm of $\LLambda^{(j)}$, the fourth inequality follows using the fact that $\ex\|\Z\|^4=O(p^2)$ and the last inequality follows from the Nishimori identity.

Similarly we obtain 
\begin{align*}I^2_j&=\sum_{s,m}\ex[(I^2_{j,sm})^2]\\
&\leq \sum_{s,m}\ex\Big[\frac{(1-t)^2}{p^2}\ex_{\mathbf{W_1}}\langle((\bbeta_0-\bbeta)^\intercal\boldsymbol{\Phi}^\intercal\LLambda^{(j)})^2_s(\bbeta_0^{(j)}-\bbeta^{(j)})^2_m\rangle\Big]\\
&\leq p^{-2} \ex\Big[\ex_{\mathbf{W_1}}\langle(\|\LLambda^{(j)} \boldsymbol{\Phi} (\bbeta_0-\bbeta)\|^2\|\bbeta_0^{(j)}-\bbeta^{(j)}\|^2\rangle\Big]\\
&\leq Cp^{-2}\sqrt{\ex\langle\|\boldsymbol{\Phi} (\bbeta_0-\bbeta)\|^4\rangle\ex\langle\|\bbeta_0^{(j)}-\bbeta^{(j)}\|^4\rangle}\\
&\leq  C p\end{align*}
where we have used the fact that
$$\ex\langle\|\boldsymbol{\Phi} (\bbeta_0-\bbeta)\|^4\rangle \leq (\ex\|\boldsymbol{\Phi}\|^8 \ex \langle\|\bbeta_0-\bbeta\|^8\rangle)^{1/2}\leq Cp^4.$$

For the last term we observe that
\begin{align*}I^3_j&=\sum_{s,m}\ex[(I^3_{j,sm})^2]\\
&\leq \sum_{s,m}\ex\Big[\frac{(1-t)}{p}\sum_{i=1}^k R_{2,i}(t)l_i\ex_{\mathbf{W_1}}\Big\langle((\V^{(i)}-\v^{(i)})^\intercal\LLambda^{i}\LLambda^{(j)})^2_s(\bbeta_0^{(j)}-\bbeta^{(j)})^2_m\Big\rangle\Big]\\
&\leq \ex\Big[\frac{(1-t)}{p}\sum_{i=1}^k R_{2,i}(t)l_i\ex_{\mathbf{W_1}}\Big\langle\|\LLambda^{(j)} \LLambda^{(i)} (\V^{(i)}-\v^{(i)})\|^2\|\bbeta_0^{(j)}-\bbeta^{(j)}\|^2\Big\rangle\Big]\\
&\leq Cp^{-1}\sum_{i=1}^k\sqrt{\ex\langle\|\V^{(i)}-\v^{(i)}\|^4\rangle\ex\langle\|\bbeta_0^{(j)}-\bbeta^{(j)}\|^4\rangle}\\
&\leq  C p\end{align*}
where we have used $\ex\langle \|\V^{(i)}-\v^{(i)}\|^4\rangle\leq C \ex\|\V^{(i)}\|^4\leq Cp^{2}$. Thus we have
$$\sum_{sm}\ex\Big[\frac{1}{p}\frac{\partial h}{\partial G^{(j)}_{sm}}\Big]^2\leq Cp^{-1}$$
and a similar argument gives $\sum_{m}\ex\Big[\frac{1}{p}\frac{\partial h}{\partial V^{(j)}_m}\Big]^2\leq Cp^{-1}.$
By Gaussian Poincare inequality the claim of the lemma follows.

\end{proof}





\begin{lemma}\label{free_conc_4}
    $$\ex\Big[\Big|\ex_{\mathbf{W_2}}\Big[\frac{1}{p}\log \mathcal{Z}\Big]-\ex\Big[\frac{1}{p}\log \mathcal{Z}\Big]\Big|^2\Big]\leq Cp^{-1}$$
\end{lemma}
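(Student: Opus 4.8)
The plan is to apply the Efron--Stein (jackknife) inequality to the function
$$\psi(\bbeta_0):=\ex_{\mathbf{W_2}}\Big[\tfrac1p\log\mathcal{Z}\Big],$$
which, the conditional expectation $\ex_{\mathbf{W_2}}$ having integrated out all the Gaussian disorder, is a deterministic function of the $p$ i.i.d. bounded coordinates $(\beta_{0,j})_{j\le p}$ of the signal; the statement to be proven is precisely $\mathrm{Var}(\psi)\le Cp^{-1}$. By Efron--Stein (see e.g.\ \cite{Boucheron2004}) it suffices to show $\sum_{j=1}^p\ex[(\psi(\bbeta_0)-\psi(\bbeta_0^{(j)}))^2]\le Cp^{-1}$, where $\bbeta_0^{(j)}$ is $\bbeta_0$ with its $j$-th coordinate replaced by an independent copy $\beta'_{0,j}\sim P_0$ and $\ex$ is the full expectation.

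First I would reduce a single increment to the Hamiltonian. Writing $\langle\cdot\rangle_{\bbeta_0}$ for the posterior \eqref{eq:post_inter} built from $\bbeta_0$ and setting $D:=\mathcal{H}[\bbeta_0^{(j)}]-\mathcal{H}[\bbeta_0]$, two applications of Jensen's inequality (to $\log\langle e^{D}\rangle_{\bbeta_0^{(j)}}$ and to $\log\langle e^{-D}\rangle_{\bbeta_0}$) give $\langle D\rangle_{\bbeta_0^{(j)}}\le\log(\mathcal{Z}[\bbeta_0]/\mathcal{Z}[\bbeta_0^{(j)}])\le\langle D\rangle_{\bbeta_0}$, hence, using Jensen inside $\ex_{\mathbf{W_2}}$ and $\langle D\rangle^2\le\langle D^2\rangle$,
$$(\psi(\bbeta_0)-\psi(\bbeta_0^{(j)}))^2\ \le\ \ex_{\mathbf{W_2}}\Big[\big(\tfrac1p\log\tfrac{\mathcal{Z}[\bbeta_0]}{\mathcal{Z}[\bbeta_0^{(j)}]}\big)^2\Big]\ \le\ \tfrac{2}{p^2}\,\ex_{\mathbf{W_2}}\big[\langle D^2\rangle_{\bbeta_0}+\langle D^2\rangle_{\bbeta_0^{(j)}}\big].$$
Since $\bbeta_0$ enters $\mathcal{H}$ only through $\sqrt{\tfrac{1-t}{p}}\boldsymbol{\Phi}\bbeta_0$ and through the scalar channels $\tilde{\Y}_t^{(i)}$, a direct expansion with $\kappa:=\beta'_{0,j}-\beta_{0,j}$ ($|\kappa|\le 2S$, where $S$ bounds the support of $P_0$, and $j\in I_i$) gives on $\langle\cdot\rangle_{\bbeta_0}$
$$D\ =\ \tfrac{\kappa}{\sigma^2}\sqrt{\tfrac{1-t}{p}}\,\boldsymbol{\Phi}_{\cdot j}^{\intercal}\mathbf{u}_t\ +\ \tfrac{(1-t)\kappa^2}{2\sigma^2 p}\|\boldsymbol{\Phi}_{\cdot j}\|^2\ +\ D^{\mathrm{sc}},\qquad |D^{\mathrm{sc}}|\le C\big(1+|\tilde Z^{(i)}_{j}|\big),$$
with $\boldsymbol{\Phi}_{\cdot j}$ the $j$-th column of $\boldsymbol{\Phi}$ and $\mathbf{u}_t$ as in \eqref{utnu} (on $\langle\cdot\rangle_{\bbeta_0^{(j)}}$ the same holds with $\mathbf{u}_t$ evaluated at $\bbeta_0^{(j)}$ and the middle term sign-flipped). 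Therefore $\langle D^2\rangle_{\bbeta_0}\le C\big[\tfrac1p\langle(\boldsymbol{\Phi}_{\cdot j}^{\intercal}\mathbf{u}_t)^2\rangle_{\bbeta_0}+\tfrac1{p^2}\|\boldsymbol{\Phi}_{\cdot j}\|^4+1+(\tilde Z^{(i)}_{j})^2\big]$, and likewise for $\bbeta_0^{(j)}$.

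Next I would sum over $j$, exploiting that the increments aggregate. The decisive point is $\sum_{j}(\boldsymbol{\Phi}_{\cdot j}^{\intercal}\mathbf{u}_t)^2=\|\boldsymbol{\Phi}^{\intercal}\mathbf{u}_t\|^2\le\|\boldsymbol{\Phi}\|^2\|\mathbf{u}_t\|^2$, so summing the first contribution and applying Cauchy--Schwarz with the fourth-moment bounds $\ex\|\boldsymbol{\Phi}\|^4=O(p^2)$ and $\ex\langle\|\mathbf{u}_t\|^4\rangle=O(p^2)$ (as in the proofs of Lemmas~\ref{der_lemma} and~\ref{free_conc_3}) yields $\sum_j\ex\langle(\boldsymbol{\Phi}_{\cdot j}^{\intercal}\mathbf{u}_t)^2\rangle_{\bbeta_0}\le\ex[\|\boldsymbol{\Phi}\|^2\langle\|\mathbf{u}_t\|^2\rangle]=O(p^2)$; the analogous $\bbeta_0^{(j)}$ term reduces to the same bound because $\bbeta_0^{(j)}\overset{d}{=}\bbeta_0$ jointly with $\mathbf{W_2}$ (relabelling the $j$-th coordinate and its copy). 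With $\sum_j\ex\|\boldsymbol{\Phi}_{\cdot j}\|^4=p\cdot O(p^2)$ (an elementary Gaussian computation) and $\sum_j(1+\ex(\tilde Z^{(i)}_{j})^2)=O(p)$, one gets
$$\sum_{j=1}^p\ex[(\psi(\bbeta_0)-\psi(\bbeta_0^{(j)}))^2]\ \le\ \tfrac{C}{p^2}\Big[\tfrac1p\,O(p^2)+\tfrac1{p^2}\,O(p^3)+O(p)\Big]\ =\ O(p^{-1}),$$
uniformly in $t\in[0,1]$ and $\boldsymbol{\epsilon}$, and Efron--Stein concludes. The main obstacle is exactly why one must work with the summed increments rather than a coordinate-wise Lipschitz bound: bounding a single term by $|\boldsymbol{\Phi}_{\cdot j}^{\intercal}\mathbf{u}_t|\le\|\boldsymbol{\Phi}_{\cdot j}\|\,\|\mathbf{u}_t\|=O(\sqrt p)\cdot O(\sqrt p)=O(p)$ loses a factor $p$ and gives only $\mathrm{Var}(\psi)=O(1)$, whereas the operator-norm bound on $\boldsymbol{\Phi}^{\intercal}\mathbf{u}_t$ is lossless (the two vectors are not aligned), a cancellation made visible only after summing over $j$. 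A minor point is that the fourth-moment bounds rely on the Nishimori identity, which is available here because the Efron--Stein step averages over $\bbeta_0$; one should also record that the implied constants are uniform over the interpolation parameters, which follows as in Step~1.
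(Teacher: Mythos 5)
Your proof is correct and genuinely takes a different route from the paper's. The paper (in a very terse sketch) argues via a \emph{per-coordinate} bounded difference: it defines $\psi(s)=h(s\bbeta+(1-s)\bbeta^{\rightarrow j})$, reduces to showing $\psi'(s)\le Cp^{-1}$ pointwise, and then quotes McDiarmid's variance inequality $\mathrm{Var}(h)\le\frac14\sum_j c_j^2$ with $c_j=O(p^{-1})$. You instead bound the \emph{aggregated} Efron--Stein sum $\sum_j\ex[(\psi-\psi^{(j)})^2]$, extracting the factor $p$ from the identity $\sum_j(\boldsymbol{\Phi}_{\cdot j}^\intercal\mathbf{u}_t)^2=\|\boldsymbol{\Phi}^\intercal\mathbf{u}_t\|^2\le\|\boldsymbol{\Phi}\|^2\|\mathbf{u}_t\|^2$ together with the fourth-moment estimates already proved for Lemma~\ref{der_lemma}. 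Both are legitimate; yours has the advantage of only needing second-moment control and makes all the estimates explicit, whereas the paper's sketch relies on a per-coordinate derivative bound it does not actually verify. One point in your closing remark is slightly too strong: you say the coordinate-wise Lipschitz route loses a factor $p$ irrecoverably, but that is only true for the crude bound $|\boldsymbol{\Phi}_{\cdot j}^\intercal\mathbf{u}_t|\le\|\boldsymbol{\Phi}_{\cdot j}\|\|\mathbf{u}_t\|$. The paper's route is viable because, inside $\ex_{\mathbf{W_2}}$, the near-orthogonality of $\boldsymbol{\Phi}_{\cdot j}$ and $\mathbf{u}_t$ is already visible at the level of a single coordinate: the only surviving contribution in $\ex_{\mathbf{W_2}}[\langle\boldsymbol{\Phi}_{\cdot j}^\intercal\mathbf{u}_t\rangle]$ comes from the self-coupling of $\boldsymbol{\Phi}_{\cdot j}$ with its own appearance in $\mathbf{u}_t$, which is $\sqrt{\frac{1-t}{p}}\,\ex\|\boldsymbol{\Phi}_{\cdot j}\|^2\cdot O(1)=O(\sqrt{p})$, giving $|\partial_j h|=O(p^{-1})$ after the $p^{-3/2}$ prefactor. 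So the averaging-before-bounding that the paper (implicitly) performs achieves per-coordinate what you obtain after summation; your proof simply makes the cancellation mechanism explicit and quantitative.
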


\begin{proof}
This follows from the following bounded difference inequality. Indeed, define $h(\bbeta)=\ex_{\mathbf{W_2}}\Big[\frac{1}{p}\log \mathcal{Z}\Big]$. Next, let $\bbeta^{\rightarrow j}$ be the vector such that $\beta^{\rightarrow j}_i=\beta_i$ for $i\neq j$ and $\beta_j^{\rightarrow j}=\tilde{\beta}_j$ (i.e. replaced by an i.i.d. copy).

Define $\psi(s)=h(s\bbeta+(1-s)\bbeta^{\rightarrow j})$ so that $\psi(1)=h(\bbeta)$ and $\psi(0)=h(\bbeta^{\rightarrow j})$. Showing that $\psi'(s)\leq Cp^{-1}$ would imply
$$h(\bbeta)-h(\bbeta^{\rightarrow j})\leq Cp^{-1}.$$ Bounded difference inequality would then imply the statement of the lemma.
\end{proof}

\section{Proof of  Lemmas  \ref{lemma_y2_1} and  \ref{lemma_y2_2}}\label{apx:lemmas}

\begin{proof}[Proof of lemma \ref{lemma_y2_1}]
Define the following quantities:
\begin{equation}Q_i(\bbeta,\bbeta')=\frac{1}{p}\bbeta^{(i)\intercal}\bbeta^{'(i)},\quad \tilde{Q}_i(\bbeta',\bbeta)=\frac{1}{Np}(\boldsymbol{\Phi}\bbeta^\prime)^\intercal\mathbf{E}^2\LLambda_{i,N}\boldsymbol{\Phi}\bbeta.\label{overlap_defn}\end{equation}
Note that we have
\begin{align*}
&-\frac{1}{Np^2}\ex[\langle(\boldsymbol{\Phi}\bbeta^\prime)^\intercal\mathbf{E}^2\LLambda_{i,N}(\boldsymbol{\Phi}(\bbeta_0-\bbeta))(\bbeta^{(i)}_0-\bbeta^{(i)})^\intercal\bbeta^{(i)}\rangle\\
=&\ex[\langle Q_i(\bbeta,\bbeta)\tilde{Q}_i(\bbeta',\bbeta_0)\rangle]-\ex[\langle Q_i(\bbeta,\bbeta)\tilde{Q}_i(\bbeta',\bbeta)\rangle]+\ex[\langle Q_i(\bbeta_0,\bbeta)\tilde{Q}_i(\bbeta',\bbeta)\rangle]-\ex[\langle Q_i(\bbeta_0,\bbeta)\tilde{Q}_i(\bbeta',\bbeta_0)\rangle]
\end{align*}

The third and fourth terms of the RHS cancel each other due to the Nishimori identity and the first two terms can be decomposed as follows:

\begin{align*}
&|\ex[\langle Q_i(\bbeta,\bbeta)\tilde{Q}_i(\bbeta',\bbeta_0)\rangle]-\ex[\langle Q_i(\bbeta,\bbeta)\tilde{Q}_i(\bbeta',\bbeta)\rangle]|\\
=&\Big|\ex\Big[\Big\langle (Q_i(\bbeta,\bbeta)- \ex[\langle Q_i(\bbeta,\bbeta)\rangle])\tilde{Q}_i(\bbeta',\bbeta_0)\Big\rangle\Big]-\ex\Big[\Big\langle (Q_i(\bbeta,\bbeta)- \ex[\langle Q_i(\bbeta,\bbeta)\rangle])\tilde{Q}_i(\bbeta',\bbeta)\Big\rangle\Big]\Big|\\
\leq &  \Big(\ex[\langle (Q_i(\bbeta,\bbeta)- \ex[\langle Q_i(\bbeta,\bbeta)\rangle])^2\rangle]\Big)^{1/2}\Big((\ex[\langle \tilde{Q}^2_i(\bbeta',\bbeta_0)\rangle])^{1/2}+(\ex[\langle \tilde{Q}^2_i(\bbeta',\bbeta)\rangle])^{1/2}\Big)
\end{align*}
where the second equality follows since $$\ex[\langle Q_i(\bbeta,\bbeta)\rangle]\ex[\langle\tilde{Q}_i(\bbeta',\bbeta_0)\rangle\rangle]=\ex[\langle Q_i(\bbeta,\bbeta)\rangle]\ex[\langle\tilde{Q}_i(\bbeta',\bbeta)\rangle]$$ by Nishimori identity and the third inequality is just Cauchy-Schwarz inequality. Thus to show \eqref{err_Y_2} it is enough to show that

\begin{align}\ex[\langle \tilde{Q}^2_i(\bbeta',\bbeta)\rangle])^{1/2}&=O(1)\label{pred_ovlp_bnd}\\
\ex[\langle ((Q_i(\bbeta,\bbeta)- \ex[\langle Q_i(\bbeta,\bbeta)\rangle])^2\rangle]&=O(p^{-1})\label{self_ovlp_bnd}
\end{align}

The bound \eqref{pred_ovlp_bnd} can be shown as follows:
\begin{align}\ex[\langle \tilde{Q}^2_i(\bbeta',\bbeta)\rangle])^{1/2}&
\leq \frac{C}{Np} \ex[\langle\|\Phi\|^2\|\bbeta'\|\|\bbeta\|\rangle]\nonumber\\
&\leq \frac{C}{Np}(\ex[\|\Phi\|^4])^{1/2}(\ex[\langle\|\bbeta'\|^4\rangle])^{1/4}(\ex[\langle\|\bbeta\|^4\rangle])^{1/4}\nonumber\\
&= \frac{C}{Np}(\ex[\|\Phi\|^4])^{1/2}(\ex[\|\bbeta_0\|^4])^{1/2}\label{tildeQ_l2_bound}
\end{align}
where the last equality uses the Nishimori identity. Noting that $\ex[\|\Phi\|^4=O(p^2)$ and $\ex[\|\bbeta_0\|^4]=O(p^2)$ the desired conclusion follows.

On the other hand by Nishimori identity the LHS of \eqref{self_ovlp_bnd} is equal to $\ex[ (Q_i(\bbeta_0,\bbeta_0)- \ex[Q_i(\bbeta_0,\bbeta_0)])^2]$. Noting that the components of the signal $\bbeta_0$ are i.i.d. random variables with bounded support an application of Chebyschev's inequality gives the the bound in \eqref{self_ovlp_bnd}.

\end{proof}

\begin{proof}[Proof of lemma \ref{lemma_y2_2}]

 Using Cauchy Schwarz inequality we obtain
\begin{align*}
&\frac{1}{Np^2}\ex[\langle(\boldsymbol{\Phi}\bbeta)^\intercal\mathbf{E}^2\LLambda_{i,N}\boldsymbol{\Phi}(\bbeta_0-\bbeta)(\bbeta^{(i)}_0-\bbeta^{(i)})^\intercal\bbeta^{(i)}\rangle]\\
=&\frac{1}{Np^2}\ex[\langle(\boldsymbol{\Phi}\bbeta)^\intercal\mathbf{E}^2\LLambda_{i,N}\boldsymbol{\Phi}(\bbeta_0-\bbeta)\rangle]\ex[\langle(\bbeta^{(i)}_0-\bbeta^{(i)})^\intercal\bbeta^{(i)}\rangle]\\
&+ \frac{1}{Np^2}\ex[\langle((\boldsymbol{\Phi}\bbeta)^\intercal\mathbf{E}^2\LLambda_{i,N}\boldsymbol{\Phi}(\bbeta_0-\bbeta)-\ex[\langle(\boldsymbol{\Phi}\bbeta)^\intercal\mathbf{E}^2\LLambda_{i,N}\boldsymbol{\Phi}(\bbeta_0-\bbeta)\rangle])(\bbeta^{(i)}_0-\bbeta^{(i)})^\intercal\bbeta^{(i)}\rangle]\\
=&\frac{1}{Np^2}\ex[\langle(\boldsymbol{\Phi}\bbeta)^\intercal\mathbf{E}^2\LLambda_{i,N}\boldsymbol{\Phi}(\bbeta_0-\bbeta)\rangle]\ex[\langle(\bbeta^{(i)}_0-\bbeta^{(i)})^\intercal\bbeta^{(i)}\rangle]\\
&+ \frac{1}{Np^2}\Big(\ex\Big[\langle((\boldsymbol{\Phi}\bbeta)^\intercal\mathbf{E}^2\LLambda_{i,N}\boldsymbol{\Phi}(\bbeta_0-\bbeta)-\ex[\langle(\boldsymbol{\Phi}\bbeta)^\intercal\mathbf{E}^2\LLambda_{i,N}\boldsymbol{\Phi}(\bbeta_0-\bbeta)\rangle])^2\rangle\Big]\Big)^{1/2}\Big(\ex[\langle((\bbeta^{(i)}_0-\bbeta^{(i)})^\intercal\bbeta^{(i)})^2\rangle]\Big)^{1/2}\\
\end{align*}
Since $p^{-1}(\bbeta^{(i)}_0-\bbeta^{(i)})^\intercal\bbeta^{(i)}$ is of order $O(1)$, to prove the lemma, it is sufficient to show that
$$\frac{1}{Np}\Big(\ex\Big[\langle((\boldsymbol{\Phi}\bbeta)^\intercal\mathbf{E}^2\LLambda_{i,N}\boldsymbol{\Phi}(\bbeta_0-\bbeta)-\ex[\langle(\boldsymbol{\Phi}\bbeta)^\intercal\mathbf{E}^2\LLambda_{i,N}\boldsymbol{\Phi}(\bbeta_0-\bbeta)\rangle])^2\rangle\Big]\Big)^{1/2}=o(1).$$
To proceed we first show that in the expression above we can replace $\mathbf{E}^2\LLambda_{i}$ by $\LLambda^{(i)}\mathbf{E}^2\LLambda^{(i)}$, i. e.
\begin{align}\label{eq:com_lamb}
    \frac{1}{(Np)^2}\ex\langle|(\boldsymbol{\Phi}\bbeta)^\intercal\mathbf{E}^2\LLambda_{i,N}\boldsymbol{\Phi}(\bbeta_0-\bbeta)-(\boldsymbol{\Phi}\bbeta)^\intercal\LLambda^{(i)}\mathbf{E}^2\LLambda^{(i)}\boldsymbol{\Phi}(\bbeta_0-\bbeta)|^2\rangle=o(1).
\end{align}
Since $\boldsymbol{\Phi}\bbeta=\sum_{j=1}^k\LLambda^{(j)}\G^{(j)}\bbeta^{(j)}$,  applying Jensen inequality, \eqref{eq:com_lamb} becomes true if we show 
\begin{align}
     I:=\frac{1}{(Np)^2}\ex\langle|\bbeta^{{(j)}\intercal}\G^{(j)\intercal}(\LLambda^{(j)}\mathbf{E}^2\LLambda_{i,N}-\LLambda^{(i)}\mathbf{E}^2\LLambda^{(i)}\LLambda^{(j)})\G^{(j)}(\bbeta_0^{(j)}-\bbeta^{(j)})|^2\rangle=o(1).
\end{align}
We denote matrix $\boldsymbol{\Theta}=\LLambda^{(j)}\mathbf{E}^2\LLambda_{i,N}-\LLambda^{(i)}\mathbf{E}^2\LLambda^{(i)}\LLambda^{(j)}$. Two terms of this matrix are  continuous functions of  of KMS matrices (see Appendix~\ref{apx:KMS}),  which means that asymptotically, they share the same eigenspace. For us it means that eigenvalue distribution of matrix $\boldsymbol{\Theta}$ is concentrated at 0, with a finite number (zero measure) outliers. Since the operator norm of $\boldsymbol{\Theta}$ is bounded by some constant $C$ that do not depend on $N$, the outliers also will be bounded by some general  constant $C$.  With singular value decomposition we localize the bigger singular values  $\boldsymbol{\Theta}=\boldsymbol{\Theta}_{\epsilon}+\sigma_1\mathbf{u}_1\mathbf{v}_1^\intercal+\ldots$, the number of such eigenvalues is fixed and do not grow with $N$, so we focus only on one of them, the presence of others is taken into account by the general constant $C$. Here, the operator norm  $\|\boldsymbol{\Theta}_{\epsilon}\|=o(1)$. Then we have
\begin{align*}
     &I\leq\frac{C}{(Np)^2}\ex\langle|\bbeta^{{(j)}\intercal}\G^{(j)\intercal}\mathbf{u}\mathbf{v}^\intercal\G^{(j)}(\bbeta_0^{(j)}-\bbeta^{(j)})|^2\rangle+o(1)\\
     &\leq \frac{C}{(Np)^2}\Big(\ex\langle|\bbeta^{{(j)}\intercal}\G^{(j)\intercal}\mathbf{u}|^4\rangle\ex\langle|\mathbf{v}^\intercal\G^{(j)}(\bbeta_0^{(j)}-\bbeta^{(j)})|^4\rangle\Big)^{1/2}+o(1)
     \leq \frac{C}{Np}\Big(\ex|\bbeta_0^{{(j)}\intercal}\G^{(j)\intercal}\mathbf{u}|^4\Big)^{1/2}+o(1)
\end{align*}
For obtain the last inequality we bounded the second expectation  by $(Np)^2$ and applied Nishimori for the first expectation. Now we can present vector $\mathbf{u}$ as $\mathbf{U}\mathbf{e}_1$, where matrix $\mathbf{U}$ is unitary and  vector $\mathbf{e}_1$ is an $N$~- dimensional standard basis  vector with  the first element equal to $1$ and others equal to $0$. Since Gaussian matrix is right-rotationally invariant we obtain
\begin{align}
    I\leq \frac{C}{Np}\Big(\ex|\bbeta_0^{{(j)}\intercal}\G^{(j)\intercal}\mathbf{e}_1|^4\Big)^{1/2}+o(1)
    =\frac{C}{Np}\Big(\ex|\bbeta_0^{{(j)}\intercal}\G^{(j)}_1|^4\Big)^{1/2}+o(1)
\end{align}
Conditionally on $\bbeta_0$, the expression $\bbeta_0^{{(j)}\intercal}\G^{(j)}_1$ is a Gaussian variable with zero mean and variance $\|\bbeta_0^{(j)}\|^2$, then its fourth moment is $3\|\bbeta_0^{(j)}\|^4=p^2O(1)$. This finally gives us the desired bound:
\begin{align}
    I\leq \frac{C}{Np}\Big(p^2\Big)^{1/2}+o(1)=o(1).
\end{align}
For brevity we write  $\LLambda^{(i)}\mathbf{E}^2\LLambda^{(i)}=:\bB^\intercal\bB$.
Define $\tPhi=\frac{\bB\Phi}{\sqrt{p}}$, so that  it is enough to show that $\q:=\frac{1}{N}(\boldsymbol{\tPhi}\bbeta)^\intercal\boldsymbol{\tPhi}(\bbeta_0-\bbeta)$ concentrates around its expected value. The proof now follows in the same vein as in \cite{barbier2020mutual}. One defines $\M$ as

$$\M:=\frac{1}{p}\sum_{\mu=1}^N\frac{(\tPhi \beta)^2_{\mu}}{2}-(\tPhi \beta)_{\mu}(\tPhi \beta_0)_{\mu}-\frac{\sigma}{2}(\tPhi \beta)_{\mu}Z_{\mu}.$$

The following equations hold:
\begin{align}
&\int_\epsilon^a d\sigma^{-1} \ex [\langle \ (\M-\langle \M \rangle)^2 \rangle]=o(1),\label{m_1}\\
& \int_\epsilon^a d\sigma^{-1} \ex [\ (\ex \langle\M\rangle -\langle \M \rangle)^2 ]=o(1),\label{m_2}\\
&  \int_\epsilon^a d\sigma^{-1} \ex [\langle \ (\q-\ex \langle \q \rangle)^2 \rangle]=o(1).\label{q_1}
\end{align}
We observe that \eqref{m_1} follows same proof as Lemma 9.1 of \cite{barbier2020mutual}. We  define the Hamiltonian with $\tPhi$ replaced by $\Phi$ and show that:
$$\frac{d^2i_p}{(d\sigma^{-1})^2}=-p\ex[\langle\mathcal{M}^2-\langle \mathcal{M}\rangle^2\rangle]+\frac{\sigma^{3/2}}{4p}\sum_{\mu=1}^N\ex[\langle(\tPhi \beta)_\mu Z_{\mu}\rangle]$$
\eqref{m_1} then follows by integrating w.r.t. $\sigma^{-1}$ and then showing that the derivatives of the mutual information is bounded and the second term in the RHS is $O(1)$.

The second equation is equivalent to Lemma 9.2 of \cite{barbier2020mutual} . It uses the concavity of the free energy coupled with the fact that $\mathrm{Var}[a^2]=O(p^{-1})$, where,
\begin{equation}a:=\frac{1}{p}(\|\tPhi \beta_0\|^2-\ex[\|\tPhi \beta_0\|^2]).\label{norm_conc}\end{equation}

Now while the above relation was shown in \cite{barbier2020mutual} using the independence of the entries of $\Phi$, similar condition fails to hold in the present setup. Instead we directly show that variance of $\frac{1}{p}(\|\tPhi \beta_0\|^2)$ is $O(p^{-1})$. Indeed conditionally on $\bbeta_0$, vector $\tPhi\bbeta_0$ is Gaussian with covariance matrix
\begin{align}
    \mathbf{R}=\ex_{\G}(\tPhi\bbeta_0\bbeta_0^\intercal\tPhi^\intercal)=\frac{1}{p}\sum_{j=1}^k\ex_{\G}(\bB\LLambda^{(j)}\G^{(j)}\bbeta_0^{(j)}\bbeta_0^{(j)\intercal}\G^{(j)\intercal}\LLambda^{(j)}\bB^\intercal)=\frac{1}{p}\sum_{j=1}^k\|\bbeta_0^{(j)}\|^2\bB\LLambda_j\bB^\intercal.
\end{align}
For a Gaussian vector $\mathbf{x}\sim \mathcal{N}(0,\mathbf{R})$, we have  $\mathrm{Var}\|\mathbf{x}\|^2=2\Tr \mathbf{R}\mathbf{R}^\intercal$. Since  distribution $P_0$ of $\bbeta_0$ has bounded support and $\|\bB\LLambda^{(j)}\|^2\leq 1/(1-\lambda_j)^2$, we can obtain
\begin{equation}
    \mathrm{Var}(p^{-1}\|\tPhi \beta_0\|^2)\le \frac{C}{p}\|\mathbf{R}\|^2\le\sum_{j=1}^k\frac{C\|\bbeta_0^{(j)}\|^2}{p^2}\le\frac{C}p. \label{var_a_bound}
\end{equation}

Finally we observe that \eqref{q_1} is the equivalent of  
 Lemma 9.3 of \cite{barbier2020mutual}. First one shows that
 \begin{equation}\int_\epsilon^a d\sigma^{-1} \ex [\langle \ (\q \mathcal{M})\rangle-\ex[\langle \q \rangle ]\ex[\langle \mathcal{M} ]\rangle]\leq (\int_\epsilon^a d\sigma^{-1} \ex [\langle \q^2\rangle])^{1/2}(\int_\epsilon^a d\sigma^{-1} \ex [\langle \mathcal{M}-\ex[\mathcal{M}]\rangle]^2)^{1/2}=o(1)\label{q11}\end{equation}
 which follows from the boundedness of $\ex [\langle \q^2\rangle]$ and the combination of \eqref{m_1} and \eqref{m_2}.

 In the next step one shows that
 \begin{equation}\ex [\langle \ (\q \mathcal{M})\rangle-\ex[\langle \q \rangle ]\ex[\langle \mathcal{M} ]\rangle]\geq \frac{1}{2}\ex [\langle \ (\q-\ex \langle \q \rangle)^2 \rangle]+ \frac{C}{p}(\ex[\langle\|\tPhi \beta\|^2-\ex[\langle\|\tPhi \beta\|^2\rangle]\rangle^2)^{1/2}\label{q12}\end{equation}
 By Nishimori identity we have
\begin{equation}\frac{1}{p}(\ex[\langle\|\tPhi \beta\|^2-\ex[\langle\|\tPhi \beta\|^2\rangle]\rangle^2)^{1/2}=\ex[a^2]^{1/2}=O(p^{-1/2})\label{q13}\end{equation}
which follows from \eqref{var_a_bound}. Finally combining \eqref{q11}, \eqref{q12}, and \eqref{q13}, we obtain \eqref{q_1}.
 

\end{proof}

\end{document}